\newif\ifcomments
\definecolor{DarkRed}{rgb}{0.5,0.1,0.1}
\definecolor{ForestGreen}{rgb}{0.1333,0.5451,0.1333}
\definecolor{Red}{rgb}{1,0,0}
\definecolor{ForestGreen}{rgb}{0.1333,0.5451,0.1333}
\crefname{property}{property}{Property}
\crefname{equation}{eq}{Eq}
\newtheorem{thm}{Theorem}[section]
\newtheorem{theorem}[thm]{Theorem}
\newtheorem{cor}[thm]{Corollary}
\newtheorem{prop}[thm]{Proposition}
\newtheorem{conjecture}[thm]{Conjecture}
\newtheorem{lem}[thm]{Lemma}
\newtheorem{lemma}[thm]{Lemma}
\newtheorem{definition}[thm]{Definition}
\newtheorem{obs}[thm]{Observation}
\newtheorem{claim}[thm]{Claim}
\newtheorem{problem}[thm]{Problem}
\newtheorem{remark}[thm]{Remark}
\newtheorem*{solution*}{Solution}
\DeclareMathOperator*{\Prob}{\ensuremath{\textnormal{Pr}}}
\renewcommand{\Pr}{\Prob}
\newcommand{\poly}{\mbox{\rm poly}}
\newcommand{\eps}{\varepsilon}
\newcommand{\MST}{\mathrm{MST}}
\newcommand{\nmp}{\textsc{RecMincut}\xspace}
\newcommand{\lp}{\textsc{LocalPrim}\xspace}
\newcommand{\blp}{\textsc{LocalKCut}\xspace}
\newcommand{\mcut}{\textsc{Mincut}\xspace}
\newcommand{\mcp}{\textsc{MaximalPartition}\xspace}
\newcommand{\kcp}{\textsc{kCutPartition}\xspace}
\newcommand{\ins}{\textsc{Insert}\xspace}
\newcommand{\del}{\textsc{Delete}\xspace}
\newcommand{\nextedge}{\textsc{NextEdge}\xspace}
\newcommand{\localmst}{\textsc{LocalMST}\xspace}
\newcommand{\minextset}{\textsc{LocalKCut}\xspace}
\newcommand{\cP}{\mathcal{P}}
\newcommand{\Otil}{\tilde{O}}
\newcommand{\vol}{\mathrm{vol}}
\global\long\def\mincut{\mathrm{mincut}}
\global\long\def\Ohat{\widehat{O}}
\global\long\def\poly{\mathrm{poly}}
\global\long\def\polylog{\mathrm{polylog}}
  \DeclareFontShape{OT1}{cmr}{m}{scit}{<->ssub*cmr/m/sc}{}%
  \DeclareFontShape{OT1}{cmr}{bx}{sc}{<->ssub*cmr/m/n}{}
\def\thatchaphol#1{\marginpar{$\leftarrow$\fbox{TS}}\footnote{$\Rightarrow$~{\sf\textcolor{purple}{#1 --Thatchaphol}}}}
\def\chaitanya#1{\marginpar{$\leftarrow$\fbox{CN}}\footnote{$\Rightarrow$~{\sf\textcolor{blue}{#1 --Chaitanya}}}}
\newcommand{\tnote}[1]{\colnote{red}{#1--Ohad}{TS}}
\newcommand{\ynote}[1]{\colnote{olive}{#1--Amir}{YL}}
\newcommand{\tnote}[1]{}
\newcommand{\ynote}[1]{}
\newcommand{\thatchaphol}[1]{}
\newcommand{\chaitanya}[1]{}
\title{Maximal $k$-Edge-Connected Subgraphs in Weighted Graphs\\via Local Random
Contraction}
\author{Chaitanya Nalam\thanks{University of Michigan, Ann Arbor. Email: {\tt nalamsai@umich.edu}} \and Thatchaphol Saranurak\thanks{University of Michigan, Ann Arbor. Email: {\tt thsa@umich.edu}}}
\begin{document}

\maketitle

\begin{abstract}
The \emph{maximal $k$-edge-connected subgraphs} problem is a classical graph clustering problem studied since the 70's. Surprisingly, no non-trivial technique for this problem
in weighted graphs is known: a very straightforward recursive-mincut algorithm with $\Omega(mn)$ time has remained the fastest algorithm until now.
All previous progress gives a speed-up only when the graph is unweighted, and $k$ is small enough (e.g.~Henzinger~et~al.~(ICALP'15), Chechik~et~al.~(SODA'17), and Forster~et~al.~(SODA'20)).

We give the first algorithm that breaks through the long-standing
$\Otil(mn)$-time barrier in \emph{weighted undirected} graphs. More specifically, we show a maximal $k$-edge-connected subgraphs algorithm
that takes only $\tilde{O}(m\cdot\min\{m^{3/4},n^{4/5}\})$ time. As an immediate application, we
can $(1+\epsilon)$-approximate the \emph{strength} of all edges in
undirected graphs in the same running time.

Our key technique is the first local cut algorithm with \emph{exact}
cut-value guarantees whose running time depends only on the output
size. All previous local cut algorithms either have running time depending
on the cut value of the output, which can be arbitrarily slow in weighted graphs or have approximate cut guarantees.
\end{abstract}

\newpage
\tableofcontents
\newpage

\section{Introduction}

We revisit a natural graph clustering problem of efficiently computing
\emph{maximal $k$-edge-connected subgraphs} that has been studied since
the 70's \cite{tarjan1972depth}. Given a (directed) graph $G=(V,E)$,
we say that $G$ is \emph{$k$-edge-connected} if all the edge cuts have
value at least $k$ and, similarly, $G$ is \emph{$k$-vertex-connected
} if all vertex cuts have value at least $k$. The \emph{maximal $k$-edge-connected
subgraphs} of $G$ contains all subsets $\{V_{1},\dots,V_{z}\}$ of
$V$ such that, for each $i$, the induced subgraph $G[V_{i}]$ is $k$-edge-connected
and there is no strict superset $V'_{i}\supset V_{i}$ where $G[V'_{i}]$
is $k$-edge-connected. The \emph{maximal $k$-vertex-connected subgraphs}
of $G$ is defined analogously. 

It turns out that all the variants of these problems (whether the graph is weighted/unweighted or directed/undirected) admit a very
simple \emph{recursive-mincut} algorithm: For the $k$-edge-connectivity
version, compute an edge cut $(A, B)$ of value less than $k$
by calling a global minimum cut subroutine and return $\{V\}$ if
no such cut exists. Otherwise, recurse on both $G[A]$ and $G[B]$
and return the union of answers of the two recursions. For the $k$-vertex-connectivity
version, similarly compute a vertex cut $(A, S, B)$ of value less than
$k$ (where $S$ separates $A$ from $B$, i.e., $E(A,B)=\emptyset$).
If such cut exists, recurse on both $G[A\cup S]$ and $G[B\cup S]$.
The worst-case running time of this algorithm on a graph with $n$
vertices and $m$ edges is $\Theta(n\cdot T_{\mincut})=\Omega(mn)$
where $T_{\mincut}=\Omega(m)$ is the time for computing minimum cuts.\footnote{Karger showed that \cite{karger2000minimum} it takes $T_{\mincut}=\Otil(m)$
to compute minimum edge cuts in undirected graphs, but the start-of-the-art
for other variants are worse. For minimum edge cuts in directed graphs,
$T_{\mincut}=\Ohat(m\cdot\min\{n/m^{1/3},\sqrt{n}\})$ \cite{cen2022minimum}
and $T_{\mincut}=\Otil(mk)$ if the graph is unweighted \cite{gabow1995matroid}.
For minimum vertex cuts, $T_{\mincut}=\Otil(mn)$ \cite{henzinger2000computing}
in both undirected and directed graphs and, if the graph is unweighted,
we have $T_{\mincut}=\Ohat(m)$ for undirected graphs \cite{li2021vertex}
and $T_{\mincut}=\min\{\Ohat(n^{2}),\Otil(mk^{2})\}$ for directed
graphs \cite{li2021vertex,forster2020computing}.}\footnote{Throughout the paper, we use $\Otil(\cdot)$ to hide $\polylog(n)$ factor and use $\Ohat(\cdot)$ to hide $n^{o(1)}$ factor.}

Surprisingly, the above straightforward algorithm has remained the
fastest algorithm! Although the problem has been studied
since the 70's, all known algorithms give a speed-up only when the
graph is unweighted and $k$ is small enough. The classical depth-first-search
algorithm \cite{tarjan1972depth} solves both maximal $2$-edge-connected
and $2$-vertex-connected subgraphs problems in undirected unweighted
graphs in $O(m)$ time. There was an effort to generalize this result
to directed graphs \cite{erusalimskii1980bijoin,makino1988algorithm,jaberi2016computing},
but the algorithms still take $\Omega(mn)$ time. Henzinger~et~al.~\cite{henzinger2015finding}
later obtained $\Otil(n^{2})$-time algorithms in directed graphs
when $k=2$. Then, Chechik~et~al.~\cite{chechik2017faster} gave the first
algorithm that breaks the $O(mn)$ bound for any constant $k$. 
Their algorithm takes $\Otil(m\sqrt{m}k^{O(k)})$ time for all variants of the problems. They also gave an improved bound of $\Otil(m\sqrt{n}k^{O(k)})$ for computing maximal $k$-edge-connected subgraphs in undirected graphs. Both bounds were then improved by  Forster~et~al.~\cite{forster2020computing} to 
$\Otil(m\sqrt{m}k^{3/2})$ and $\Otil(mk+n^{3/2}k^{3})$, respectively. Nonetheless, in weighted graphs where $k$ can be arbitrarily large or even in unweighted
graphs when $k\ge m^{1/3}/n^{1/6}$, the simple recursive-mincut algorithm remains the fastest algorithm until now.

\paragraph{Our Results.}

We give the first algorithm that breaks through the long-standing
$\Otil(mn)$-time barrier of the recursive-mincut algorithm in \emph{weighted}
graphs, in the case of maximal \emph{$k$-edge-connected} subgraphs
in \emph{undirected} graphs, which is the variant of the problem that
receives attention in practice and several heuristics were developed
\cite{zhou2012finding,akiba2013linear}. Formally, our main result
is as follows:
\begin{thm}
\label{thm:main}There is a randomized algorithm that, given a weighted
undirected graph $G=(V,E,w)$ where $w\in\mathbb{Z}_{\ge0}^{E}$ with
$n$ vertices and $m$ edges and any parameter $k>0$, computes the
maximal $k$-edge-connected subgraphs of $G$ w.h.p.~in $O(m\cdot\min\{m^{3/4}\log^{3.75}n,n^{4/5}\log^{3.6}n\})$
time.
\end{thm}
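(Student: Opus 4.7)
The plan is to replace most of the expensive global mincut calls in the naive recursive-mincut algorithm by a \emph{local} cut subroutine whose running time depends only on the volume of the output side, so that an $\Otil(m)$-time global mincut is invoked only when it is guaranteed to produce a balanced split. Setting a volume threshold $\nu$ and balancing the two regimes then yields the claimed running time.

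\textbf{Step 1 (local subroutine).} The main new ingredient is a randomized algorithm $\blp(H,k,\nu)$ that, given a graph $H$, a threshold $k$, and a volume budget $\nu$, either returns a set $S$ with $w(E_H(S,V(H)\setminus S))<k$ and $\vol_H(S)\le \nu$, or certifies w.h.p.~that no such ``small-side'' low-weight cut exists. The target running time is $\Otil(\nu)$, crucially independent of both $k$ and $|E(H)|$. I would build this by running a local variant of Karger's random edge-contraction from a seed vertex: at each step, sample an edge incident to the currently contracted blob with probability proportional to its weight, contract it, and abort as soon as the blob's volume exceeds $\nu$. A Karger/Benczúr-style analysis shows that any genuine cut $(S,V\setminus S)$ of weight $<k$ with $\vol(S)\le \nu$ survives with probability at least $1/\poly(\nu)$, independently of $k$, since the total weight of crossing edges is small compared with the ``internal strength'' of $S$; the success probability is then amplified by $\polylog(n)$ repetitions. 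To detect a small-side cut \emph{anywhere} in $H$, the procedure is run from an appropriate set of random seeds sampled according to weighted degree.

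\textbf{Step 2 (recursive framework and analysis).} The outer algorithm is the natural one: on the current subgraph $H$, first call $\blp(H,k,\nu)$; if a small-side cut is returned, recurse on both sides; otherwise compute an exact global minimum cut of $H$ in $\Otil(|E(H)|)$ time via Karger's algorithm, declare $H$ a maximal $k$-edge-connected subgraph if the cut has value $\ge k$, and otherwise recurse on both sides. For the analysis, each successful $\blp$ call costs $\Otil(\nu)$ and produces a new piece in the final partition, so a charging argument over the recursion tree (each edge participates in at most $O(\log n)$ peelings) bounds the total local work by roughly $\Otil(m\cdot\nu)$. When a global mincut is invoked and returns a cut, both sides must have volume $>\nu$, so the balanced-split internal nodes number at most $O(m/\nu)$, yielding total global cost $\Otil(m^2/\nu)$. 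Balancing with $\nu \approx m^{3/4}$ gives $\Otil(m^{7/4})$; repeating the same accounting with \emph{vertex count} in place of edge volume and setting the threshold to $n^{4/5}$ gives the second bound $\Otil(m\cdot n^{4/5})$.

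\textbf{Main obstacle.} The technical heart is Step 1. Previous local cut algorithms (Henzinger et al., Chechik et al., Forster et al.) lose a factor polynomial in $k$, which is prohibitive when edge weights are large; approximate local cut algorithms in the Spielman--Teng style do not give the exact cut value required to decide $k$-edge-connectivity. The main difficulty I anticipate is simultaneously (i) keeping the random contraction truly \emph{local} so that the running time scales with $\nu$ rather than with $|E(H)|$, and (ii) giving an \emph{exact} cut-value guarantee with success probability $1/\poly(\nu)$ that is independent of $k$. This will likely require a weight-aware edge-sampling rule, a volume-based stopping criterion, and a ``small cut survives contraction'' argument à la Karger that is carefully localized so that the contracted blob's growth rate can be controlled.
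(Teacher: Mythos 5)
Your high-level framework — a local cut subroutine for unbalanced cuts plus a global mincut fallback, balanced via a volume threshold — is the right shape and matches the paper's strategy, and you correctly identify Step~1 as the technical heart. However, there are three concrete gaps in the way you specify Step~1, and they propagate into arithmetic that does not actually hit the claimed bound.

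First, the guarantee you ask of $\blp$ is too strong. You want it to certify w.h.p.\ that \emph{no} cut of weight $<k$ and volume $\le\nu$ exists around the seed. The paper shows this exact decision problem is $W[1]$-hard parameterized by the side size, and under OMv cannot be solved in $O(\sigma^{2-\Omega(1)})$ time given only adjacency-list/matrix access. What the paper's subroutine actually certifies is the weaker statement that no $(x,\nu,\sigma,k)$-\emph{extreme} set exists, and a separate structural lemma (that if no such extreme set exists then every sub-$k$ cut is $\nu$-balanced) shows this weaker certificate is enough to justify invoking the global mincut. Your proposal is silent on why ruling out extreme sets alone suffices, and this is exactly where the ``exact value, $k$-free running time'' obstacle you flag is resolved.

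Second, the $\Otil(\nu)$ running time for the local subroutine is an overclaim, and it is internally inconsistent as stated: you write that the survival probability is ``$1/\poly(\nu)$'' yet amplify it with only ``$\polylog(n)$ repetitions.'' Karger-style contraction localized to a set on $\sigma$ vertices preserves that set with probability $\Omega(1/\sigma^2)$, so one needs $\Theta(\sigma^2\log n)$ independent trials, each costing $\Otil(\nu)$ via a local Prim-style growth. Hence the subroutine costs $\Otil(\nu\sigma^2)=\Otil(\nu^3)$ (using $\sigma\le\nu$), not $\Otil(\nu)$. With the correct cost, the local work is $\Otil(m\nu^3)$ and the global work is $\Otil(m^2/\nu)$; these balance at $\nu\approx m^{1/4}$ giving $\Otil(m^{7/4})$. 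With your claimed $\Otil(\nu)$ cost the two terms $m\nu$ and $m^2/\nu$ would balance at $\nu\approx m^{1/2}$ giving $\Otil(m^{3/2})$, which neither matches your stated choice $\nu\approx m^{3/4}$ nor the target bound $m^{7/4}$.

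Third, your seed-selection scheme (``random seeds sampled according to weighted degree'') does not clearly find every small extreme set. An extreme set can have tiny weighted volume and would be sampled with probability that is too small to union-bound over. The paper instead maintains an explicit candidate list initialized to $V$ and augmented with the endpoints of every removed cut edge, and proves (via a laminarity-style argument) that any newly created extreme set must contain such an endpoint; this bounds the total number of local calls by $O(m)$ deterministically, which is what the charging argument actually needs.

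The same three issues carry over to the $n^{4/5}$ bound: there the paper parameterizes by the number of vertices $\sigma$ rather than volume, uses a second local algorithm running in $\Otil(\sigma^4)$ after $\Otil(m\sigma^2)$ preprocessing (a different data structure, not just swapping which quantity you threshold on), and re-balances $m\sigma^4$ against $mn/\sigma$ to get $\sigma\approx n^{1/5}$. Simply replaying your Step~2 accounting with ``vertex count in place of volume'' does not produce these exponents.
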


It is known that the solution $\{V_{1},\dots,V_{z}\}$, the maximal
$k$-edge-connected subgraphs of $G$, is unique and forms a partition
of $V$. Moreover, it precisely determines whether an edge has \emph{strength}
at least $k$: each edge $(u,v)\in E$ has strength at least $k$
iff both $u$ and $v$ are inside the same subset $V_{i}\in\{V_{1},\dots,V_{z}\}$.
The edge strength introduced by Benczur and Karger \cite{benczur2002randomized}
is a central notion for graph sparsification and its generalization
\cite{fung2019general,chen2020near,cen2021sparsification}. \cite{benczur2002randomized} gave a near-linear algorithm that underestimates the edge strength. 
Although their estimates are sufficient for graph sparsification \cite{benczur2002randomized,chekuri2018minimum,fung2019general}, these estimates do not give an approximation for edge-strength of \emph{every} edge; the estimate of some edge can deviate from its strength by even a $\Omega(n^{1/3})$ factor. See \Cref{counterexample} for the discussion.

The fastest known algorithm for computing exact edge strengths is again the recursive-mincut algorithm which requires
$\Otil(mn)$ time. No faster algorithm is known, even when a large approximation ratio is allowed. In contrast, \Cref{thm:main} 
improves the state-of-the-art
for approximating edge strength by giving an $(1+\eps)$-approximation.

\begin{cor}\label{cor:apprxedgstn}
There is a randomized algorithm that, given a weighted undirected
graph, can w.h.p.~$(1+\epsilon)$-approximates the strength of all
edges in $\Otil(m\cdot\min\{m^{3/4},n^{4/5}\}/\epsilon)$ time.
\end{cor}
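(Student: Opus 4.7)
The plan is to reduce $(1+\epsilon)$-approximating all edge strengths to running the algorithm of \Cref{thm:main} at a small number of geometrically spaced values of $k$. First I would set $k_i := \lceil (1+\epsilon)^i \rceil$ for $i = 0, 1, \ldots, I$ with $I = O(\log(nW)/\epsilon)$, where $W$ denotes an upper bound on the maximum edge weight. Since any $k$-edge-connected subgraph containing an edge $e=(u,v)$ must have weighted minimum degree at least $k$ at both endpoints, the strength $\sigma(e)$ satisfies $\sigma(e) \le \min(w_G(u), w_G(v)) \le nW$, while every edge of positive integer weight has strength at least $1$. Thus the $k_i$'s cover the entire range of possible strengths.

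Next I would invoke \Cref{thm:main} once for each $k_i$ to obtain the partition $\mathcal{P}_i$ of $V$ into maximal $k_i$-edge-connected subgraphs. Each invocation costs $\Otil(m \cdot \min\{m^{3/4}, n^{4/5}\})$ time, so the total running time is $\Otil(m \cdot \min\{m^{3/4}, n^{4/5}\}/\epsilon)$, absorbing the $\log(nW) = \polylog(n)$ factor into the $\Otil$ under the standard polynomially-bounded-weights convention.

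Finally I would assign an estimate to every edge $e = (u,v)$: let $i^*(e)$ be the largest index $i$ with $u$ and $v$ in the same part of $\mathcal{P}_i$, and output $\widetilde{\sigma}(e) := k_{i^*(e)}$. This is a $(1+\epsilon)$-approximation because $u,v$ sit in a common $k_{i^*(e)}$-edge-connected subgraph, forcing $\sigma(e) \ge k_{i^*(e)}$, while $\mathcal{P}_{i^*(e)+1}$ separating them forces $\sigma(e) < k_{i^*(e)+1} \le (1+\epsilon)\, k_{i^*(e)}$. Identifying $i^*(e)$ for all edges is straightforward: label each vertex with its part ID in every $\mathcal{P}_i$ and then compare labels at the endpoints of each edge, which costs $O(mI)$ and is dominated by the $I+1$ maximal-$k$-ECC calls. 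There is no real obstacle beyond ensuring $\log(nW)$ is polylogarithmic so that it folds into $\Otil$; this holds automatically in the standard word-RAM model with $\poly(n)$-bounded integer weights, and otherwise one can first snap the search range to $[1, nW]$ using the weighted-degree bound above.
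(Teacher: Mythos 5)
Your proposal is correct and takes essentially the same approach as the paper: run the maximal $k$-edge-connected-partition algorithm at $O(\log(nW)/\epsilon)$ geometrically spaced thresholds $k$ and read off each edge's estimate from the largest threshold at which its endpoints share a part. One tiny remark: rounding $k_i = \lceil (1+\epsilon)^i \rceil$ is unnecessary since \Cref{thm:main} accepts any real $k>0$, and the intermediate inequality $k_{i+1} \le (1+\epsilon) k_i$ can actually fail after the ceiling (e.g.\ $\epsilon=1/2$, $k_0=1$, $k_1=2$), though your final bound $\sigma(e) < (1+\epsilon) k_{i^*(e)}$ still holds because strength is an integer, so $\sigma(e) \le k_{i^*(e)+1}-1 < (1+\epsilon)^{i^*(e)+1} \le (1+\epsilon)k_{i^*(e)}$.
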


\paragraph{Main Technical Contribution:  Local Exact Algorithms in Weighted Graphs.}
The critical tool behind \Cref{thm:main} is the first \emph{local cut algorithm}
that has an exact cut-value guarantee and yet is fast in weighted graphs. 

We generally refer to local cut algorithms as the algorithms that,
given a seed vertex $x$, find a cut $S$ containing $x$ with desirable
properties, but only spend time close to the size of $S$ and independent
of the size of the whole graph. For example, an influential line of
work on local cut algorithms for finding sparse cuts (e.g.~\cite{spielman2013local,andersen2006local,orecchia2014flow,henzinger2020local})
found wide-range applications, including solving linear systems \cite{spielman2014nearly}
and dynamic graph algorithms \cite{nanongkai2017dynamic,saranurak2019expander}.
Unfortunately, all these algorithms only have \emph{approximate} guarantees.\footnote{If there is a cut $S$ with conductance $\phi$, these algorithms
find a cut $S'$ with conductance $\phi'$ where $\phi'>\phi$.} There is another class of algorithms initiated in the area of property
testing that checks whether there is a small cut $S$ containing the
seed $x$ whose cut size is at most $k$. These algorithms have running
time $\exp(k)$ \cite{yoshida2010testing,orenstein2011testing,yoshida2012property}
and can be improved to $\poly(k)$ \cite{goldreich1997property,parnas2002testing,forster2020computing,nanongkai2019breaking}.
Although these algorithms found interesting applications beyond property
testing \cite{forster2020computing,chalermsook2021vertex,jiang2022vertex},
they are not useful in weighted graphs because $k$ can be arbitrarily
large. 

To summarize, all local cut algorithms in the literature are either
approximate or run in time proportional to the cut value $k$. Our
key technique is an algorithm that bypasses both these drawbacks.

To state our key tool, we need some notations. From
now, all graphs $G=(V,E,w)$ we consider are undirected. For any edge
set $F\subseteq E$, $w(F)=\sum_{e\in F}w(e)$ is the total weight
of $F$. For any set $X\subset V$, we write $\overline{X}=V\setminus X$.

Let $\delta(X)=E(X,\overline{X})$ denote the \emph{cut set} of $X$
and $w(\delta(X))$ is denote the \emph{cut value} of $X$. The (unweighted)
\emph{volume} of $X$ denoted by $\vol(X)=|E(X,V)|$ counts all edges
incident to $X$. We say that $X$ is an \emph{extreme set} if, for
every strict subset $Y\subset X$, $w(\delta(Y))>w(\delta(X))$. Equivalently,
$X$ is the unique minimum cut in $G/\overline{X}$, the graph after
contracting $\overline{X}$ into a single vertex.

\begin{definition}
For any vertex $x\in V$, we say $X$ is a \emph{$(x,\nu,\sigma,k)$-set} if $x\in X$, $\vol(X)<\nu$, $|X|<\sigma$, and $w(\delta(X))<k$. If $X$ is also an extreme set, then $X$ is called an \emph{$(x,\nu,\sigma,k)$-extreme set}.
\end{definition}

Intuitively, an $(x,\nu,\sigma,k)$-set is a ``small local cut''
containing the \emph{seed} vertex $x$ that we want to find. From now on we use $\nu$ to represent the volume of the small local cut and $\sigma$ to represent the number of vertices on the smaller side of the local cut. 
The key problem we consider is the following:
\begin{restatable}[\minextset]{problem}{localkcutprob}
\label{prob:minextset}
Given a graph $G=(V,E,w)$, a vertex $x$, and parameters $\nu,\sigma$ and $k$, 
either find an $(x,\nu,\sigma,k)$-set or return $\bot$ indicating that no $(x,\nu,\sigma,k)$-extreme set exists.
\end{restatable} 
Note that when there is no $(x,\nu,\sigma,k)$-extreme set but there exists an $(x,\nu,\sigma,k)$-set, then we allow returning either $\bot$ or any $(x,\nu,\sigma,k)$-set.
Now, our key technical contribution is as follows:
\begin{thm}
\label{thm:local main}There are randomized algorithms with the following
guarantees with high probability:
\begin{enumerate}
\item \label{enu:local vol}Given access to the adjacency lists $G$, \Cref{prob:minextset}
can be solved in time $O(\nu\sigma^{2}\log^{2}n)$.
\item \label{enu:local size}
After $O(m\sigma^{2}\log^{2}n)$ preprocessing
time on $G$, \Cref{prob:minextset} can be solved in time
$O(\sigma^{4}\log n)$.
\end{enumerate}
\end{thm}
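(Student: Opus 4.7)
The plan is a local weighted-Karger random contraction seeded at $x$. Starting from $C_0 = \{x\}$, at each step $t$ I sample an edge from $\delta(C_{t-1})$ with probability proportional to its weight and absorb its far endpoint into $C$, maintaining $w(\delta(C_t))$ and $\vol(C_t)$ incrementally. A trial halts the moment $|C_t| \ge \sigma$ or $\vol(C_t) \ge \nu$, and whenever $w(\delta(C_t)) < k$ during the trial we output $C_t$ as a valid $(x,\nu,\sigma,k)$-set. The full algorithm runs $\Theta(\sigma^2 \log n)$ independent trials and returns $\bot$ if every trial fails.

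The correctness claim I will prove is that whenever an $(x,\nu,\sigma,k)$-extreme set $X$ exists, a single trial satisfies $C_t = X$ for some $t$ with probability at least $\Omega(1/\sigma^2)$. The extremality of $X$ means that $(X,\overline{X})$ is the unique minimum cut of $G/\overline{X}$, a graph on $|X|+1$ vertices, and since $\delta_G(C) = \delta_{G/\overline{X}}(C)$ for every $C \subseteq X$, my local sampling distribution agrees step-by-step with weighted Karger contraction on $G/\overline{X}$ restricted to edges incident to $x$'s component. The cleanest way to formalize this is an exponential coupling $T_e \sim \mathrm{Exp}(w(e))$: both the global Karger order and my local order are determined by sorting edges by $T_e$, so the trial visits $C_t = X$ whenever global Karger on $G/\overline{X}$ ends with the cut $(X,\overline{X})$, an event of probability at least $2/((|X|+1)|X|) = \Omega(1/\sigma^2)$. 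Standard amplification over $\Theta(\sigma^2 \log n)$ trials turns this into a high-probability guarantee.

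For part~\ref{enu:local vol}, I implement each trial directly on the adjacency lists: absorbing $y_t$ costs $O(\deg(y_t)\log n)$ to refresh a weighted-sampling BST over $\delta(C_t)$. Since the trial halts once $\vol(C_t) \ge \nu$, one trial costs $O(\nu\log n)$, and $\Theta(\sigma^2\log n)$ trials give the claimed $O(\nu\sigma^2\log^2 n)$. For part~\ref{enu:local size}, $\vol(X)$ may be much larger than $\sigma$, so I cannot afford to scan an absorbed vertex's edges. The plan is to precompute, for every vertex, independent linear edge-sketches (of $\ell_0$-sampling flavour, suitably weighted) that allow sampling a weighted random edge of $\delta(C) = \bigoplus_{v \in C}\delta(\{v\})$ in $\tilde O(|C|)$ time by sketch addition. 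Building $\Theta(\sigma^2 \log n)$ independent copies of these sketches, one per trial, costs $O(m\sigma^2\log^2 n)$; each of the at most $\sigma$ absorptions per trial then costs $O(\sigma)$, yielding $O(\sigma^2)$ per trial and $O(\sigma^4\log n)$ per query.

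The hard part will be the hitting-probability analysis: the textbook Karger bound uses the fact that every surviving vertex has degree at least the min-cut value, applied globally, whereas my local process only ever inspects $\delta(C_t)$ and performs none of the ``interior'' contractions living inside $\overline{X}$. The exponential coupling above is precisely what lets me transfer the standard $2/\binom{n'}{2}$ Karger guarantee on the quotient $G/\overline{X}$ to the seed-restricted process, and the full extremality of $X$ (not merely $w(\delta(X))<k$) is essential because it is what makes $(X,\overline{X})$ the unique minimum cut of the quotient and prevents spurious sub-clusters $C \subsetneq X$ with smaller boundary from being sampled preferentially.
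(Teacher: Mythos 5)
Your high-level plan is the paper's plan: localize Karger's contraction to the quotient $G/\overline{X}$, get a $\Omega(1/\sigma^2)$ per-trial success bound when $X$ is extreme (because it is then the \emph{unique} min-cut of the quotient), and amplify over $\Theta(\sigma^2\log n)$ trials. For part~\ref{enu:local vol} your BST implementation essentially matches the paper's. But there are two genuine gaps. First, your \emph{algorithm} resamples a weighted-random edge of $\delta(C_t)$ afresh at each step, whereas your \emph{analysis} is phrased as fixing $T_e\sim\mathrm{Exp}(w(e))$ up front and taking the min-$T_e$ cut edge (local Prim). These two processes are not the same: after the first step, the surviving cut edges are conditioned to have $T_e > T_{e_1}$, while the edges newly exposed by absorbing $y_1$ carry fresh unconditioned $T_e$, so the conditional law of the min-$T_e$ cut edge is \emph{not} weighted-proportional to $w$. (The memoryless trick that makes this work for global Karger relies on \emph{every} remaining edge having been compared against the previous minimum, which is false for locally-grown cuts.) The paper resolves this by committing to the rank function $r^i$ at the start of each trial and running a deterministic local Prim with respect to $r^i$; your algorithm should do the same and find the min-rank cut edge, not draw a fresh weighted sample.

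Second, once you commit to fixed ranks, the sketch-based plan for part~\ref{enu:local size} does not implement the right primitive. An $\ell_0$-style linear sketch of $\delta(C)=\bigoplus_{v\in C}\delta(\{v\})$ returns a (roughly uniform or weight-biased) random support element, not the edge minimizing a fixed external key $T_e$; there is no standard XOR-cancelling sketch that does argmin over a fixed random key assignment. Even setting that aside, the sketch sharing across the up-to-$\sigma$ adaptive queries within a trial is unsound without fresh randomness per query, and sketches do not let you maintain the exact value $w(\delta(C_t))$ (you must detect which edges become internal when $v$ is absorbed to subtract their weight). The paper sidesteps all three problems with a much more elementary data structure: for each of the $\Theta(\sigma^2\log n)$ rank functions, it stores every vertex's adjacency list sorted by rank and an adjacency matrix for $O(1)$ edge-existence tests. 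Then the min-rank cut edge at step $t$ is $\min_{u\in C_t}\min\mathcal{L}[u]$ in $O(|C_t|)$ time, internal edges are detected via the matrix and deleted in $O(1)$ each ($O(|C_t|)$ of them per absorption), and $w(\delta(C_t))$ is maintained exactly; the $O(\sigma^2)$ edits per trial are journaled and undone. That gives $O(\sigma^2)$ per trial and $O(\sigma^4\log n)$ per query after $O(m\sigma^2\log^2 n)$ preprocessing. I would recommend replacing the sketch idea with this sorted-adjacency-list approach.
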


The most important guarantee of \Cref{thm:local main} is that its
running time is independent from $k$. \Cref{thm:local main} has an
exact cut-size guarantee in the sense that there is no gap in $k$:
it either returns a $(x,\nu,\sigma,k)$-\textbf{set} or reports that no $(x,\nu,\sigma,k)$-extreme set exists. It is possible to strengthen the guarantee of \Cref{thm:local main} to ``either return an $(x,\nu,\sigma,k)$-\textbf{extreme set} or report that no $(x,\nu,\sigma,k)$-extreme set exists''. We show such an algorithm in \Cref{sec:minextset}. However, this algorithm is slower and so we will not use it for proving \Cref{thm:main}. Hence, we defer it to the appendix. 

We complement this algorithm with conditional lower bounds:

\begin{thm}\label{thm:lb main}
Given the parameters $x,\nu,\sigma,k$, checking whether an $(x,\nu,\sigma,k)$-set
exists is W[1]-hard even when parameterized by $\sigma$. Moreover, assuming
the OMv conjecture \cite{henzinger2015unifying},
there is no algorithm that, given access to the adjacency lists and matrix of $G$ (with no further preprocessing), solves  \Cref{prob:minextset} in $O(\sigma^{2-\Omega(1)})$ time.
\end{thm}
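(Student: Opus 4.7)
The theorem has two independent claims; I would prove each by a separate reduction.

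For the W[1]-hardness claim, my plan is to reduce from \textsc{Multicolored $k'$-Clique}, which is W[1]-hard parameterized by $k'$. Given a $k'$-partite graph $H$ with color classes $V_1,\dots,V_{k'}$ and $N=|V(H)|$, I would construct $G$ on vertex set $V(H)\cup\{x\}$ augmented by small "color-selector" gadgets, place a unit-weight edge $(x,v)$ for every $v\in V(H)$, and lift each edge of $H$ to weight $W$. The selector gadgets are rigged so that any feasible $(x,\nu,\sigma,k)$-set $X\ni x$ must have $S:=X\setminus\{x\}$ consisting of exactly one vertex per color class. With $\sigma=k'+O(1)$ and suitable $\nu,k$, the identity $w(\delta(X))=(N-|S|)+W\cdot|E_H(S,V(H)\setminus S)|$ (plus gadget contributions) shows that among color-balanced $k'$-subsets the cut value is minimized precisely when $S$ induces all $\binom{k'}{2}$ possible edges, i.e., when $S$ is a rainbow clique. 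Hence a $(x,\nu,\sigma,k)$-set exists iff $H$ has a rainbow $k'$-clique, giving W[1]-hardness with parameter $\sigma$. The main obstacle here is calibrating the selector gadgets so that smaller or color-imbalanced $S$ are ruled out by the cut and volume constraints without blowing $\sigma$ up beyond $O(k')$.

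For the conditional lower bound, my plan is to reduce from the \textsc{OuMv} problem: given a fixed Boolean matrix $M\in\{0,1\}^{n\times n}$, the OMv conjecture implies that no algorithm with polynomial preprocessing can answer a sequence of $n$ queries $(u,v)\in\{0,1\}^{2n}$, returning $u^{\top} M v$, in total time $n^{3-\Omega(1)}$. For each query $(u,v)$, I would define an \emph{implicit} graph $G_{M,u,v}$ on $O(n)$ vertices whose adjacency-list and adjacency-matrix oracles can be evaluated in $O(1)$ time directly from $M,u,v$, so no per-query preprocessing is required. The graph has a seed $x$, row vertices $r_1,\dots,r_n$, column vertices $c_1,\dots,c_n$, and a heavily-anchored "target" vertex $t$. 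Unit-weight edges are placed at $(x,r_i)$ iff $u_i=1$, $(r_i,c_j)$ iff $M_{ij}=1$, and $(c_j,t)$ iff $v_j=1$, while heavy "distractor" edges pin inactive rows/columns outside any small cut containing $x$. With $\sigma=\Theta(n)$ and appropriately chosen $\nu,k$, I would show that a feasible $(x,\nu,\sigma,k)$-set exists iff there is a unit-weight path $x\to r_i\to c_j\to t$, i.e., iff $u^{\top} M v = 1$.

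A single $\minextset$ query on $G_{M,u,v}$ in the claimed $O(\sigma^{2-\Omega(1)})=O(n^{2-\Omega(1)})$ time thus decides one OuMv query, and aggregating over $n$ queries gives total time $O(n^{3-\Omega(1)})$, contradicting the OMv conjecture. The hard part will be ensuring the encoding is tight: only triangle-witness sets should yield small cuts. Without careful pinning of $t$ and of inactive rows/columns, spurious low-cut subsets could arise through the bipartite structure and produce false positives; choosing $W$, $\nu$, and $k$ so that the triangle-based set is the \emph{unique} kind of small feasible $(x,\nu,\sigma,k)$-set—and indeed is extreme, so as to align with the gap-region behavior of $\minextset$—is where I expect the bulk of the bookkeeping to lie.
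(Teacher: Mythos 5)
Your two reductions are both genuinely different from the paper's. For the W[1]-hardness, the paper reduces from $t$-\textsc{Clique} on $d$-regular graphs, but the key structural device is to add explicit \emph{edge nodes}: the constructed instance has one node per edge of the host graph, with weight-1 edges to the two endpoints, and a terminal $x$ joined to all vertex nodes by weight-$d$ edges. The edge nodes make the cut calculation completely rigid---each internal edge node included in $X$ lowers the cut by exactly $2$---so the reduction needs neither selector gadgets nor multicoloring, and regularity alone removes the degree-dependence you are worried about calibrating. In your proposal that degree-dependence is the gap: the identity $w(\delta(X))=(N-|S|)+W\cdot|E_H(S,V(H)\setminus S)|$ only cleanly isolates the clique condition if $\sum_{v\in S}\deg_H(v)$ is constant over all choices of $S$, and you neither impose regularity nor show how the unspecified color-selector gadgets recover this. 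Also note that the paper's $\sigma$ is $2+t+\binom{t}{2}$, not $O(t)$; to hit your claimed $\sigma=k'+O(1)$ without anything playing the role of edge nodes you would be reproving something stronger with fewer moving parts, which is where I'd expect the sketch to break down.

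For the OMv part, the paper does not go directly from the matrix-vector product at all. It uses the \emph{independent-set query} hardness of Henzinger~et~al.\ as an intermediary: the host graph $G$ encodes $M$ and is preprocessed, and a query $S$ is handled by (implicitly) adding a heavy-weight cycle $C_S$ on $S$ to get $G_S$. The cycle forces any $(\sigma+1)$-bounded set through $x\in S$ to coincide exactly with $S$ (otherwise it crosses an edge of weight $k+1$), and the cut value of $S$ equals $\vol_G(S)$ minus twice the internal edge weight, so $\delta_{G_S}(S)<k=\vol_G(S)$ iff $S$ is not independent. That makes $S$ either the unique candidate extreme set or demonstrably nonextreme, and the $O(\sigma)$-overhead simulation of $G_S$'s adjacency oracles is immediate. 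Your direct OuMv encoding is substantially more involved: the graph $G_{M,u,v}$ varies with each query, $\sigma=\Theta(n)$ allows feasible sets to include many row/column vertices rather than a single witness triple, and the argument that \emph{only} triangle-witness sets give small cuts is exactly the part you flag as open. Without a mechanism as rigid as the heavy cycle pinning $X$ to equal a designated set, I do not see how to rule out spurious small cuts built from several rows and columns; so as written the OMv half has a genuine gap, not just bookkeeping to fill in. I would recommend pivoting to the independent-set-query intermediary, which is exactly tailored to force the ``cut $=$ query set'' structure you need.
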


The first statement in \Cref{thm:lb main} notes that the notion of extreme sets in \Cref{prob:minextset} is crucial for efficient algorithm; if we just want to check if an $(x,\nu,\sigma,k)$-set exists, then it is unlikely that there exists an algorithm with $f(\sigma)n^{O(1)}$ time where $f$ is an arbitrary function (see the definition of W[1]-hardness from Section 13.3 of \cite{parametrizedCyganFKLMPPS15}).

The second statement of \Cref{thm:lb main} shows that the running time of \Cref{thm:local main} cannot be improved beyond $O(\sigma^{2-o(1)})$ if we only have access to the adjacency lists and matrix. Improving the running time of \Cref{thm:local main} to $\Otil(\sigma^{2})$, or even $\Otil(\sigma)$ with non-trivial pre-processing, would speed up the running time of \Cref{thm:main} to $\Otil(mn^{2/3})$ and $\Otil(mn^{1/2})$, respectively. We leave this as a very interesting open problem.

\paragraph{Key Technique: Local Random Contraction.}
The high-level message behind \Cref{thm:local main} is as follows: the famous \emph{random contraction} technique by \cite{karger1993global} for computing global minimum cut can be localized into local cut algorithms for detecting unbalanced extreme sets in \emph{weighted graphs}.

Applications of the random contraction technique to local cut algorithms appeared earlier in the property testing algorithms by Goldreich and Ron \cite{goldreich1997property} and also in \cite{parnas2002testing}. However, it is unclear a priori that their results extend to weighted graphs. 
This is because in setting of \cite{goldreich1997property,parnas2002testing} they have that (1) the graph is unweighted and has bounded degree, (2) the target cut $X$ they want to find is such that $|\delta(X)|< k$ and $|X| = O(k)$, and (3) their stated running time was $\poly(k)$. So the time bound inherently may depend on $k$. Some parts of the argument in \cite{goldreich1997property} crucially exploit that the graph is unweighted and does not carry on to our weighted setting. 
Our observation is that the ``local random contraction'' technique still works in weighted graphs with the running time bound independent from $k$.

This shows an exciting aspect of the random contraction technique because, to our best knowledge, there is an inherent dependency on $k$ in all other techniques in the current literature for local cut algorithms with exact cut-value guarantees. These techniques include local-search-based algorithms with deterministic branching strategies \cite{yoshida2010testing,orenstein2011testing,chechik2017faster}, local-search-based algorithm with randomized stopping points \cite{forster2020computing}, and even the localized version of the Goldberg-Rao flow algorithm \cite{orecchia2014flow,nanongkai2019breaking}. 

We view \Cref{thm:local main} as the first step toward stronger local
cut algorithms in weighted graphs with exact cut-value guarantees.
Because of the strong guarantee of these algorithms, we expect many
more applications. For example, strengthening \Cref{thm:local main}
to work in directed graphs would immediately generalize our main result,
\Cref{thm:main}, to work in directed graphs.

\paragraph{Related Problems. }

The related problem of computing \emph{$k$-edge-connected components
}of a graph $G=(V,E)$ is to find the unique partition $\{V'_{1},\dots,V'_{z'}\}$
of $V$ such that, for every vertex pair $(s,t)$ in the same part
$V'_{i}$, the $(s,t)$-minimum cut in $G$ (not in $G[V_{i}']$)
is at least $k$. The maximal $k$-edge-connected subgraphs $\{V_{1},\dots,V_{z}\}$
can be different from the $k$-edge-connected components $\{V'_{1},\dots,V'_{z}\}$.
Consider for example an unweighted undirected graph $G'$ where vertices $s$
and $t$ are connected through three parallel paths $(s,u_{1},t)$,
$(s,u_{2},t)$, and $(s,u_{3},t)$. While $s$ and $t$ are in the
same $3$-edge-connected component of $G'$, no two vertices are in
the same maximal $3$-edge-connected subgraphs of $G'$. 

We note that computing $k$-edge-connected components in an undirected graph $G$ can be reduced
to computing all-pairs max flow in $G$, which can be done in $\Otil(n^{2})$
time using the recent algorithm of \cite{abboud2021gomory}. This
algorithm, however, does not solve nor imply anything to our problem of computing maximal $k$-edge-connected subgraphs.

\paragraph{Organization. }
After giving a high-level technical overview in \Cref{sec:overview},
we start by formally describing the recursive mincut algorithm for finding the maximal $k$-edge connected subgraphs in \Cref{sec:prelims} as a warm-up for similar proofs. In \Cref{sec:extsets}, we show a local cut algorithm for \Cref{thm:local main}(\ref{enu:local vol}). We then use this local cut algorithm to obtain the maximal $k$-edge-connected subgraphs in $\Otil(mm^{3/4})$ time in \Cref{sec:recdepth}. This proves the first running time claimed in \Cref{thm:main}.
In \Cref{sec:localalg2}, we give another local cut algorithm stated in \Cref{thm:local main}(\ref{enu:local size}) which is faster than the one in \Cref{sec:extsets} when the volume of the cut is high. Similar to \Cref{sec:recdepth}, in \Cref{sec:paramchange}, we apply this local algorithm to obtain the maximal $k$-edge-connected subgraphs in $\Otil(mn^{4/5})$ time, completing the proof of \Cref{thm:main}.

In \Cref{sec:lb} we show hardness for solving \Cref{prob:minextset}, proving the \Cref{thm:lb main}. In \Cref{sec:apprxes}, we show an application of our main result to approximating edge strength, i.e.,   \Cref{cor:apprxedgstn}.

\section{Overview of techniques}
\label{sec:overview}
We say that $X\subset V$ is a $(x,\nu,k)$-set if $X$ is a $(x,\nu,\sigma,k)$-set except that we do not require that $|X|< \sigma$. Similarly, 
$X$ is a $(x,\sigma,k)$-set if we do not require that $\vol(X)<\nu$.

\paragraph{The Framework.}
Our algorithm is based on the framework of  \cite{chechik2017faster}, so we start by explaining their ideas.
Their first crucial observation is that the recursive-mincut algorithm can take $\Omega(mn)$ time only when the
recursion depth is $\Omega(n)$.
This, in turn, happens when many cuts $S$ of value less than $k$ found are highly unbalanced, say $|S|=O(1)$. The main technical message of \cite{chechik2017faster} is that,
given a sub-linear time algorithm for finding such unbalanced cuts
$S$, one can improve the $O(mn)$ running time.

More precisely, their reduction shows that, given a vertex $x$, if we can check whether there exists
a $(x,\nu,k)$-set in $O(\nu)$ time, then the maximal $k$-edge-connected
subgraphs can be computed in $O(m\sqrt{m})$ time. In fact, their
reduction works even if we slightly relax the algorithm guarantee
to ``either report that $(x,\nu,k)$-sets do not exists or return
a $(x,O(\nu k),k)$-set''. Chechik~et~al.~\cite{chechik2017faster}
solves this relaxed problem in $O(\nu k^{O(k)})$ time and Forster~et~al.~\cite{forster2020computing}
improved it to $\Otil(\nu k^{2})$ time, which implies their final
algorithms with running time of $\Otil(m^{3/2}k^{O(k)})$ and $\Otil(m^{3/2}k^{3/2})$,
respectively.\footnote{Their improved running time in undirected graphs used techniques and
arguments that are specific to unweighted graphs. So these techniques
do not work in our setting.} Unfortunately, these algorithms are too slow when $k$ is large. 

We first observe that the reduction framework \cite{chechik2017faster}
can be adjusted so that it is compatible with our local algorithm
of \Cref{thm:local main} whose running time is independent from $k$.
More precisely, although the relaxed guarantee of ``either report
that $(x,\nu,k)$-sets do not exists or return a $(x,O(\nu k),k)$-set''
is not compatible with \Cref{prob:minextset}, we show that local algorithms
for \Cref{prob:minextset} can still be applied to break the $O(mn)$ barrier. The basic idea why algorithms for \Cref{prob:minextset} is sufficient follows from the fact that if a graph is not $k$-edge-connected, then either there must exist an $(x,\nu,k)$-extreme set (not just an $(x,\nu,k)$-set) for some vertex $x$, otherwise all cuts of size less than $k$ has volume at least $\nu$, i.e., it is quite balanced, which is a good case for us (see \Cref{lem:nubalanced} for the formal proof).

Since the framework of \cite{chechik2017faster} can be made compatible with local algorithms for \Cref{prob:minextset}, we now simply apply our local algorithms with  running time $\Otil(\nu\sigma^{2})=\Otil(\nu^{3})$ from \Cref{thm:local main}(\ref{enu:local vol}) or $\Otil(\sigma^{4})$
from \Cref{thm:local main}(\ref{enu:local size}) and  obtain our final
algorithm with running time $\Otil(mm^{3/4})$ or $\Otil(mn^{4/5})$,
respectively. This concludes \Cref{thm:main}.

\paragraph{Local Cut Algorithms.}
First, we recall the equivalent view of the random contraction technique based on minimum spanning trees (as observed in Karger's original paper \cite{karger1993global}). Assume for a moment that the input graph $G=(V, E)$ is unweighted. 

Define a random rank function $r:E\rightarrow[0,1]$ where the rank $r(e)$ of each edge $e$ is an independent uniform random number in $[0,1]$. Let $\MST_{r}$ be the minimum spanning tree w.r.t.~rank function $r$. 
\begin{lem}[\cite{karger1993global}]
\label{lem:key}For any minimum cut $(S,\overline{S})$ in $G$, with probability $\Omega(1/n^{2})$, 
\[
r(e')>r(e)
\]
for all $e'\in E(S,\overline{S})$ and $e\in\MST_{r}\setminus E(S,\overline{S})$.\footnote{For readers familiar with Karger's analysis, the ranks of edges correspond to the ordering of edge contraction. This condition is equivalent to saying that after contracting until there are only two vertices left, the unique remaining cut in the contracted graph is exactly the minimum cut $(S,\overline{S})$.}
\end{lem}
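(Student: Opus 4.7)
My plan is to reinterpret the lemma's event as an instance of Karger's classical random-contraction analysis~\cite{karger1993global}. Process the edges of $G$ in increasing rank order: contract each edge whose two endpoints lie in different super-nodes and discard each edge whose endpoints have already been merged. This is Kruskal's algorithm run on the random ordering, and the contracted edges are exactly $\MST_{r}$. Say that the cut $(S,\overline{S})$ \emph{survives} if the first $n-2$ contracted edges are all non-cut edges, so that after those contractions the two remaining super-nodes are exactly $S$ and $\overline{S}$.

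I would first argue that survival implies the lemma's event. In a surviving run, the $n-2$ non-cut contracted edges form spanning trees of $G[S]$ and $G[\overline{S}]$ and constitute all of $\MST_{r}\setminus E(S,\overline{S})$: any non-cut edge processed after step $n-2$ has both endpoints already in a single super-node and is discarded, so cannot belong to $\MST_{r}$. The $(n-1)$-st MST edge is therefore the first cut edge encountered in rank order, and every cut edge has rank at least this value. Hence every non-cut MST edge has strictly smaller rank than every cut edge, which is exactly the event in the lemma.

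It then remains to lower-bound the survival probability by $\Omega(1/n^{2})$, which is Karger's counting argument. Let $k=|E(S,\overline{S})|$; by hypothesis this equals the min-cut value. By exchangeability of the i.i.d.\ uniform ranks, conditioned on the edges processed so far, the next edge contracted is uniform over the remaining edges of the current contracted multigraph. Since this multigraph still has min cut at least $k$, it has minimum degree at least $k$ and thus at least $ik/2$ edges when $i$ super-nodes remain; on the survival branch exactly $k$ of these are cut edges. So the probability that the next contraction destroys the cut is at most $k/(ik/2)=2/i$, and multiplying per-step survival probabilities for $i=n,n-1,\ldots,3$ yields
\[
\Pr[\text{survival}]\;\ge\;\prod_{i=3}^{n}\!\left(1-\tfrac{2}{i}\right)\;=\;\tfrac{2}{n(n-1)}\;=\;\Omega(1/n^{2}).
\]
No step here is deep---this is essentially a repackaging of Karger's original proof in MST language---so I expect the main delicacy to be the identification in the second paragraph between non-cut contracted edges and $\MST_{r}\setminus E(S,\overline{S})$; once that is pinned down cleanly, the rest is the textbook contraction bound.
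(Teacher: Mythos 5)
Your proposal is correct and is exactly the argument the paper intends: the paper cites this lemma directly to Karger~\cite{karger1993global} without re-proving it, and its footnote already spells out the equivalence you establish in your second paragraph (ranks encode the contraction order, and the rank condition is equivalent to the min cut surviving until two super-nodes remain). Your translation of survival into the MST/rank statement and the standard $\prod_{i=3}^{n}(1-2/i)=2/(n(n-1))$ contraction bound are both sound, so this is the same route in slightly more explicit form.
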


Next, we sketch how to prove \Cref{thm:local main}. Suppose that there is a $(x,\nu,\sigma,k)$-extreme set $S$ in $G$. Our goal is to return some $(x,\nu,\sigma,k)$-set. Note that $S$ is the unique minimum cut of the graph $G/\overline{S}$ by definition of an extreme set. By simply applying \Cref{lem:key} to the graph $G/\overline{S}$, which contains at most $\sigma$ vertices, we have the following:

\begin{lem}\label{lem:key local}
With probability $\Omega(1/\sigma^{2})$, 
\begin{equation}
r(e')>r(e)\label{eq:nice rank}
\end{equation}
for all $e'\in E(S,\overline{S})$ and $e\in G[S]\cap\MST_{r}\setminus E(S,\overline{S})$. 
\end{lem}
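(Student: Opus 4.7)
My plan is to derive \Cref{lem:key local} as a direct corollary of \Cref{lem:key} applied to the contracted graph $G' := G/\overline{S}$. The reduction works cleanly because $G'$ has at most $\sigma$ vertices, so the $\Omega(1/n^{2})$ probability guarantee of \Cref{lem:key} translates into the desired $\Omega(1/\sigma^{2})$; and because the extremeness of $S$ in $G$ is precisely the condition needed for $(S, \{v_{\overline{S}}\})$ to be the (unique) minimum cut of $G'$, which is the hypothesis of \Cref{lem:key}.

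First, I will verify the two structural prerequisites. Since $|S| < \sigma$, the contracted graph has $|V(G')| = |S| + 1 \leq \sigma$. For the minimum-cut property, any non-trivial cut of $G'$ can be represented as $(Y, V(G') \setminus Y)$ for some $\emptyset \neq Y \subseteq S$ (taking the side that does not contain $v_{\overline{S}}$); a direct calculation shows the cut value equals $w(\delta_G(Y))$. Combined with the extreme-set hypothesis $w(\delta_G(Y)) > w(\delta_G(S))$ for every $\emptyset \neq Y \subsetneq S$, this implies that $(S, \{v_{\overline{S}}\})$ is the unique minimum cut of $G'$, recovering the equivalence between extreme sets of $G$ and unique minimum cuts in $G/\overline{S}$ already remarked in the paper.

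Second, I will identify the edge sets in $G'$ that correspond to the ``cut'' and ``non-cut MST'' edges in the statement of \Cref{lem:key}. Under contraction, edges of $G[\overline{S}]$ become self-loops and can be discarded, edges of $G[S]$ are retained unchanged, and edges of $E(S, \overline{S})$ become the (possibly parallel) edges incident to $v_{\overline{S}}$. Hence the cut edges of $(S, \{v_{\overline{S}}\})$ in $G'$ are exactly $E(S, \overline{S})$, and the non-cut MST edges of $G'$ are exactly the edges of $\MST_r(G') \cap G[S]$. Since the random rank function $r$ restricts naturally from $E(G)$ to $E(G')$, invoking \Cref{lem:key} on $G'$ yields, with probability $\Omega(1/|V(G')|^{2}) = \Omega(1/\sigma^{2})$, the inequality $r(e') > r(e)$ for all $e' \in E(S, \overline{S})$ and $e \in \MST_r(G') \cap G[S]$. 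This is exactly the conclusion of \Cref{lem:key local} (with $\MST_r$ understood as the MST of the graph $G' = G/\overline{S}$ to which \Cref{lem:key} is being applied).

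I do not foresee a substantial obstacle; the whole argument is just a reduction to a smaller instance. The only point that needs a moment's care is the translation between extremeness of $S$ in $G$ and minimum-cut-ness of $(S, \{v_{\overline{S}}\})$ in $G/\overline{S}$, but the elementary identity $w(\delta_{G'}(Y)) = w(\delta_G(Y))$ for $Y \subseteq S$ makes this immediate.
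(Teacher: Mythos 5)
Your proof is correct and follows the paper's approach exactly: the paper likewise derives this lemma as an immediate corollary of \Cref{lem:key} applied to $G/\overline{S}$, using that $|S|<\sigma$ bounds the number of vertices of the contracted graph and that extremeness of $S$ is equivalent to $(S,\{v_{\overline{S}}\})$ being the unique minimum cut of $G/\overline{S}$. The one point you flag at the end — reading $\MST_r$ as $\MST_r(G/\overline{S})$ rather than $\MST_r(G)$ — is resolvable and worth spelling out rather than leaving as an interpretation choice: on the good event, every edge of $\MST_r(G/\overline{S})\cap G[S]$ has rank strictly below every edge of $E(S,\overline{S})$, and these $|S|-1$ edges span $S$; hence Kruskal run on $G$ connects all of $S$ using only $G[S]$-edges of rank below $\min_{e'\in E(S,\overline{S})}r(e')$ before any edge of $E(S,\overline{S})$ is considered, which forces $\MST_r(G)\cap G[S]=\MST_r(G/\overline{S})\cap G[S]$, so the two readings of $\MST_r$ coincide on the event.
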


We say that the rank function $r$ is \emph{nice} with respect to a set $S$ or \emph{$S$ respects $r$} when this condition holds. We will generate $\Otil(\sigma^{2})$ many independent rank functions so that one of them is nice with high probability by \Cref{lem:key local}.
Now, suppose that $r$ is nice and suppose we run Prim's minimum spanning tree algorithm starting from the vertex $x$ to grow a ``partial'' MST $T_{r}\subseteq\MST_{r}$. We know from \Cref{eq:nice rank} that $T_{r}$ must span $S$ first before crossing the cut $(S,\overline{S})$. At that point $V(T_{r})=S$ is a $(x,\nu,\sigma,k)$-set and we can return $V(T_{r})$. 

Let $X=V(T_{r})$. Our algorithm is to simply return $X$ whenever $X$ is a $(x,\nu,\sigma,k)$-set. We can check this by maintaining $\vol(X)$, $|X|$ and $\delta(X)$. If $r$ is not nice, then $X$ might not ever be a $(x,\nu,\sigma,k)$-set but we will just terminate whenever $X$ is too big, i.e., $\vol(X)\ge\nu$ or $|X|\ge\sigma$.

It is simple to maintain $\vol(X)$, $|X|$ and $\delta(X)$ in $\Otil(\nu)$ time given the access to the adjacency lists of $G$. We can also maintain this information in $\Otil(\sigma^{2})$ time by using some simple pre-processing on $G$ (see \Cref{sec:localalg2} for details). Since there are $\Otil(\sigma^{2})$ many rank functions, the total running time becomes  $\Otil(\nu\sigma^{2})$ and $\Otil(\sigma^{4})$, respectively. This is how we obtain \Cref{thm:local main} when the graph is unweighted.

Lastly, we remove the assumption that the graph is unweighted. We observe that only the correctness of \Cref{lem:key} and \Cref{lem:key local} is based on the fact that the graph is unweighted. 
In weighted graphs, the rank function needed to be defined differently. Given a weighted graph $G = (V, E,w)$, if we treated $G$ as an unweighted multigraph and generated an independent random rank in $[0,1]$ for each edge, then \Cref{lem:key} would work, but this simulation is too slow. What we need is as follows: if an edge $e$ has weight $w(e)$, then we should define the rank $r(e)$ to be precisely the minimum of $w(e)$ independent random numbers in $[0,1]$. We derive this formula explicitly (see \Cref{lem:rankfn}) and show that it can be computed efficiently and also numerically stable (see \Cref{sec:invprobdist}), which is the only place we need that the edge weights are integral. This allows us to work in weighted graphs efficiently. This concludes the sketch of \Cref{thm:local main}.

\paragraph{Conditional Lower Bounds.}
For the first statement of \Cref{thm:lb main}, we show that the \emph{$t$-clique} problem can be reduced to the strengthened version of \Cref{prob:minextset}, i.e., the problem of checking whether there exists a $(x,\nu,\sigma,k)$-set for a given seed $x\in V$ even when $\sigma$ only depends on $t$. See \Cref{sec:8.3} for details. 
The similar reduction was shown in \cite{fomin2013parameterized} but in their reduction $\sigma = \Omega(n)$ is quite big, since they work in unweighted graphs.

Next, we explain the proof of the second statement of \Cref{thm:lb main} based on the OMv conjecture \cite{henzinger2015unifying}. In \cite{henzinger2015unifying} they show that, assuming the conjecture, given a graph $G = (V,E)$ with $n$ vertices and $\poly(n)$ preprocessing time on $G$, then given a query $S \subset V$, there is no algorithm that checks whether $S$ is an independent set in time $n^{2-\Omega(1)}$. 
We will show how to use a subroutine for \Cref{prob:minextset} to check if $S$ is independent. Given $S$, imagine we artificially add a cycle $C_S$ spanning $S$ where every edge $e \in C_S$ has infinite weight. Let $G_S$ be the resulting graph. 
Let $\sigma = |S|+1$ and $k = \vol(S)$. Fix an arbitrary vertex $x \in S$.
If $S$ is not independent, then $S$ is an $(x,\sigma,k)$-extreme set in $G_S$ because $|S|<\sigma$,  $\delta(S)<\vol(S) = k$, and any strict subset $S' \subset S$ has infinite cut value.
On the other hand, if $S$ is independent, then there is no $(x,\sigma,k)$-set in $G_S$. Suppose for contradiction that there is a $(x,\sigma,k)$-set $T$. We have that $T$ cannot cross $S$, otherwise, $T$ has an infinite cut value. Since $x \in T$, so $S\subseteq T$. But $|T| \le \sigma -1 = |S| $, so $T = S$. Now, we have  $\delta(T)=\delta(S)=\vol(S)=k$ because $S$ is independent, which is contradiction.
Lastly, note that we can easily simulate the adjacency list and matrix of $G_S$ given those of $G$ with just an additive $O(\sigma)$ overhead for any query set $S$ which gets subsumed in $O(\sigma^{2-\eps})$ time.
Therefore, given the subroutine for \Cref{prob:minextset} with access to adjacency list and matrix of $G$ with running time $\sigma^{2-\Omega(1)}$, we can then check if $S$ is independent in $n^{2-\Omega(1)}$, refuting the OMv conjecture.

\section{Preliminaries}\label{sec:prelims}

We refer to the paragraph above the statement of \Cref{prob:minextset}
for basic notations. Given a weighted undirected graph $G=(V,E,w)$
where $w\in\mathbb{Z}_{\ge0}^{E}$, $G$ is $k$-edge-connected if global minimum cut value,
$\min_{\emptyset\neq S\subset V}w(\delta(S))\ge k$. We give the proof
of the following simple observation in the \Cref{sec:mcpunique}. 
\begin{prop}
\label{prop:mcpunique}The maximal $k$-edge-connected subgraphs $\{V_{1},\dots,V_{z}\}$
of any graph $G$ are unique and form a partition of $V$.
\end{prop}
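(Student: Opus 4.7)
The plan is to reduce the proposition to a single \emph{union lemma}: if $A,B\subseteq V$ with $A\cap B\ne\emptyset$, and both $G[A]$ and $G[B]$ are $k$-edge-connected, then so is $G[A\cup B]$. Granted this lemma, both claims of the proposition follow almost immediately. For \emph{coverage of $V$}: every singleton $\{v\}$ is vacuously $k$-edge-connected (it has no non-trivial cuts), so every $v$ belongs to at least one maximal $k$-edge-connected vertex set $V_i$. For \emph{disjointness} (which together with coverage gives the partition property and uniqueness of the collection): if some vertex $v$ lay in two distinct maximal sets $V_i$ and $V_j$, then by the union lemma $G[V_i\cup V_j]$ would also be $k$-edge-connected, strictly enlarging $V_i$, contradicting its maximality.

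For the union lemma, I would take an arbitrary nonempty proper subset $X\subset A\cup B$ with complement $Y:=(A\cup B)\setminus X\ne\emptyset$, and show $w\bigl(E_{G[A\cup B]}(X,Y)\bigr)\ge k$ by a short case analysis. Consider the traces $A\cap X$, $A\cap Y$, $B\cap X$, $B\cap Y$. If both $A\cap X$ and $A\cap Y$ are nonempty, then $(A\cap X,\,A\cap Y)$ is a genuine cut of $G[A]$, and since $G[A]$ is $k$-edge-connected, its cut value is at least $k$; every such edge also lies in $E_{G[A\cup B]}(X,Y)$, so we are done. The symmetric argument handles the case when both $B\cap X$ and $B\cap Y$ are nonempty. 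The only remaining case is that $A$ lies entirely on one side of $(X,Y)$ and $B$ lies entirely on one side; but any vertex $v\in A\cap B$ forces $A$ and $B$ to lie on the same side, so $A\cup B$ is contained in that side, contradicting $Y\ne\emptyset$.

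There is essentially no obstacle beyond being careful with the case analysis in the union lemma; I would expect the trickiest point to be remembering that we must work with the edge set of $G[A\cup B]$, not of $G$ itself, so that the inclusion $E_{G[A]}(A\cap X,A\cap Y)\subseteq E_{G[A\cup B]}(X,Y)$ is clearly justified (it is, since $G[A]$ is an induced subgraph of $G[A\cup B]$). All of the argument is purely combinatorial and uses only the definitions of $k$-edge-connectedness and of ``maximal $k$-edge-connected subgraph,'' so no further input from the paper is required.
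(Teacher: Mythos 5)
Your proof is correct and follows essentially the same strategy as the paper's: both derive disjointness and uniqueness from the fact that two overlapping $k$-edge-connected vertex sets have a $k$-edge-connected union, and both obtain coverage from the observation that singletons are vacuously $k$-edge-connected. The main difference is that the paper simply asserts the union fact (``Since both $G[A]$ and $G[B]$ are $k$-edge-connected hence $G[A\cup B]$ is also $k$-edge-connected as $A\cap B\neq\emptyset$'') without justification, whereas you supply the short case analysis that actually proves it; your version is therefore a bit more rigorous. Your case analysis is also correct: for a proper nonempty cut $(X,Y)$ of $A\cup B$, if neither $A$ nor $B$ is split then a common vertex $v\in A\cap B$ forces both $A$ and $B$ onto the same side, giving $Y=\emptyset$; otherwise the cut induces a genuine cut of $G[A]$ or $G[B]$, and the induced-subgraph containment $E_{G[A]}(A\cap X, A\cap Y)\subseteq E_{G[A\cup B]}(X,Y)$ transfers the $\ge k$ bound. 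The paper also spends a sentence proving that any maximal $k$-edge-connected subgraph is induced; since the paper's own definition already phrases everything in terms of vertex subsets $V_i$ and induced subgraphs $G[V_i]$, that step is redundant, and your proof correctly skips it.
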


By \Cref{prop:mcpunique}, we will usually call our desired solution $\{V_{1},\dots,V_{z}\}$
as \emph{maximal $k$-edge-connected partition} to emphasize that it forms a partition.

\subsection{The Recursive-Mincut algorithm}\label{sec:naivealgo}
In this subsection, we present a naive algorithm (see e.g.~the appendix of \cite{chen2020near}) to find the maximal $k$-edge-connected partition of the graph as it helps the reader to familiarize with the ideas involved in our improved algorithm and serves as a warmup for the proofs that follow.

Suppose we have a cut in the graph $G$ that is less than $k$, then any maximal vertex set $k$-edge-connected should be contained on one side of the cut. Otherwise, we can separate the graph induced on that vertex set with edges of weight at most $k$; thus, it is not $k$-edge-connected and thus not maximal $k$-edge-connected.

Hence, a straightforward algorithm finds cuts of size less than $k$ if they exist and recur on both the graphs induced by the cut. If no cut of size less than $k$ exists, then the graph is $k$-edge-connected, and we add the remaining vertices to the partition, and we are done.

\begin{algorithm}[H]
\KwData{$G = (V,E,w),k$}
\KwResult{Maximal $k$-edge-connected partition of vertex set $V$}
Compute minimum cut $(S,V\setminus S)$ of graph $G$\;
\eIf{$w(S,V\setminus S) \geq k$}
{return $\{V\}$}
{return $\nmp(G[S],k) \cup \nmp(G[V\setminus S],k)$}
\caption{\nmp(G,k)}
\label{alg:naive}
\end{algorithm}

\begin{lemma}\label{lem:naive:correctness}
The output of \Cref{alg:naive}
is the maximal $k$-edge-connected partition of $G$.
\end{lemma}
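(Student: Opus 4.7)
The plan is to prove correctness by strong induction on $|V|$. For the base case, if the minimum cut $(S, V\setminus S)$ computed by \Cref{alg:naive} has value at least $k$, then $G$ itself is $k$-edge-connected by definition, so $\{V\}$ is trivially the (unique) maximal $k$-edge-connected partition guaranteed by \Cref{prop:mcpunique}, and the algorithm returns the correct answer.

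For the inductive step, suppose $w(\delta(S)) < k$, and let $\{V_{1}, \dots, V_{z}\}$ denote the maximal $k$-edge-connected partition of $G$. I would first establish a \emph{locality} claim: each $V_{i}$ lies entirely within $S$ or entirely within $V\setminus S$. The reason is that if some $V_{i}$ straddled the cut, then $(V_{i} \cap S,\, V_{i} \setminus S)$ would induce a cut in $G[V_{i}]$ of value at most $w(\delta(S)) < k$, contradicting the $k$-edge-connectivity of $G[V_{i}]$.

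Next I would establish a \emph{preservation} claim: the maximal $k$-edge-connected partition of $G[S]$ equals $\{V_{i} : V_{i} \subseteq S\}$, and analogously for $V\setminus S$. The forward direction is straightforward because $G[V_{i}]$ is the same induced subgraph whether $V_{i}$ is viewed as a subset of $V$ or of $S$, so $k$-edge-connectivity transfers; and any $V_{i} \subsetneq U \subseteq S$ with $G[U]$ being $k$-edge-connected would immediately witness non-maximality in $G$. For the reverse direction, suppose $U \subseteq S$ is maximal $k$-edge-connected in $G[S]$. Then $U$ is $k$-edge-connected in $G$ as well, hence $U$ is contained in some maximal $k$-edge-connected subgraph of $G$, say $V_{i}$. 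The locality claim applied to $V_{i}$ forces $V_{i} \subseteq S$ (since $V_{i} \cap S \supseteq U \neq \emptyset$), and then maximality of $U$ in $G[S]$ together with $U \subseteq V_{i} \subseteq S$ gives $U = V_{i}$.

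Finally, by the inductive hypothesis applied to $G[S]$ and $G[V\setminus S]$, which have strictly fewer vertices since the cut is nontrivial, the two recursive calls return the maximal $k$-edge-connected partitions of $G[S]$ and $G[V\setminus S]$ respectively; by the preservation claim their union is exactly $\{V_{1}, \dots, V_{z}\}$. The main subtlety I expect is in the reverse direction of the preservation claim, where one must argue that a piece of the $G[S]$-partition actually coincides with some $V_{i}$ rather than being a strict subset; the locality claim, applied to the $V_{i}$ that contains $U$, handles this cleanly.
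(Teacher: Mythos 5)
Your proof is correct and complete. It does, however, take a structurally different route from the paper's. The paper argues directly about the algorithm's output: it first observes the output is a partition, then shows each output set is $k$-edge-connected (stopping condition), and finally proves maximality by contradiction---if some larger $T \supset S$ were $k$-edge-connected, one tracks the \emph{last} set $V'$ encountered during the recursion that still contains $T$; the minimum cut of $G[V']$ (of value $<k$) must then split $T$, contradicting $k$-edge-connectivity of $G[T]$. Your proof instead proceeds by strong induction on $|V|$, isolating two reusable facts: a \emph{locality} claim (no maximal $k$-edge-connected piece straddles a cut of value $<k$) and a \emph{preservation} claim (the maximal partition of $G[S]$ is exactly the restriction of the partition of $G$). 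Both arguments hinge on the same one-line observation that a cut of value $<k$ cannot cross a $k$-edge-connected subgraph---this is your locality claim and it is exactly what the paper invokes when arguing about the "last" $V'$---but your decomposition is more compositional and makes the recursive structure explicit, which is a cleaner template if one later wants to argue correctness of recursive variants; the paper's argument is more operational and slightly more concise. One small point worth spelling out in your reverse direction of the preservation claim: the step "$U$ is $k$-edge-connected, hence contained in some $V_i$" implicitly uses that the union of two overlapping $k$-edge-connected induced subgraphs is again $k$-edge-connected, which is the content of \Cref{prop:mcpunique}; citing it there would tighten the argument.
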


\begin{proof}
Let $\cP$ be the output of the algorithm. Observe that the sets output by \Cref{alg:naive} form a partition of the vertex set $V$ as no vertex is repeated and every vertex is included as the recursion would eventually bottom out at least with singleton vertex, because each vertex is $k$-edge-connected trivially.

Now, we argue that the subgraph induced on each set $S \in \cP$ is a maximal $k$-edge-connected subgraph. 
Indeed, $G[S]$ $k$-edge-connected by how we stop the recursion. Next, we show the maximality of $S$.
Assume for contradiction a set $T\supset S$ such that $G[T]$ is also $k$-edge-connected. At the start of the algorithm, set $T$ is contained in $V$. Let $V'$ be the last set during the algorithm's run on $G$ that contains $T$. So the minimum cut of $G[V']$, of size less than $k$, separates the set $T$ into two parts. Hence, $G[T]$ also has a cut of size less than $k$, which contradicts our assumption that $G[T]$ is $k$-edge-connected. Therefore, we conclude that $G[S]$ is a maximal $k$-edge-connected subgraph.
\end{proof}

\begin{lemma}
\Cref{alg:naive} takes at most $\Tilde{O}(mn)$ time.
\end{lemma}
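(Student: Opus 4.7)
The plan is to separate the cost per recursive call from the aggregate bound by charging work level-by-level in the recursion tree.

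First, I would bound the cost of a single invocation of \nmp{}. The only non-trivial work at the top level of a call on a subgraph $G[V']$ is one global minimum cut computation, plus constant-size bookkeeping and the recursive dispatch. Using Karger's randomized minimum cut algorithm, this costs $\tilde{O}(m_{V'})$ time, where $m_{V'}=|E(G[V'])|$, and succeeds with high probability; a union bound over the $O(n)$ calls in the recursion tree preserves the high-probability guarantee. (One could equally invoke any deterministic $\tilde{O}(m)$ mincut algorithm here; the specific subroutine is immaterial as long as it is $\tilde{O}(m)$.)

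Next, I would analyze the recursion tree $\mathcal{T}$ produced by $\nmp(G,k)$. Each node $v\in\mathcal{T}$ is labeled with a vertex set $V_v\subseteq V$, and the children of $v$, when they exist, are labeled with the two parts of some cut of $V_v$. Thus the vertex sets labeling the nodes at any fixed depth $\ell$ of $\mathcal{T}$ are pairwise disjoint subsets of $V$. Consequently their induced edge sets are pairwise disjoint subsets of $E$, which gives
\[
\sum_{v \text{ at depth } \ell} m_{V_v} \;\le\; m.
\]
Combined with the per-call bound from the first step, the total work at each depth level is $\tilde{O}(m)$.

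Finally, I would bound the depth of $\mathcal{T}$. Every internal node corresponds to a proper $2$-partition of its vertex set (both sides are nonempty since the cut has value less than $k$ and hence at least one edge). Hence the minimum side-size decreases by at least $1$ along any root-to-leaf path, so the depth is at most $n-1$. Multiplying work per level by the number of levels yields the claimed $\tilde{O}(mn)$ bound. There is no real obstacle here; the only subtle point worth stating explicitly is the level-wise disjointness of the subproblems, which is what prevents a naive $O(\log n)$-depth argument from going through and gives the correct $n$ factor in the worst case.
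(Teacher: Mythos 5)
Your proposal is correct and follows essentially the same approach as the paper: bound the per-level work by $\tilde{O}(m)$ using the level-wise disjointness of the subproblems, and bound the recursion depth by $n$ since each split strictly decreases the number of vertices. The paper uses the same structure, citing Karger's $O(m\log^3 n)$ mincut algorithm and summing over levels; you add a slightly more explicit discussion of the union bound and of why both sides of the cut are nonempty, but these are routine.
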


\begin{proof}
The recursion depth of \Cref{alg:naive} is clearly at most $n$ because each recursion reduces the number of vertices in the largest subgraph by at least one.

We use the Karger's algorithm from \cite{karger2000minimum} that takes a near-linear running time $O(m\log^3 n)$ to compute a minimum cut in the graph with $m$ edges and $n$ vertices. Since all the subgraphs at a particular recursion depth are disjoint and have at most $m$ edges in total. Let $G_i$ be one of the subgraph with $m_i$ edges and $n_i$ vertices.  Hence, the total time to compute minimum cut on all the subgraphs is $\sum_i O(m_i\log^3 n_i) \leq \sum_i O(m_i\log^3 n) \leq \Otil(m)$. So, the total running time over all the recursion levels is at most $\Otil(mn)$.
\end{proof}

\section{Local algorithm for small volume cuts}\label{sec:extsets}

The goal of this section is to prove \Cref{thm:local main} (\ref{enu:local vol}). We organize this section as follows. We introduce the key subroutine, \lp of our local algorithm in \Cref{sec:4.1}. Given a \emph{nice} rank function $r$ with respect to an extreme set $S$ (recall \Cref{lem:key local}), \lp would find an $(x,\nu,k)$-set. In \Cref{sec:4.2} we introduce the idea of random rank function and prove \Cref{lem:key local} using the idea of random contraction from \cite{karger1993global}. From \Cref{lem:key local} we know that a random rank function is nice with probability $\Omega(1/\sigma^2)$. In \Cref{sec:4.3}, we give our local algorithm which repeats the \lp sub-routine with multiple random rank functions to amplify the probability of finding a \emph{nice} rank function, which then solves the \Cref{prob:minextset} with high probability and prove \Cref{thm:local main} (\ref{enu:local vol}). Here, we restate the first part of the \Cref{thm:local main}.

\begin{theorem}\label{thm:sec4}
There exists a randomized algorithm that solves \Cref{prob:minextset} with high probability in time $O(\nu\sigma^2\log^2 n)$ given the adjacency lists of the graph.
\end{theorem}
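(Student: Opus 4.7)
The plan is to implement the strategy sketched in Section~2: repeatedly try a randomized localized Prim's algorithm \lp driven by a random rank function, and return the first partial MST $X = V(T_r)$ that satisfies all three constraints $|X|<\sigma$, $\vol(X)<\nu$, $w(\delta(X))<k$. If none of the trials produces such an $X$, output $\bot$.

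First I would make precise the random rank function in the weighted setting. For each edge $e\in E$, draw $r(e)$ independently from the distribution of $\min(U_1,\dots,U_{w(e)})$, where each $U_i$ is uniform on $[0,1]$ (equivalently, $r(e)$ has CDF $1-(1-t)^{w(e)}$). This is the natural simulation of treating $e$ as $w(e)$ parallel unit-weight edges, each with its own independent uniform rank, while still using only one number per edge. I will invoke the sampling primitive promised in \Cref{sec:invprobdist} as a black box; the integrality of $w$ is what makes this efficient and numerically stable. Given this definition, \Cref{lem:key local} immediately gives that, for any fixed extreme set $S$ with $|S|<\sigma$, the rank function is \emph{nice} for $S$ with probability $\Omega(1/\sigma^2)$: indeed $S$ is the unique mincut of $G/\overline{S}$, a weighted multigraph on at most $\sigma$ vertices, so Karger's $\Omega(1/\sigma^2)$ survival bound applies verbatim.

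Next I would describe \lp. Starting from the singleton $X=\{x\}$, maintain a priority queue $Q$ of boundary edges $\delta(X)$ keyed by rank, together with the running quantities $|X|$, $\vol(X)$, and $w(\delta(X))$, which are all trivially updatable in $O(\deg(v)\log n)$ time whenever a new vertex $v$ is absorbed into $X$. At each iteration, extract the minimum-rank boundary edge $(u,v)$ with $v\notin X$, move $v$ into $X$, and push the newly exposed edges into $Q$. After each update, test whether the current $X$ satisfies $|X|<\sigma$, $\vol(X)<\nu$, $w(\delta(X))<k$; if so, return $X$. Stop (failure) as soon as $\vol(X)\ge \nu$ or $|X|\ge\sigma$. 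Since we halt the moment the volume exceeds $\nu$, each invocation of \lp handles $O(\nu)$ boundary-edge insertions and extractions and therefore runs in $O(\nu\log n)$ time using the adjacency lists. Correctness of a single call when $r$ is nice for an extreme set $S$ is immediate: because every cross-edge in $E(S,\overline{S})$ has strictly larger rank than every MST edge inside $G[S]$, Prim's rule starting at $x\in S$ must absorb all of $S$ before it ever crosses $\delta(S)$, so at the moment just before the first cross-edge is extracted we have $X=S$ and the three conditions on $S$ are satisfied; in particular the check will trigger at or before this point.

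The outer algorithm runs \lp on $\Theta(\sigma^2 \log n)$ independent rank functions and returns the first valid $X$ it encounters, or $\bot$ if all runs fail. Total time is $O(\nu\sigma^2 \log^2 n)$. For correctness: whenever \lp returns an $X$, the explicit checks guarantee $X$ is a genuine $(x,\nu,\sigma,k)$-set, so false positives are impossible, which handles the case where no $(x,\nu,\sigma,k)$-extreme set exists (we will just output $\bot$, which is permitted). Conversely, if an $(x,\nu,\sigma,k)$-extreme set $S$ does exist, each trial is nice for $S$ with probability $\Omega(1/\sigma^2)$, so after $\Theta(\sigma^2 \log n)$ independent trials the probability that all fail is at most $n^{-\Omega(1)}$, yielding the high-probability guarantee. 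The most delicate point, and the one I expect to require the most care in the write-up, is verifying that "nice" forces \lp to identify $X=S$ before any boundary-volume overshoot: we need that while $X\subsetneq S$, the next edge pulled is always an MST edge inside $G[S]$ (so $X$ monotonically grows within $S$), and during this growth $\vol(X)\le \vol(S)<\nu$ and $|X|\le|S|<\sigma$ hold throughout, so the $\nu$- and $\sigma$-based early termination never fires prematurely. This follows from the nice-rank inequality \eqref{eq:nice rank} combined with the exchange property of Prim's rule, but it is the step I would write out most carefully.
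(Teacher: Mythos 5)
Your proposal matches the paper's proof essentially step for step: the same random rank function $r(e)\sim\min(U_1,\dots,U_{w(e)})$ to simulate weighted contraction, the same localized Prim procedure \lp that halts on volume/cardinality overflow and checks the cut value, and the same $\Theta(\sigma^2\log n)$-fold repetition justified by Karger's $\Omega(1/\sigma^2)$ bound applied to $G/\overline{S}$. The "delicate point" you flag at the end is exactly what the paper isolates as \Cref{lem:lpreturnsS}, and your sketch of it (niceness forces $X$ to remain inside $S$, and monotonicity of volume and cardinality over subsets of $S$ prevents early termination) is the right argument.
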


\subsection{Local Prim Algorithm}\label{sec:4.1}
In this sub-section, we introduce the key sub-routine required for our local algorithm.
To this end, we need to introduce some definitions. A \emph{rank function} $r:E\rightarrow [0,1]$ is a function that maps each edge to a number in $[0,1]$. We call $r(e)$ the \emph{rank of edge $e$}. Let $\MST_r(G)$ denote the minimum spanning tree of $G$ with respect to the rank function $r$.

As we are always interested in the value of the rank $r(e)$ in comparision to other edges rather than the absolute rank of an edge, in the rest of the paper we call the rank value as the rank in short.

\begin{definition}[Set respects rank function]
Given a graph $G=(V,E,w)$ and a rank function $r:E\rightarrow [0,1]$, a set $S \subset V$ \emph{respects} the rank function $r$ if 
$\MST_r(G) \cap E(S,\overline{S}) = {e'}$ and $r(e') > r(e)$ for all $e \in \MST_r(G) \cap G[S]$.
\end{definition}

In words, the set $S$ respects $r$ if the $\MST_r$ has only one edge $e'$ in intersection with the cut $\delta(S)$, which makes $\MST_r \cap G[S]$ a minimum spanning tree of $G[S]$, and each edge of such minimum spanning tree contained in $G[S]$ should have rank lesser than the rank of edge $e'$. Recall the discussion below \Cref{lem:key local}.
We say a rank function $r$ is \emph{nice} w.r.t.~a set $S$ iff $S$ respects $r$.

This definition is very critical to the correctness of our algorithm. Note that the minimum spanning tree used for analysis is with respect to rank function $r$. It can be very different from the one computed using the weight function $w$.

We consider the following problem.

\begin{restatable}[\localmst]{problem}{localmstprob}
\label{prob:localmst}
Given a graph $G=(V,E,w)$, a vertex $x$, a rank function $r$ and parameters $\nu,\sigma,k$ find a $(x,\nu,\sigma,k)$-set $S$ that respects the rank function $r$ or return $\bot$ if no such $(x,\nu,\sigma,k)$-set respecting $r$ exists.
\end{restatable}

We state the Local Prim algorithm that solves the \Cref{prob:localmst} and proves its correctness.

\begin{algorithm}[H]
\KwData{$G = (V,E,w),x,\nu,\sigma,k,r$ where $r$ is a rank function $r:E\rightarrow [0,1]$}
\KwResult{$(x,\nu,\sigma,k)$-set respecting $r$ if exists else $\bot$}
$X = \{x\}$\;
\While{$w(\delta(X)) \geq k$ and $\vol(X) < \nu$ and $|X| < \sigma$}
{
    Find the edge with minimum rank, $e = (u,v) \in E(X,V\setminus X) = \delta(X)$\;\label{lp:line3}
    Update $X = X \cup \{v\}$\;
}
\eIf{$w(\delta(X)) < k$}
{\Return{$X$}}
{\Return{$\bot$}}
\caption{\lp}
\label{alg:localprim}
\end{algorithm}

We expand the set $X$ starting from the singleton vertex cut $\{x\}$. Since we only want the sets that respect the rank function, we consider the cut edge $(u,v)$ with minimum rank where $u\in X$ and $v \notin X$, and expand $X$ by setting $X \gets X \cup  \{v\}$.
We expand the set $X$ along the edges with minimum rank until we violate volume or cardinality constraints of the set $X$ or find a set with a cut size less than $k$. If there exists a $(x,\nu,\sigma,k)$-set $S$ that respects the rank function $r$, then \Cref{alg:localprim} finds the minimal set such that no strict subset of it also respects the rank function $r$. We prove it in the following lemma.

\begin{lemma}\label{lem:lpreturnsS}
If there exists a $(x,\nu,\sigma,k)$-set $S$ in $G=(V,E,w)$ that respects a rank function $r$ and is minimal such that no strict subset respects $r$, then \Cref{alg:localprim} returns $S$ and return $\bot$ only if no such respecting set $S$ exists.
\end{lemma}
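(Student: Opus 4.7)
The proof will proceed in two parts, both exploiting a structural consequence of $S$ respecting $r$: since $\MST_r(G)$ is a tree whose only edge in $\delta(S)$ is a unique edge $e'$, removing $e'$ decomposes $\MST_r(G)$ into two subtrees, namely $T_S := \MST_r(G) \cap G[S]$ which spans $S$ and another spanning $\overline{S}$; moreover the respecting condition gives $r(f) < r(e')$ for every $f \in T_S$.

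My first step is to show, by induction on the number of while-loop iterations of \Cref{alg:localprim}, that $X \subseteq S$ holds throughout the run. The base case $X = \{x\} \subseteq S$ is immediate. For the inductive step, suppose $X \subsetneq S$ and let $e^* = (u, v) \in \delta(X)$ be the minimum-rank edge chosen on \Cref{lp:line3}. Assume for contradiction $v \in \overline{S}$. Since $T_S$ spans $S$ and $X \subsetneq S$, the unique $T_S$-path from $u$ to any $s \in S \setminus X$ crosses $\delta(X)$ at some edge $f \in T_S$, and the structural property gives $r(f) < r(e')$. If $e^* \in \MST_r(G)$, then $e^* \in \MST_r(G) \cap \delta(S) = \{e'\}$ forces $e^* = e'$, whence $f \in \delta(X)$ has strictly smaller rank than $e^*$, contradicting the minimum-rank choice of $e^*$. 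If $e^* \notin \MST_r(G)$, then its fundamental cycle in $\MST_r(G)$ must pass through $e'$ (the unique cross edge between $S$ and $\overline{S}$), so by the MST cycle rule $r(e^*) \geq r(e') > r(f)$, again contradicting the minimum-rank choice. Hence $v \in S$ and the induction closes.

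With $X \subseteq S$ maintained throughout, $\vol(X) \leq \vol(S) < \nu$ and $|X| \leq |S| < \sigma$ hold automatically, so the only condition that can cause the while-loop to exit is $w(\delta(X)) < k$; thus the algorithm outputs a set, never $\bot$. Since $w(\delta(S)) < k$, the loop indeed exits by the time $X$ has grown to $S$, so the returned set $X$ satisfies $X \subseteq S$. To upgrade this to $X = S$ and conclude that the algorithm returns precisely $S$, I plan to invoke the minimality of $S$: any premature termination at some $X \subsetneq S$ with $w(\delta(X)) < k$ would, via a careful analysis of the Prim's partial tree $\MST_r(G) \cap G[X]$ constructed up to the termination moment, itself be a strict subset of $S$ that respects $r$, contradicting that no strict subset of $S$ respects $r$. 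This minimality step --- showing that any $X \subsetneq S$ at which the loop would halt inherits the single-cross-edge respecting property with respect to $r$ --- is the main obstacle of the proof, and it will require leveraging the fact that the Prim's-style construction supplies $X$ with the structure of a partial MST rooted at $x$.

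For the second clause, that the algorithm returns $\bot$ only if no respecting $(x,\nu,\sigma,k)$-set exists, I apply the same inductive argument to an arbitrary respecting $(x,\nu,\sigma,k)$-set $\tilde S$ (not necessarily minimal): the conclusion $X \subseteq \tilde S$ throughout ensures that $\vol(X) < \nu$ and $|X| < \sigma$ never trigger the loop exit, so the algorithm exits via $w(\delta(X)) < k$ and outputs a non-$\bot$ set. Contrapositively, if the algorithm returns $\bot$, then no respecting $(x,\nu,\sigma,k)$-set can exist.
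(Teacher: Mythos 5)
Your inductive argument that $X \subseteq S$ is maintained throughout the run of \lp is correct and in fact more careful than the paper's: you explicitly justify the rank comparison against non-$\MST$ edges of $\delta(X) \cap \delta(S)$ via the fundamental-cycle rule, whereas the paper's proof merely asserts that the in-$S$ tree edge ``has smaller rank than any edge in $\delta(S)$.'' From $X \subseteq S$, your conclusion that the loop can only exit because $w(\delta(X)) < k$ --- so \lp returns a $(x,\nu,\sigma,k)$-set and never $\bot$ --- is correct, and running the same induction against an arbitrary respecting $\tilde{S}$ proves the second clause; this is exactly the content consumed in \Cref{lem:blp:correctness}.

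The step you flag as ``the main obstacle,'' upgrading to ``\lp returns exactly $S$'' via minimality, is a genuine gap, and the plan you sketch would not close it. You intend to show that any premature halt at some $X \subsetneq S$ with $w(\delta(X)) < k$ produces a set that itself respects $r$, contradicting minimality. But the partial Prim tree $T_X$ only guarantees that $\MST_r(G) \cap G[X]$ spans $X$; there is nothing preventing $\MST_r(G)$ from crossing $\delta(X)$ at two or more edges (the tree may branch away from $X$ in several directions inside $S$), and such an $X$ fails the single-cross-edge requirement and hence does \emph{not} respect $r$, so no contradiction with minimality is obtained. Concretely, if $\MST_r$ has two tree edges incident to $x$ that both lie inside $G[S]$ and $w(\delta(\{x\})) < k$, the loop halts immediately and returns $\{x\}$, which does not respect $r$, even though the minimal respecting $(x,\nu,\sigma,k)$-set $S$ may be strictly larger --- so ``\lp returns $S$'' can fail. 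Note that the paper's own proof makes the same unsupported claim (``terminates early if it finds a subset that also respects $r$''); the safe repair is to only assert what is used downstream, namely that \lp returns some $(x,\nu,\sigma,k)$-set and returns $\bot$ only when no respecting $(x,\nu,\sigma,k)$-set exists, which your first and third paragraphs already establish.
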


\begin{proof}
Let $S'$ be the set returned by the \Cref{alg:localprim}. We claim $S'$ cannot contain a vertex not in $S$. Since we start from $X =\{x\}\subset S$ and so, to get a vertex not in $S$, we need to expand set $X$ along an edge from $\delta(S)$.

Let $X$ be the current set before expanding along an edge from $\delta(S)$. If $X \subsetneq S$ then there exists an edge in $\delta(X)$ that is part of the minimum spanning tree of $G[S]$ that has smaller rank than any edge in $\delta(S)$ as $S$ respects $r$. So $X = S$, before we can contract an edge from $\delta(S)$, but if $X=S$, $w(\delta(X))=w(\delta(S))<k$ as $S$ is a $(x,\nu,\sigma,k)$-set so \lp terminates without expanding along an edge from $\delta(S)$ and returns $S$. So $S'$ cannot contain a vertex outside $S$ and thus $S'\subseteq S$. Since any subset of $S$ is also $(\nu,\sigma)$-set and according to the algorithm $X$ always contains $x$. So we only terminate when cut size is less than $k$ and thus any set returned is a $(x,\nu,\sigma,k)$-set.

It remains to prove that \lp{} cannot return $\bot$. Assume it returns $\bot$. We only return $\bot$ when we cannot find any $(x,\nu,\sigma,k)$-set. However, since $S$ respects $r$, expanding along minimum ranked edges is nothing but applying Prim's algorithm starting from $x$. According to the set respecting rank function definition, we expand along all the edges of the minimum spanning tree in $G[S]$ without expanding the set $X$ along any edge from $\delta(S)$ as they all have a higher rank, making $X=S$. Thus \lp{} either returns $S$ as it exits the while loop or terminates early if it finds a subset that also respects $r$ but cannot return $\bot$.
\end{proof}

We now bound the running time of the \lp{} algorithm.

\begin{lemma}\label{lem:lpruntime}
\lp{} subroutine takes at most $O(\nu \log \nu)$ time.
\end{lemma}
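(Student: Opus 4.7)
The plan is to bound the running time by implementing the \texttt{while} loop with a min-heap $H$ keyed by the rank function $r$, while maintaining the three scalar quantities $|X|$, $\vol(X)$, and $w(\delta(X))$ incrementally. Concretely, I would maintain $H$ so that at all times $H$ contains exactly the edges of $\delta(X)$ (or a superset of it, if we use lazy deletion). Then \Cref{lp:line3} is implemented by an extract-min on $H$: the minimum-rank cut edge $e=(u,v)$ with $u\in X,\;v\notin X$ is obtained in $O(\log|H|)$ time.

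After extracting $e$ and adding $v$ to $X$, the only changes to $\delta(X)$ come from scanning the adjacency list of $v$: for each edge $(v,w)$ of weight $w(v,w)$, if $w\in X$ we delete $(v,w)$ from $H$ and decrement $w(\delta(X))$ by $w(v,w)$; if $w\notin X$ we insert $(v,w)$ into $H$ and increment $w(\delta(X))$ by $w(v,w)$. Simultaneously we increment $\vol(X)$ by one and $|X|$ by one (once per vertex). Each such per-edge update costs $O(\log|H|)$ heap work plus $O(1)$ counter work, and the loop-continuation test uses only the three maintained scalars, so it is $O(1)$ per iteration.

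The key accounting step is to show that the total number of edges ever inserted into $H$, and hence also the total number of adjacency-list entries ever touched, is $O(\nu)$. I would maintain $\vol(X)$ \emph{inside} the adjacency-list scan so that we stop scanning as soon as $\vol(X)$ reaches $\nu$; then the running sum of $\vol(X)$ never exceeds $\nu+O(1)$ throughout the execution. Since every edge inserted into $H$ is an edge incident to some vertex that has been added to $X$, and each such edge contributes $1$ to $\vol(X)$ when its first endpoint enters $X$, the total number of insertions into $H$ is at most $\vol(X)\le\nu+O(1)$. Deletions and extract-min operations are likewise bounded by this, so $|H|\le O(\nu)$ throughout. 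Combining $O(\nu)$ total heap operations at $O(\log\nu)$ each yields the claimed $O(\nu\log\nu)$ bound.

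The only delicate point I anticipate is the ``last iteration'' issue: if the vertex $v$ added in the final step has very large degree, a naive implementation would spend $\Theta(\deg(v))$ time scanning its adjacency list, which could dwarf $\nu$. This is handled by interleaving the termination check with the adjacency-list scan, so that as soon as $\vol(X)\ge\nu$ we abort the scan and return immediately (the return value depends only on $w(\delta(X))$, $|X|$, and $\vol(X)$, all of which are already maintained). With this fix the edge-counting argument goes through cleanly and the $O(\nu\log\nu)$ bound follows.
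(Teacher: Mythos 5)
Your proposal follows essentially the same approach as the paper's proof: maintain a rank-keyed priority structure over $\delta(X)$ (the paper calls it a dictionary, you a min-heap — cosmetic), scan the adjacency list of each newly added vertex to update the structure and the three scalars, and charge all operations to $O(\nu)$ total at $O(\log\nu)$ each. Importantly, you also flag a genuine gap in the paper's own accounting. The paper bounds $\sum_{v\in X}(1+\deg(v))$ by $O(\sigma+\nu)$, justified by ``we terminate whenever $\vol(X)\ge\nu$'' — but the volume constraint only holds \emph{before} the last vertex is added, so the last vertex's degree can far exceed $\nu$, and an implementation that scans its full adjacency list would not meet the claimed bound. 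Your fix — interleaving the volume test with the adjacency scan and aborting as soon as $\vol(X)\ge\nu$ — is the right repair, and is what a careful implementation must do. The one imprecision in your write-up is the parenthetical ``the return value depends only on $w(\delta(X))$, $|X|$, and $\vol(X)$, all of which are already maintained'': after aborting mid-scan, $w(\delta(X))$ has been only partially updated (you have processed a prefix of $v$'s neighbors, some of which increase and some of which decrease the cut weight), so it is \emph{not} accurately maintained at that moment. The cleaner statement is that once $\vol(X)\ge\nu$, the set $X$ is disqualified from being a $(x,\nu,\sigma,k)$-set and the algorithm may immediately return $\bot$ without computing $w(\delta(X))$; this does not disturb correctness because in the case covered by \Cref{lem:lpreturnsS} the algorithm keeps $X\subseteq S$, hence $\vol(X)\le\vol(S)<\nu$ and the abort branch is never reached.
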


\begin{proof}
Here, we describe how to implement \lp. 
We create a dictionary $D$ with an invariant that $D = E(X,V\setminus X)$ contains all the cut edges crossing $X$.
Given this dictionary, we can obtain a minimum rank edge $e = (u,v) \in E(X,V\setminus X)$ for \Cref{lp:line3} of \Cref{alg:localprim}. This look-up operation occurs at most $|X| < \sigma$ times as we terminate whenever $|X| \ge \sigma$.

Whenever we expand $X \gets X \cup \{ v \}$, we update the dictionary by scanning through all edges of $v$ in $O(1+\deg(v))$ time. We delete from $D$ all edges between $v$ and $X$ and insert into $D$ all edges between $v$ and $V\setminus X$. We also update the cut value of $X$ while iterating over edges of $v$ by subtracting the weight of the edges that are already incident to $X$ and adding the weight of the edges that are not incident to $X$ previously.

The total number of insertion and deletion operations is $O(\sum_{ v \in X } (1+\deg(v))) = O(\sigma+\nu)$ since we terminate whenever $\vol(X) \ge \nu$ or $|X|>\sigma$. Since each operation of the dictionary takes logarithmic time, the total time is at most $O((\sigma+\nu)\log \nu) = O(\nu \log \nu)$.

\end{proof}

According to above \Cref{lem:lpreturnsS} given there is a $(\nu,\sigma,k)$-set $S$ in graph $G$. If we find a rank function $r$ such that $S$ respects $r$, then we can find a $(\nu,\sigma,k)$-set by running \lp{} with input as any vertex $x\in S$ using this rank function $r$ in $O(\nu\log \nu)$ time.

The following sub-section will show how we construct the rank functions that help us find $(\nu,\sigma,k)$-sets.

\subsection{\emph{Nice} Rank Function}\label{sec:4.2}
In this subsection, we show how to construct nice rank functions introduced above and explain how it is related to the idea of using the randomized contraction technique of \cite{karger1993global} to find minimum cuts.

Constructing a rank function respected by any particular set $S$ is easy. However, we do not know which subset is a $(\nu,\sigma,k)$-extreme set, so we cannot explicitly construct a rank function that is respected by an unknown extreme set. We come up with a distribution $R$ over rank functions and prove that any rank function chosen from this distribution is \emph{nice} with good probability. The key idea used in coming up with the distribution $R$ is the idea of random contraction.

To understand the construction of $R$ we need to look at the process of Karger's random contraction from a different perspective. We begin with the process of how a random edge is sampled in Karger's mincut algorithm. It can be equivalently viewed as stated in the Lemma 6.2.1 of \cite{karger1996new}.

\begin{claim}[Lemma 6.2.1 of \cite{karger1996new}]\label{lem:rankfn}
If we choose a rank function that maps each edge $e$ to $r(e) = 1-(1-t)^{1/w(e)}$ computed by uniformly sampling a value $t \in [0,1]$ where $w(e)$ is the weight of the edge, then the probability of an edge $e$ having the minimum rank among all edges equals $\frac{w(e)}{\sum_{e\in E} w(e)}$.
\end{claim}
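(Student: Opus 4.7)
The plan is to reduce the continuous rank model on weighted edges to the standard unweighted multigraph model of Karger's contraction algorithm, and then invoke a symmetry argument. Concretely, I will first identify the distribution of $r(e)$ when $t$ is uniform on $[0,1]$: if $U\sim\mathrm{Unif}[0,1]$, then
\[
\Pr[r(e)\le x] \;=\; \Pr\!\left[1-(1-U)^{1/w(e)}\le x\right] \;=\; \Pr\!\left[U\le 1-(1-x)^{w(e)}\right] \;=\; 1-(1-x)^{w(e)}
\]
for $x\in[0,1]$. I recognize this CDF as that of $\min_{j\in[w(e)]} U_{e,j}$ where $U_{e,1},\dots,U_{e,w(e)}$ are i.i.d.\ $\mathrm{Unif}[0,1]$ variables (the minimum of $w(e)$ uniforms has CDF $1-(1-x)^{w(e)}$).

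Next, I will exploit this equivalence to couple the weighted process with an unweighted multigraph one. For each edge $e$, introduce $w(e)$ independent ``sub-edges,'' each carrying an independent $\mathrm{Unif}[0,1]$ rank $U_{e,j}$. Under this coupling, the rank $r(e)$ produced by the claim's sampling rule has the same joint distribution (across edges) as $\min_j U_{e,j}$, because the $w(e)$ sub-ranks of different edges are mutually independent and the sampling rule for each individual edge matches the derivation above.

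Finally, I will use symmetry. Let $W=\sum_{e'\in E} w(e')$ and consider the collection of all $W$ sub-ranks $\{U_{e',j} : e'\in E,\; j\in[w(e')]\}$, which are i.i.d.\ continuous random variables. The overall minimum among these $W$ values is almost surely unique, and by exchangeability is equally likely to be attained by any of them, each with probability $1/W$. The event ``edge $e$ has the minimum rank $r(e)$'' is exactly the event ``the global minimum sub-rank lies among the $w(e)$ sub-ranks associated with $e$,'' which therefore occurs with probability $w(e)/W$.

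The only subtlety is verifying that the coupling is faithful in the sense that comparisons among the $r(e)$'s are governed by the global minimum over sub-ranks, and that ties occur with probability zero; both follow from the continuity and mutual independence of the $U_{e,j}$. I do not anticipate a significant obstacle beyond stating the coupling carefully; the heart of the argument is the CDF identification $1-(1-x)^{w(e)}$ together with the exchangeability of the $W$ sub-ranks.
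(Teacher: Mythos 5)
Your proof is correct and follows essentially the same route as the paper's: identify the CDF of $r(e)$ as that of the minimum of $w(e)$ i.i.d.\ uniforms, view the weighted edge as a bundle of $w(e)$ independent sub-ranks, and conclude by exchangeability over all $\sum_{e'} w(e')$ sub-ranks. The only cosmetic difference is direction — you verify that the given formula has the right CDF, whereas the paper derives the formula via inverse transform sampling — but the substance is identical.
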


For completeness, we give the proof of \Cref{lem:rankfn} with respect to our formalism and defer the proof to \Cref{sec:invprobdist}.
The rest of this section shows how a random rank function is a nice rank function with good probability.

From \Cref{lem:rankfn}, we know that choosing an edge that has the minimum rank value ($r(e)$) among all the edges simulates the process of picking an edge proportional to its weight. So Karger's random contraction \cite{karger1993global} algorithm is nothing but picking an edge with minimum rank $r(e)$ and contracting it. This process is the same as applying Kruskal's algorithm on the graph $G = (V, E, w)$ with $r$ as the weight function. So, the set of all edges contracted during the running of Karger's random contraction process form $\MST_r$ of the graph $G$.

In the following lemma, we prove that any $(x,\nu,\sigma,k)$-extreme set respects a random rank function described above with reasonable probability.

\begin{remark}
Any $(x,\nu,\sigma,k)$-set does not respect a random rank function with reasonable probability, and we need more structure for a set to respect a random rank function.
\end{remark}

We define formally the distribution of the rank functions we sample from. For each edge $e\in E$ of the graph $G=(V,E,w)$ we define a random variable $R_e$ with cumulative distribution function (CDF) as $\Pr(R_e \leq t) = 1- (1-t)^{w(e)}$. Distribution $R$ is the joint probability distribution of the random variables for all edges in $E$, $R = (R_{e_1},R_{e_2},\cdots)$. Note that each $R_{e}$ is independent of another edge's random variable $R_f$ when $e\neq f$.
So, choosing a random rank function $r$ from the distribution $R$ is nothing but each edge $e\in E$ choosing a value according to $R_e$ which is the rank value $r(e)$ for that edge.

We now prove a key lemma that is crucial for proving the correctness of our local algorithm.

\begin{lemma}\label{lem:rrrprob}
Let $r$ be a rank function sampled from the distribution $R$ and $S$ be any $(x,\nu,\sigma,k)$-extreme set in the graph $G = (V,E,w)$ then the probability that $S$ respects $r$ is at least $\frac{2}{\sigma(\sigma-1)}$.
\end{lemma}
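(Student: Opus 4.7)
The plan is to reduce the event ``$S$ respects $r$'' to the classical survival event in Karger's random contraction applied to the graph $G/\overline{S}$, and then invoke Karger's survival bound together with the extreme-set hypothesis.

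First I would establish the following equivalence: $S$ respects $r$ if and only if Karger's contraction on $G/\overline{S}$, when edges are processed in order of increasing rank (contracting an edge whenever its endpoints lie in distinct super-nodes and skipping it otherwise), terminates with exactly two super-nodes $S$ and $\overline{S}$. For the forward direction, the respect condition forces $\MST_r(G)\cap G[S]$ to be a spanning tree of $S$ (since deleting the single crossing MST edge $e'$ must leave the two components $S$ and $\overline{S}$), and a short Kruskal-style argument shows that if $e'$ is the unique MST edge in $\delta(S)$ then $r(e')=\min_{e\in \delta(S)}r(e)$; combined with $r(e')>r(f)$ for every $f\in \MST_r(G)\cap G[S]$, this says every edge of $\delta(S)$ outranks every edge of $\MST_r(G[S])$, so rank-ordered contraction on $G/\overline{S}$ finishes contracting $S$ into a single super-node before it ever touches a $\delta(S)$-edge. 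The reverse direction is symmetric: if the process ends with $(S,\overline{S})$, the $|S|-1$ contracted edges form a spanning tree of $S$ whose ranks are all less than those of any $\delta(S)$-edge, and running Kruskal on $G$ then produces an $\MST_r(G)$ satisfying the respect condition with $e'$ the minimum-rank edge of $\delta(S)$.

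Next I would argue that picking the minimum-rank edge at each step of this procedure coincides exactly with Karger's weighted random edge selection on $G/\overline{S}$. This is where \Cref{lem:rankfn} is used: because each $R_e$ is an independent sample from the CDF $1-(1-t)^{w(e)}$, the edge of minimum rank among a set of edges is distributed proportionally to weight, and this property is preserved under the contractions (the min-rank edge among parallel edges between two current super-nodes plays the role of the aggregated multi-edge in Karger's weighted formulation). Thus the rank-ordered contraction on $G/\overline{S}$ induced by $r\sim R$ has exactly the same distribution over outcomes as Karger's contraction algorithm on $G/\overline{S}$.

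Finally, since $S$ is a $(x,\nu,\sigma,k)$-\emph{extreme} set, the contracted graph $G/\overline{S}$ has $S$ as its \emph{unique} minimum cut and has $|S|+1\le \sigma$ vertices. Karger's classical survival analysis then gives
\[
\Pr[\text{the process on } G/\overline{S} \text{ ends with super-nodes } (S,\overline{S})]\ \ge\ \frac{2}{(|S|+1)|S|}\ \ge\ \frac{2}{\sigma(\sigma-1)},
\]
which by the equivalence above is exactly $\Pr[S \text{ respects } r]$. The main obstacle, and the place the proof has to be written carefully, is step one: verifying that ``unique MST edge crossing $\delta(S)$ plus rank-dominance inside $G[S]$'' is truly the same event as ``min-cut $S$ survives random contraction on $G/\overline{S}$''. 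Once this equivalence is pinned down, steps two and three are direct applications of \Cref{lem:rankfn} and the standard Karger bound.
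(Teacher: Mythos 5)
Your proof follows the same route as the paper: reduce the event ``$S$ respects $r$'' to the survival of the cut $(S,\{t\})$ under Karger's random contraction on the contracted graph $G/\overline{S}$, use \Cref{lem:rankfn} to identify rank-minimum edge selection with weighted random edge selection, and then apply the standard $\Omega(1/\sigma^2)$ survival bound because $G/\overline{S}$ has at most $\sigma$ vertices and, by extremeness of $S$, a unique minimum cut. Your step (1) is a more careful spelling-out of the equivalence that the paper states tersely; note, though, that only the direction ``Karger survives $\Rightarrow$ $S$ respects $r$'' is needed for the lower bound, and both your write-up and the paper's proof are really establishing the \emph{local} form of the respect condition (that $\MST_r(G[S])$ spans $S$ and all its ranks lie below all of $\delta(S)$) — which is also what \Cref{lem:lpreturnsS} actually uses — rather than the literal statement about $\MST_r(G)$ having a unique crossing edge, since the latter can fail when $\MST_r(G)\cap G[\overline{S}]$ does not span $\overline{S}$.
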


\begin{proof}
We need to lower bound the probability that an extreme set $S$ respects the rank function $r$.

Let $G/\overline{S} = (V_S,E_S,w_S)$ be the graph which is obtained by contracting the set $V \setminus S$ into a single super node $t$. $V_S=S\cup \{t\}$. From definition of the $(x,\nu,\sigma,k)$-extreme set $(S,\{t\})$ is the unique minimum cut of $G/\overline{S}$.

Using Karger's random contraction algorithm on graph $G/\overline{S}$, the probability of successful contractions to find the minimum cut is at least $\frac{2}{\sigma(\sigma-1)}$ as number of vertices in the graph $G/\overline{S}$ are bounded by $\sigma$.

As noted in \Cref{lem:rankfn} sampling of an edge in Karger's random contraction process can be simulated using a rank function chosen from $R$ restricted on the graph $G/\overline{S}$ i.e. considering only the edges in $G/\overline{S}$. So rank function $r$ succeeds in finding the unique minimum cut $(S,\{t\})$ with probability at least $\frac{2}{\sigma(\sigma-1)}$ i.e. all the vertices of the set $S$ are contracted into a single node say $s$ and finally we have the cut $(\{s\},\{t\})$ as it is the unique minimum cut.

Now we show that such a succeeding rank function chosen from $R$ is respected by the set $S$ in graph $G/\overline{S}$. Since the rank function, $r$ should contract the whole $S$ without contracting any edge from the minimum cut $\delta_{G/\overline{S}}(S)$. The MST of $G/\overline{S}$ with respect to the rank function $r$ should cross the minimum cut $S$ in $G/\overline{S}$ only once and also should span $G[S]$ before crossing the set $S$.

So we conclude that all edges in the minimum cut $(S,\{t\})$ of $G/\overline{S}$ have a higher rank than the edges that belong to the spanning tree restricted to $G[S]$. So $S$ respects $r$ in $G$ whenever the random rank function finds the unique minimum cut in $G/\overline{S}$.

Since probability of success of random contraction process finding minimum cut in $G/\overline{S}$ is at least $\frac{2}{\sigma(\sigma-1)}$, probability of $S$ respecting random rank function is also at least $\frac{2}{\sigma(\sigma-1)}$.
\end{proof}

\subsection{Local Subroutine for \texorpdfstring{$(\nu,\sigma,k)$}{(nu,sigma,k)}-set}\label{sec:4.3}

In the \Cref{sec:4.1}, we have seen how to find a $(x,\nu,\sigma,k)$-set that respects a given rank function $r$ by using \lp{} sub-routine. \Cref{sec:4.2} showed that any $(\nu,\sigma,k)$-extreme set respects a random rank function with good probability. In this sub-section we combine both to give an algorithm that solves \Cref{prob:minextset} and proves \Cref{thm:local main}(\ref{enu:local vol}).

From \Cref{lem:rrrprob} we know that a random rank function is respected by an $(x,\nu,\sigma,k)$-extreme set with $\Omega(1/\sigma^2)$ probability. To find a rank function that is respected with high probability, we need to independently repeat this process $O(\sigma^2 \log n)$ times.

So the algorithm to solve \Cref{prob:minextset} would mainly be iterating over $(O(\sigma^2 \log n))$ many independent random rank functions. Since the $(x,\nu,\sigma,k)$-extreme set respects at least one random rank function among $O(\sigma^2 \log n)$ and so we are guaranteed to return a $(x,\nu,\sigma,k)$-set by \Cref{lem:lpreturnsS}.

We formalize the above argument and give an algorithm that solves \Cref{prob:minextset}. We can fix all the random rank functions $r^i, i \in [O(\sigma^2\log n)]$ before the algorithm starts with out loss of generality.

\begin{algorithm}[H]
\KwData{$G = (V,E,w),x,\nu,\sigma,k$} 
\KwResult{$(x,\nu,\sigma,k)$-set or $\bot$ if no $(x,\nu,\sigma,k)$-extreme set exists}
\For{$ i \in [O(\sigma^2 \log n)]$}
{
    $X = \lp(G,x,\nu,\sigma,k,r^i)$\;\label{alg:blp:3}
    \If{$X \neq \bot$}
    {
        \Return $X$\;
    }
}
\Return $\bot$\;
\caption{\blp}
\label{alg:blp}
\end{algorithm}

\begin{lemma}\label{lem:blp:correctness}
\Cref{alg:blp} solves \Cref{prob:minextset} correctly with high probability.
\end{lemma}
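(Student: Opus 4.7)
The plan is to split the argument into two parts corresponding to the two possible outputs of \Cref{alg:blp}, and in each case invoke the earlier technical lemmas that have already done the real work.

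First, I will handle the case when the algorithm returns a set $X \neq \bot$ on \Cref{alg:blp:3}. In this case $X$ was produced by a call to \lp, and by \Cref{lem:lpreturnsS}, any non-$\bot$ output of \lp is guaranteed to be a $(x,\nu,\sigma,k)$-set (this part of \Cref{lem:lpreturnsS} does not require any assumption on the rank function). Hence the output is a valid answer to \Cref{prob:minextset} deterministically, with no probabilistic caveat.

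Second, I will handle the case when the algorithm returns $\bot$, which is the only place where randomness enters. Here I need to show: if any $(x,\nu,\sigma,k)$-extreme set $S$ exists, then the algorithm fails to find a $(x,\nu,\sigma,k)$-set only with probability $n^{-\Omega(1)}$. (If no $(x,\nu,\sigma,k)$-extreme set exists, then returning $\bot$ is always a valid answer, and nothing has to be proved.) Fix such an $S$. By \Cref{lem:rrrprob}, for each independent random rank function $r^i$, the probability that $S$ respects $r^i$ is at least $2/(\sigma(\sigma-1))$. Whenever $S$ respects $r^i$, there is a minimal subset $S' \subseteq S$ that also respects $r^i$, and by \Cref{lem:lpreturnsS} the call $\lp(G,x,\nu,\sigma,k,r^i)$ returns $S'$, which is itself a $(x,\nu,\sigma,k)$-set; in particular \lp does not return $\bot$ on this iteration, so \blp also does not return $\bot$.

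Finally, since the $r^i$'s are independent, the probability that $S$ fails to respect any of the $\Theta(\sigma^2 \log n)$ sampled rank functions is at most
\[
\left(1-\frac{2}{\sigma(\sigma-1)}\right)^{\Theta(\sigma^{2}\log n)} \le e^{-\Omega(\log n)} = n^{-\Omega(1)},
\]
which can be made $n^{-c}$ for any desired constant $c$ by choosing the hidden constant in $O(\sigma^2 \log n)$ large enough. Combining the two cases, \blp correctly solves \Cref{prob:minextset} with high probability. The only mildly subtle step is the second one, and it reduces cleanly to the probability amplification bound above once \Cref{lem:rrrprob,lem:lpreturnsS} are in hand; I do not anticipate any other obstacle.
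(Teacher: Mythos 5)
Your proof is correct and follows essentially the same approach as the paper: the per-trial success probability $\Omega(1/\sigma^2)$ comes from \Cref{lem:rrrprob}, the success of \lp given a nice rank function comes from \Cref{lem:lpreturnsS}, and amplification over $\Theta(\sigma^2\log n)$ independent trials gives the high-probability bound. Your explicit two-case split (non-$\bot$ vs.~$\bot$ output) is a useful clarification of what the paper leaves implicit. One small caveat: your Case~1 assertion --- that any non-$\bot$ output of \lp is automatically a $(x,\nu,\sigma,k)$-set regardless of the rank function --- is slightly stronger than what \Cref{lem:lpreturnsS} literally states, and as \Cref{alg:localprim} is written the final check tests only $w(\delta(X))<k$, so in principle the loop could exit with $\vol(X)\ge\nu$ or $|X|\ge\sigma$ when the last added vertex simultaneously pushes the volume/cardinality over threshold and drops the cut below $k$; however, the paper's own proof relies on the same implicit assumption, so this is a shared imprecision rather than a gap you introduced. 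Your detour through a ``minimal subset $S'\subseteq S$ respecting $r^i$'' in Case~2 is unnecessary but harmless, since the proof of \Cref{lem:lpreturnsS} already shows \lp returns a $(x,\nu,\sigma,k)$-set whenever any (not necessarily minimal) respecting set $S$ exists.
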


\begin{proof}
From \Cref{lem:rrrprob} we know that any $(x,\nu,\sigma,k)$-extreme set should respect a random rank function with probability at least $\frac{2}{\sigma(\sigma-1)}$. If there exists an $(x,\nu,\sigma,k)$-extreme set $S$, then $S$ will respect at least one of the $\Theta(\sigma^2 \log n)$ random rank function. The probability that none of them is respected by $S$,

\[
\Pr(S\text{ does not respect any } r^i) \leq \Big(1-\frac{2}{\sigma(\sigma-1)}\Big)^{\Theta(\sigma^2 \log n)}\\
\leq \frac{1}{n^{\Theta(1)}}\\
\]

Since there is a rank function that is respected by the $(x,\nu,\sigma,k)$-extreme set. According to \Cref{lem:lpreturnsS} we find a $(x,\nu,\sigma,k)$-set and we are done.

However if we do not find a $(x,\nu,\sigma,k)$-set by any of the random rank functions then we know that no $(x,\nu,\sigma,k)$-extreme set existed and we are only wrong with negligible probability $\frac{1}{n^{\Theta(1)}}$.
\end{proof}

\begin{lemma}\label{lem:blp:runtime}
Rutime of the \Cref{alg:blp} is
\begin{equation}
T(\blp) = O(\sigma^2\log n) \cdot T(\lp).\label{eqn:tblp}
\end{equation}
\end{lemma}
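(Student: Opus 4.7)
The plan is to read the runtime directly off the pseudocode of \Cref{alg:blp}. First I would observe that the algorithm is a single \texttt{for} loop iterating $O(\sigma^2 \log n)$ times, and that inside each iteration the only substantive operation is the call to \lp on \Cref{alg:blp:3}; the conditional check $X \neq \bot$ and the possible early return contribute only $O(1)$ additional time.

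Next I would sum per-iteration costs: the total work is at most $O(\sigma^2 \log n) \cdot (T(\lp) + O(1))$, which collapses to $O(\sigma^2 \log n) \cdot T(\lp)$ since $T(\lp) = \Omega(1)$. Early termination, which occurs once some \lp call returns a non-$\bot$ value, can only improve the runtime, so this worst-case accounting is a valid upper bound and matches \Cref{eqn:tblp}.

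The one bookkeeping point I would take care with is the cost of realizing the independent rank functions $r^i$. The preceding discussion says we may fix them all in advance, but literally materializing all $\Theta(\sigma^2 m \log n)$ rank values upfront would blow the bound. The remedy is to generate each value $r^i(e)$ lazily, the first time the $i$-th call to \lp inspects the edge $e$; by \Cref{lem:rankfn} each such sample takes $O(1)$ time, so this cost is already absorbed into $T(\lp)$ and does not inflate the total. Beyond this observation there is no substantive obstacle, as the lemma is essentially a restatement of the loop structure of \Cref{alg:blp}.
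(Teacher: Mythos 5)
Your proof is correct and takes essentially the same approach as the paper's one-line argument: the algorithm is a loop of $O(\sigma^2\log n)$ iterations, each dominated by a single call to \lp. Your extra remark about lazily materializing the rank values $r^i(e)$ is a sensible implementation caveat, but it is orthogonal to this lemma as stated, since the lemma only relates $T(\blp)$ to $T(\lp)$ and the cost of sampling ranks is already accounted for inside $T(\lp)$.
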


\begin{proof}
Each iteration of the for loop takes $T(\lp)$ time and so the total time taken by \blp{} is $O(\sigma^2 \log n)\cdot T(\lp)$.
\end{proof}

We now prove the \Cref{thm:sec4} which is same as proving \Cref{thm:local main}(\ref{enu:local vol}).

\begin{proof}[Proof of \Cref{thm:sec4}:]
From \Cref{lem:lpruntime} we have $T(\lp) = O(\nu\log \nu)$ which is at most $O(\nu\log n)$. According to \Cref{lem:blp:runtime}, $T(\blp) = O(\nu\sigma^2\log^2 n)$. From \Cref{lem:blp:correctness} we have the \Cref{alg:blp} that solves \Cref{prob:minextset} with high probability in running time at most $O(\nu\sigma^2\log^2 n)$ proving \Cref{thm:sec4}.
\end{proof}

Since any $(\nu,\sigma,k)$-extreme set of volume at most $\nu$ cannot have more than $\nu$ vertices as they need to be connected hence a trivial upper bound on the number of vertices contained in the set would be at most $\nu$. Substituting the trivial upper bound for number of vertices $\sigma$ in \Cref{eqn:tblp} we get $T(\blp) = O(\nu^3 \log^2 n)$. We can now eliminate the parameter $\sigma$ from \Cref{thm:sec4}, which gives the following theorem.

\begin{theorem}\label{thm:blp}
There exists a randomized algorithm given a graph $G=(V,E,w)$, a vertex $x$ and parameters $\nu,k$ finds either a $(x,\nu,k)$-set or guarantees that no $(x,\nu,k)$-extreme set exists with high probability in time $O(\nu^3\log^2 n)$.
\end{theorem}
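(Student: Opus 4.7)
The plan is to invoke \Cref{thm:sec4} as a black box with the cardinality parameter set to $\sigma = \nu$, and argue that this introduces no additional restriction in our setting. The reason is that an extreme set with small volume is automatically small in cardinality, so the class of $(x,\nu,k)$-extreme sets coincides with the class of $(x,\nu,\nu,k)$-extreme sets. Once this equivalence is established, the running time is immediate: substituting $\sigma = \nu$ into the $O(\nu \sigma^2 \log^2 n)$ bound of \Cref{thm:sec4} yields $O(\nu^3 \log^2 n)$.

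The crux is to show that any $(x,\nu,k)$-extreme set $X$ satisfies $|X| < \nu$. I would first verify that every extreme set is connected: if $X = X_1 \sqcup X_2$ with no edges between $X_1$ and $X_2$, then $\delta(X)$ partitions disjointly as $\delta(X_1) \cup \delta(X_2)$, giving $w(\delta(X_1)) \le w(\delta(X))$ with $X_1 \subsetneq X$, contradicting the extreme-set definition (which requires strict inequality on every proper subset). With connectedness in hand, $G[X]$ contains a spanning tree on $|X|$ vertices, contributing at least $2(|X|-1)$ to $\vol(X)$, so $2(|X|-1) \le \vol(X) < \nu$ forces $|X| < \nu/2 + 1 \le \nu$ for $\nu \ge 2$ (the case $\nu \le 1$ is vacuous since then no extreme set can contain $x$ under the strict cut constraint, and the algorithm simply returns $\bot$).

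Given this equivalence, the correctness translates directly: if the subroutine of \Cref{thm:sec4} returns a set $X$, then $X$ is a $(x,\nu,\nu,k)$-set and therefore a $(x,\nu,k)$-set; if it returns $\bot$, then with high probability no $(x,\nu,\nu,k)$-extreme set exists, which by the above argument is the same as saying no $(x,\nu,k)$-extreme set exists. The only expected obstacle is the verification of connectedness and the spanning-tree volume lower bound, both of which are routine; no new algorithmic ideas are needed beyond instantiating the already-proved subroutine with the correct parameter.
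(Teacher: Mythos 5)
Your proposal takes essentially the same route as the paper: the paper also reduces to Theorem 4.1 by observing that a connected extreme set of volume $<\nu$ has at most $\nu$ vertices and then substituting $\sigma = \nu$ into the $O(\nu\sigma^2\log^2 n)$ bound. You add a proof that extreme sets are connected (the paper merely asserts connectedness), which is a correct and worthwhile piece of rigor; the disjoint-union argument via $\delta(X) = \delta(X_1) \sqcup \delta(X_2)$ and the strict inequality in the extreme-set definition is exactly right.

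One small quantitative slip: with the paper's definition $\vol(X) = |E(X,V)|$, which counts each edge incident to $X$ once (not by summing degrees), a spanning tree of $G[X]$ contributes only $|X|-1$, not $2(|X|-1)$, to the volume. So the bound you actually get from $\vol(X) < \nu$ is $|X| \le \nu$ rather than $|X| < \nu$, and plugging $\sigma = \nu$ into Theorem 4.1 would not quite cover an extreme set with exactly $\nu$ vertices. This is harmless — take $\sigma = \nu + 1$ and the running time is still $O(\nu(\nu+1)^2\log^2 n) = O(\nu^3\log^2 n)$ — but the factor of $2$ you used to force the strict inequality is not supported by the stated volume definition. The paper itself glosses over the $\le$ versus $<$ distinction, so both accounts are off by the same $+1$, which is absorbed by the big-$O$.
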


\section{\texorpdfstring{$\Tilde{O}(m^{1.75})$}{m1.75} algorithm for maximal \texorpdfstring{$k$}{k}-connected partitions}\label{sec:recdepth}

In the previous section, we introduced the notion of small local cuts. We also gave a local algorithm that finds small volume cuts whose cut size is smaller than a given parameter $k$. This section introduces an algorithm that uses the local sub-routine to improve recursion depth and finds maximal $k$-edge-connected partition in running time $\Otil(m^{1.75})$. Although this algorithm can be slower than $\Otil(mn)$ when the graph is dense, in the next section, we introduce another algorithm that improves the running time $\Otil(mn)$ for finding maximal $k$-edge-connected partitions.

The main goal of this section is to prove the first running time stated in \Cref{thm:main}. We restate it here for completeness.

\begin{theorem}\label{mcp:mainthm}
There exists a randomized algorithm that takes a graph $G=(V,E,w)$ with $m$ edges and $n$ vertices and finds the maximal $k$-edge-connected partition with high probability in at most $O(m^{1.75}\log^{3.75} n)$ time.
\end{theorem}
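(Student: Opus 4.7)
The plan is to augment the naive recursive-mincut algorithm of Section~\ref{sec:naivealgo} with a local-search pre-step that calls $\blp$ of Theorem~\ref{thm:blp} on every vertex before falling back to Karger's global mincut. Concretely, fix $\nu := m^{1/4}$ and define a recursive procedure $\mathrm{Solve}(H,\nu,k)$: for each $x \in V(H)$, invoke $\blp(H, x, \nu, k)$; if any such call returns a set $S \neq \bot$, recurse on $H[S]$ and $H \setminus S$ and return the union of the answers. Only if every local call returns $\bot$ do we compute Karger's mincut of $H$; if the cut value is at least $k$, return $\{V(H)\}$, else recurse on the two sides of the mincut.

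Correctness would parallel Lemma~\ref{lem:naive:correctness}: every maximal $k$-edge-connected subgraph is contained on one side of any cut of value less than $k$, so neither the splits found by $\blp$ nor by Karger can break one apart. By the high-probability guarantee of Theorem~\ref{thm:blp}, whenever every local call returns $\bot$, no $(x,\nu,k)$-extreme set exists in $H$ (w.h.p.), and the elementary structural observation sketched in Section~\ref{sec:overview} then forces any remaining cut of value $< k$ in $H$ to be \emph{balanced}, i.e., to have volume at least $\nu$ on both sides.

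For the running time I classify each recursive call as \emph{unbalanced} (the local step succeeded, peeling off a side of volume $< \nu$) or \emph{balanced} (Karger's mincut produced a cut with both sides of volume $\ge \nu$). A volume-potential argument limits the number of balanced calls across the whole recursion tree to $O(m/\nu)$, since each balanced split creates two children of volume $\ge \nu$ each; charging Karger's $\Otil(m)$ cost to each balanced call yields $\Otil(m \cdot m/\nu) = \Otil(m^{7/4})$ total mincut work. Work strictly inside a peeled-off small piece is cheap because it has at most $\nu$ edges and at most $\nu$ vertices, so the total contribution of these subtrees is subsumed in the budget.

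The main obstacle will be bounding the cumulative cost of the $\blp$ invocations themselves. Naively, each recursive call spends $n_H \cdot \Otil(\nu^3)$ on the local phase, and summing over all calls could be much larger than the $\Otil(m^{7/4})$ budget. The plan is to amortize: charge each successful local call (which peels off a piece of volume $< \nu$) against the volume it removes, bounding the number of successful calls globally by $O(m/\nu)$; and charge each unsuccessful local call against either the balanced split that eventually occurs in its subtree (of which there are only $O(m/\nu)$) or against a vertex whose ``home'' piece has not yet been affected by a balanced split. Carried out carefully, this should yield a total $\blp$ cost of $\Otil(m \cdot \nu^3) = \Otil(m^{7/4})$, matching the mincut term. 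Combining the two contributions with the $\log^2 n$ factor from $\blp$, the $\log^3 n$ factor from Karger's algorithm, and a $\log^{3/4} n$ factor from optimizing $\nu$ against the actual constants, yields the claimed $O(m^{1.75}\log^{3.75} n)$ bound.
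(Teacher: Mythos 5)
Your overall plan — use the local subroutine $\blp$ to peel off small extreme sets and only fall back to Karger's global mincut when no small extreme set exists, so that balanced splits dominate and the recursion depth is $O(m/\nu)$ — matches the paper's framework, and your accounting of the mincut cost as $\Otil(m^2/\nu)$ and the choice $\nu = m^{1/4}$ are the right ideas. However, there is a genuine gap in the step you flag yourself as ``the main obstacle'': the amortization of the $\blp$ calls does not work as you sketched it.

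The issue is that your algorithm re-queries \emph{every} vertex of $H$ at every recursive node, and nothing in your charging scheme bounds how many times a single vertex can be re-queried. Charging unsuccessful calls ``against a vertex whose home piece has not yet been affected by a balanced split'' fails because a vertex's piece can shrink via peelings (unbalanced splits) $\Omega(n)$ times before any balanced split occurs, and your algorithm re-scans the whole remaining piece after each peel. Concretely, on a path graph with $k \ge 3$ every internal vertex is a singleton extreme set, so the recursion peels off one vertex at a time for $\Omega(n)$ rounds; each round does $\Omega(n)$ calls to $\blp$, for a total local cost of $\Omega(n^2 \cdot \nu^3)$, which with $m = \Theta(n)$ and $\nu = m^{1/4}$ is $\tilde{\Omega}(m^{2.75})$ — far over the $\Otil(m^{1.75})$ budget. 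Charging against the $O(m/\nu)$ balanced splits doesn't help either, since a single balanced split would then have to absorb an unbounded number of unsuccessful calls.

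The missing ingredient is the paper's candidate-list mechanism: maintain a list $L$ of vertices that could possibly lie inside a newly created $(\nu,k)$-extreme set, initialize $L := V$, call $\blp$ only on vertices currently in $L$, remove a vertex from $L$ once $\blp$ reports $\bot$ for it, and re-insert a vertex into $L$ only when a cut edge incident to it is deleted. The structural fact that makes this sound (Claim~\ref{lem:newcandidates}) is that removing an edge can create a new extreme set only if that set contains one of the edge's endpoints; so querying only the endpoints of removed cut edges suffices. Since each removed edge contributes at most one new candidate and is never re-inserted, the total number of $\blp$ invocations across the whole run of $\kcp$ is $O(m+n) = O(m)$, giving local cost $\Otil(m\nu^3)$ as required. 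Without this invariant there is no way to bound the number of $\blp$ calls, and the running time claim does not follow. (A secondary, more easily fixed point: the paper does not recurse on the peeled piece $S$ inside $\kcp$; it defers that to the outer $\mcp$ wrapper, which guarantees that every child has at most half the parent's volume and hence only $O(\log n)$ outer levels.)
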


\subsection{Global Algorithm using \blp}
Let us see how to use the \blp{} subroutine described above to improve the recursion depth. For the rest of this section, we omit the cardinality constraint $\sigma$. The main idea is to find as many $(\nu,k)$-sets as possible using \blp. However, it might happen that removing a $(\nu,k)$-set might give rise to new extreme sets in the graph. So, at every stage, we maintain a list of candidate vertices ($L$ as given in \mcp) which might be inside a $(\nu,k)$-extreme set. We remove all these extreme sets by using \blp{} with input as vertices $x$ from the candidate list, which either finds a $(\nu,k)$-set containing $x$ or guarantees that no extreme set containing $x$ exists.

Exhausting the candidate list ensures the residual graph has no small extreme sets, which would imply a $\nu$-balanced minimum cut. Let us prove this below.

\begin{lemma}\label{lem:nubalanced}
If there are no $(\nu,k)$-extreme sets in the graph $G=(V,E,w)$, then either
\begin{itemize}
    \item minimum cut is at least $k$ in which case graph $G$ is $k$-edge-connected.
    \item minimum cut is less than $k$ in which case it is $\nu$-balanced.
\end{itemize}
\end{lemma}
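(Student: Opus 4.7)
The plan is to prove the contrapositive in the nontrivial case: assuming the minimum cut is strictly less than $k$, we show that if it is not $\nu$-balanced then a $(\nu,k)$-extreme set must exist. So suppose for contradiction that $(S,\overline{S})$ is a minimum cut with $w(\delta(S)) < k$ and, without loss of generality, $\vol(S) \le \vol(\overline{S})$ and $\vol(S) < \nu$. Our task is to extract from $S$ an extreme subset that still witnesses both $\vol(\cdot) < \nu$ and $w(\delta(\cdot)) < k$.

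The key step is to pass from the (possibly non-extreme) witness $S$ to an extreme one inside it. Consider the family $\mathcal{F} = \{\emptyset \neq Y \subseteq S\}$. Since $\mathcal{F}$ is finite and nonempty, choose $Y^{*} \in \mathcal{F}$ that minimizes $w(\delta(Y))$, breaking ties by choosing the one with smallest $|Y|$ (any other strict tie-breaker on size such as $\vol(\cdot)$ would work equally well). Clearly $w(\delta(Y^{*})) \le w(\delta(S)) < k$ and $\vol(Y^{*}) \le \vol(S) < \nu$, so it only remains to verify that $Y^{*}$ is an extreme set.

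For the extremality check, let $Z \subsetneq Y^{*}$ be any nonempty strict subset. Since $Z \subseteq S$ as well, we have $Z \in \mathcal{F}$, so by the first-level minimization $w(\delta(Z)) \ge w(\delta(Y^{*}))$. Equality cannot hold: it would mean $Z$ is also a $w$-minimizer with strictly smaller cardinality than $Y^{*}$, contradicting the tie-breaking rule. Hence $w(\delta(Z)) > w(\delta(Y^{*}))$ for every strict nonempty $Z \subset Y^{*}$, which is exactly the definition of an extreme set. Picking any $x \in Y^{*}$ shows that $Y^{*}$ is an $(x,\nu,k)$-extreme set, contradicting the hypothesis that no such set exists.

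Thus, under the hypothesis of the lemma, if the minimum cut value is below $k$ it must satisfy $\vol(S) \ge \nu$ on its smaller side, i.e.\ it is $\nu$-balanced; otherwise the minimum cut is at least $k$ and $G$ is $k$-edge-connected by definition. There is no serious obstacle here — the only mildly subtle point is choosing the correct tie-breaker to force strict inequality in the extremality condition, since without it one only obtains a weakly minimal cut inside $S$ rather than a genuine extreme set.
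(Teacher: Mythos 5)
Your proof is correct and follows essentially the same approach as the paper: start from a small-volume side of a cut of value below $k$, then pass to an inclusion-wise minimal cut inside it to obtain an extreme set, yielding a contradiction. The paper takes an inclusion-wise minimal minimum cut and asserts that it is the unique minimum cut in $G/\overline{A}$; your two-level minimization (first by cut value, then by cardinality) is just a more explicit construction of the same object, and the tie-break by $|Y|$ is indeed what makes the inequality $w(\delta(Z)) > w(\delta(Y^*))$ strict.
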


\begin{proof}
The first part is trivial; if the graph's minimum cut is at least $k$, it is $k$-edge-connected.

If the minimum cut of the graph is less than $k$ and if every inclusion-wise minimal cut is $\nu$-balanced then we are done. Assume there exists an inclusion-wise minimum cut $(A,\overline{A})$ that is not $\nu$-balanced. Without loss of generality $vol(A)<\nu$. Because $A$ is inclusion wise minimum cut, $A$ is the unique minimum cut in $G/\overline{A}$. So $A$ is a $(\nu,k)$-extreme set which is contradiction. So $vol(A)\geq \nu$. So every inclusion wise minimal cut has volume $\geq \nu$ and so every minimum cut is $\nu$-balanced.
\end{proof}

From above \Cref{lem:nubalanced} we have proved that if there are no $(\nu,k)$-extreme sets in the graph and the graph is not $k$-edge-connected, then every minimum cut is $\nu$-balanced and thus remove at least $\nu$ volume from the graph. It only takes at most $O(m/\nu)$ recursion depth for the maximum sized sub-graph to reduce to half the original volume $m/2$. Given the high-level overview of the algorithm, we now formally define and analyze it below.

\begin{algorithm}[H]
\KwData{$G = (V,E,w),\nu,\sigma,k,L,m$}
\KwResult{$k$-cut partition of $G$}
$\hat{G} = G$\;
$\mathcal{R} = \{\}$\;
\While{There exists $x \in L$}
{
    $S=\blp(\hat{G},x,\nu,\sigma,k)$\label{mcp:3}\;
    \eIf{$S\neq \bot$}
    {
    $\mathcal{R} = \mathcal{R} \cup \{S\}$\;
    $L=(L \cup V(\delta_{\hat{G}}(S)) \setminus S$\;
    $\hat{G} = \hat{G} \setminus S$\;
    }
    {
    $L = L\setminus \{x\}$\;
    }
}
$(A,B) = \mcut(\hat{G})$\label{mcp:10}\;
\uIf{$\lambda(G)\geq k$}
{$\mathcal{R} = \mathcal{R}\cup \{\hat{V}\}$\;}
\uElseIf{$vol(\hat{G}[A]) > m$}
{$\mathcal{R} = \mathcal{R} \cup \kcp(\hat{G}[A],\nu,\sigma,k,V(\delta_{\hat{G}}(A))\cap A,m)$}\label{alg:kcp:15}
\uElseIf{$vol(\hat{G}[B]) > m$}
{$\mathcal{R} = \mathcal{R} \cup \kcp(\hat{G}[B],\nu,\sigma,k,V(\delta_{\hat{G}}(B))\cap B,m)$\;}\label{alg:kcp:17}
\Else{$\mathcal{R} = \mathcal{R} \cup \{A,B\}$\;}
\Return $\mathcal{R}$\;
\caption{\kcp}
\label{alg:kcp}
\end{algorithm}

\kcp{} partitions the graph $G = (V,E,w)$ using cuts of size less than $k$, but doesn't give us the maximal $k$-edge-connected partition. It reduces the problem to smaller-sized sub-graphs. This is proved in the following \Cref{lem:kcp:correctness}.

\begin{lemma}\label{lem:kcp:correctness}
Let $\mathcal{P}$ be the maximal $k$-edge-connected partition of $G$ and $\mathcal{R}$ be the result returned by \Cref{alg:kcp}. For all $U \in \mathcal{P}$, $\exists U' \in \mathcal{R}$ such that $U \subset U'$. In other words, $\mathcal{P}$ is a refinement of the partition $\mathcal{R}$. For all $U' \in \mathcal{R}, \vol(G[U']) = \vol(G)/2$.
\end{lemma}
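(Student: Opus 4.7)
The plan is to prove both assertions by a common inductive argument on the operations that grow $\mathcal{R}$, maintaining throughout the invariant that every $U \in \mathcal{P}$ is either already fully contained in some set of $\mathcal{R}$ or lies entirely inside the current residual vertex set $\hat{V}$. The base case $\hat{V}=V$, $\mathcal{R}=\emptyset$ is immediate because every $U \subseteq V$.

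The heart of the refinement argument is preserving this invariant each time \blp returns a set $S \neq \bot$ inside the while loop. By construction $w(\delta_{\hat{G}}(S)) < k$. I would argue that no $U \in \mathcal{P}$ with $U \subseteq \hat{V}$ can be properly split by $S$: if $U \cap S$ were a proper non-empty subset of $U$, then because $U \subseteq \hat{V}$ the edges $E_G(U \cap S, U \setminus S)$ coincide with $E_{\hat{G}}(U \cap S, U \setminus S) \subseteq \delta_{\hat{G}}(S)$, producing a cut of $G[U]$ of weight less than $k$ — contradicting the $k$-edge-connectivity of $G[U]$. Hence every such $U$ is either swallowed by $S$ (and is now permanently placed in $\mathcal{R}$) or stays entirely inside $\hat{V} \setminus S$, so the invariant survives the update $\hat{G} \leftarrow \hat{G} \setminus S$.

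Once the while loop exits, an identical cut-splitting argument applied to the global mincut $(A,B)$ of $\hat{G}$ shows that no surviving $U$ can straddle $A$ and $B$ whenever $w(\delta_{\hat{G}}(A)) < k$. The three branches of the conditional are then handled in turn: the $\lambda(\hat{G}) \geq k$ branch adds $\hat{V}$ itself and captures every remaining $U$; the else branch places each remaining $U$ into $A$ or $B$; and the two recursive branches are handled by induction on the instance size, using that each $U \subseteq A$ (resp.\ $U \subseteq B$) coming from $\mathcal{P}$ still induces a maximal $k$-edge-connected subgraph of $G[A]$ (resp.\ $G[B]$), so the inductive hypothesis yields the refinement on the recursive output.

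For the volume statement, sets produced by \blp have volume at most $\nu$ by the definition of a $(\nu,\sigma,k)$-set, while in the else branch of the mincut step the explicit threshold checks give $\vol(\hat{G}[A]), \vol(\hat{G}[B]) \leq m$; the recursive branches inherit this bound by induction because the same threshold $m$ is passed into the sub-instance. The main obstacle I anticipate is reconciling the literal equation ``$\vol(G[U']) = \vol(G)/2$'' with what the algorithm actually guarantees — I read it as $\vol(G[U']) \leq m$ where the top-level caller sets $m = \vol(G)/2$, and verifying that this inequality is preserved down the recursion (in particular in the branches where one side of the mincut is handed off to a recursive call) is the only nontrivial bookkeeping beyond the cut-splitting argument above.
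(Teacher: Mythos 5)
Your proposal is correct and follows essentially the same route as the paper: identify that every cut the algorithm uses to split $\hat{G}$ (both the $(\nu,\sigma,k)$-sets from \blp and the global mincut when the mincut value is $<k$) has weight $<k$, so it cannot straddle any maximal $k$-edge-connected set $U$; the volume bound follows because \blp-sets have volume $<\nu$ while the mincut branch recurses until neither side exceeds the threshold, which the caller sets at half the original volume. The paper's own proof is terser — it cites the cut-splitting observation already made in \Cref{lem:naive:correctness} and states the conclusion in two sentences — whereas you spell out the invariant and the induction explicitly, which is a sound and slightly more careful rendering of the same argument. You also correctly flag that the displayed equality ``$\vol(G[U']) = \vol(G)/2$'' should be read as ``$\leq$'' (and the threshold in the pseudocode as $m/2$); this is indeed an imprecision in the statement, and your interpretation matches what the algorithm and the surrounding lemmas (in particular \Cref{alg:kcp:recdepth}) actually guarantee.
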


\begin{proof}
As seen in the proof of \Cref{lem:naive:correctness}, any maximal $k$-edge-connected set of vertices $U$ cannot be separated by a cut of size less than $k$. Every cut used to partition the graph in \kcp{} is of size less than $k$. Hence every maximal $k$-edge-connected set of $G$ has to be strictly contained in one set belonging to $\mathcal{R}$.

Since we always recurse on the larger volume side of the minimum cut, the smaller side has less than half the original volume. Any subgraphs formed by local cuts have volume less than $\nu$, which we later choose to be less than $m/2$. Thus all subgraphs have at most $\vol(G)/2$.
\end{proof}

Note that every maximal $k$-edge-connected set of the original graph is also a maximal $k$-edge-connected set of the smaller sub-graph resulting from \kcp{}. Thus naturally, the algorithm for finding the maximal $k$-edge-connected partition is as follows.

\begin{algorithm}[H]
\KwData{$G = (V,E,w),k$}
\KwResult{Maximal $k$-edge-connected partition of $G$}
$\mathcal{R} = \kcp(G,\nu(|E|),\sigma(|V|),k,V,|E|)$\;
$\mathcal{P} = \{\}$\;
\For{$U \in \mathcal{R}$}
{$\mathcal{P} = \mathcal{P} \cup \mcp(G[U],k)$\;}
\Return $\mathcal{P}$\;
\caption{\mcp}
\label{alg:mcp}
\end{algorithm}

The size of the graphs for which we find the maximal $k$-edge-connected partition is at most half the original volume. So, after at most $O(\log m)$ recursive applications of \kcp{} we find every maximal $k$-edge-connected set $U \in \mathcal{P}$. Since any graph that is $k$-edge-connected cannot broken further by \kcp{}, or we end up with a single vertex which is trivially $k$-edge-connected. In the following \Cref{lem:mcp:correctness} we prove the correctness of the \Cref{alg:mcp}.

\begin{lemma}\label{lem:mcp:correctness}
\Cref{alg:mcp} computes the maximal $k$-edge-connected partition of the graph $G$.
\end{lemma}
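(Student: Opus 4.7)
The plan is to proceed by strong induction on $\vol(G)$. In the base case $G$ is already $k$-edge-connected (in particular, the trivial one-vertex case): the unique maximal $k$-edge-connected partition is $\{V\}$, and this is exactly what \mcp should return once \kcp reports $\mathcal{R}=\{V\}$. Strictly speaking the pseudocode needs to terminate when \kcp makes no further progress, since otherwise $\mcp(G[V],k)$ would loop on the same instance; a one-line guard (or equivalently, reading \Cref{lem:kcp:correctness} as asserting strict volume decrease only when splitting actually occurs) suffices.

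For the inductive step I plan to establish the identity
\[
\mathcal{P}(G) \;=\; \bigcup_{U \in \mathcal{R}} \mathcal{P}(G[U]),
\]
where $\mathcal{P}(H)$ denotes the maximal $k$-edge-connected partition of $H$ and $\mathcal{R}=\kcp(G,\nu(|E|),\sigma(|V|),k,V,|E|)$. Combined with the volume-halving bound from \Cref{lem:kcp:correctness}, namely $\vol(G[U]) \le \vol(G)/2$ for each $U \in \mathcal{R}$, the inductive hypothesis applied to each recursive call yields $\mcp(G[U],k)=\mathcal{P}(G[U])$, and taking the union then gives exactly the set $\mathcal{P}$ that \Cref{alg:mcp} returns.

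The identity follows from two containments, both exploiting that $k$-edge-connectivity of $X \subseteq V$ depends only on the induced subgraph $G[X]$. First, any $X \in \mathcal{P}(G)$ lies inside some $U \in \mathcal{R}$ by \Cref{lem:kcp:correctness}, and since $G[X]$ is $k$-edge-connected, $X$ is a $k$-edge-connected subset of $U$ in $G[U]$; any strict $k$-edge-connected superset $X' \subseteq U$ inside $G[U]$ would satisfy $G[X']=G[U][X']$, hence would also be $k$-edge-connected in $G$, contradicting maximality of $X$ in $G$. Conversely, for $X \in \mathcal{P}(G[U])$ with $U \in \mathcal{R}$, $X$ is $k$-edge-connected in $G$; if some strict $k$-edge-connected $X' \supsetneq X$ existed in $G$, then $X'$ would sit inside some $X'' \in \mathcal{P}(G)$, and by the forward containment $X'' \subseteq U'$ for a unique $U' \in \mathcal{R}$. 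Since $X \subseteq X'' \cap U$ is nonempty and $\mathcal{R}$ partitions $V$, we get $U'=U$, making $X''$ a strict $k$-edge-connected superset of $X$ inside $G[U]$, contradicting maximality of $X$ in $G[U]$.

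The main obstacle is really the termination bookkeeping rather than the structural content: the correspondence above is essentially immediate from the observation that $k$-edge-connectivity is intrinsic to an induced subgraph, so it carries over between $G$ and $G[U]$ whenever $X \subseteq U$. The only delicate point is verifying that \kcp actually produces strictly smaller recursive instances in every non-terminating call, which is exactly what \Cref{lem:kcp:correctness} buys us; once that is in hand the induction closes cleanly, and correctness of \Cref{alg:mcp} follows.
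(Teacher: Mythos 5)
Your proof is correct and takes essentially the same approach as the paper (recursion on the subgraphs produced by \kcp, plus the observation that maximal $k$-edge-connected sets are never split). Your write-up is noticeably more rigorous: the paper's proof of \Cref{lem:mcp:correctness} simply asserts that each maximal $k$-edge-connected set is ``identified'' at some recursion depth, whereas you isolate the clean structural identity $\mathcal{P}(G)=\bigcup_{U\in\mathcal{R}}\mathcal{P}(G[U])$ and prove both containments via the fact that $k$-edge-connectivity of $X$ is intrinsic to $G[X]$. Your termination concern is also legitimate: as literally written, when $\hat{G}$ is already $k$-edge-connected \kcp returns $\mathcal{R}=\{V\}$ and \Cref{alg:mcp} would then recurse on the identical instance $G[V]=G$; the paper implicitly assumes a base-case guard (treating ``the graph is $k$-edge-connected'' as terminating the recursion, consistent with the phrasing in its own proof and the $O(\log m)$ depth bound in \Cref{alg:mcp:runtime}), and your one-line fix is exactly the right remedy. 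Note also that \Cref{lem:kcp:correctness} as stated claims $\vol(G[U'])=\vol(G)/2$ for all $U'\in\mathcal{R}$, which should be read as $\le\vol(G)/2$ (and does not hold in the $k$-edge-connected base case), so your reading of it as strict volume decrease only on a genuine split is the correct interpretation.
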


\begin{proof}
From \Cref{lem:kcp:correctness}, we have seen that any maximal $k$-edge-connected set is strictly contained in one set of the partition returned by \kcp{}. However, according to the guarantee of \kcp{} volume of each component is reduced by half of the original graph. Hence every component has volume at most $m/2^i$ at a recursion depth of $i$. So any particular set $S$ from maximal $k$-edge-connected partition with $vol(G[S]) \in [\frac{m}{2^{i+1}},\frac{m}{2^i}]$ is identified atleast at a recursion depth of $i+1$, as none of the cuts break the set $S$ and so after $O(\log n)$ depth each component either has become $k$-edge-connected and so has become part of the partition $\mathcal{P}$ or the volume becomes $0$ i.e., a singleton vertex which is trivially $k$-edge-connected.
\end{proof}

\subsection{Runtime analysis}
In this subsection, we analyze the run time of the \Cref{alg:kcp} and then \Cref{alg:mcp}.

We divide the running time of \kcp{} into two parts: the invocation of the local subroutines and the invocations of the global minimum cut algorithm. For the first part, since we only apply local subroutine with seed vertices from the candidate list $L$ in graph \kcp{}. We can bound the running time by bounding the number of vertices added to $L$. We start with candidate list $L = V$ initially as $(\nu,k)$-extreme sets can contain any vertex in $V$.

Once a cut of size less than $k$ is found (using either \blp{} or \mcut{}), we remove the smaller side from the graph $\hat{G}$ and find cuts in the residual graph. Because we are removing some edges from the graph $\hat{G}$, it can lead to the origin of new extreme sets that might not be present before. So we need to update $L$ to account for new extreme sets that might arise due to the removal of the cut edges. It is unwarranted and time-consuming to iterate over all vertices to find new extreme sets. Similar to the idea from \cite{chechik2017faster}, we prove that any new extreme sets that arise due to the removal of an edge from the graph should contain at least one of the edge's endpoints. We add this vertex to the candidate list so that we can find a $(\nu,k)$-set surrounding this vertex if the new extreme set is $(\nu, k)$-extreme set.

\begin{claim}\label{lem:newcandidates}
Any new extreme sets introduced in the graph by removing an edge must contain at least one endpoint of that edge.
\end{claim}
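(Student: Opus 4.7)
The plan is to argue by contrapositive: I will show that if a set $X$ contains \emph{no} endpoint of the removed edge $e = (u,v)$, then $X$ is extreme in $G' = G - e$ if and only if $X$ was already extreme in $G$, so $X$ cannot be a ``new'' extreme set.

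First, I would unpack the definition. Recall that $X$ is extreme in a graph $H$ iff for every strict subset $Y \subsetneq X$, $w_H(\delta_H(Y)) > w_H(\delta_H(X))$. Thus checking whether extremality of $X$ changes between $G$ and $G'$ amounts to checking whether the cut values $w(\delta(Y))$ change for some $Y \subseteq X$.

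The key observation is purely combinatorial. If neither $u$ nor $v$ lies in $X$, then for every $Y \subseteq X$ we also have $u, v \notin Y$, so $e \notin \delta(Y)$; removing $e$ therefore leaves $w(\delta(Y))$ unchanged for every such $Y$, and in particular for $Y = X$ itself. Consequently $w_{G'}(\delta(Y)) = w_G(\delta(Y))$ for all $Y \subseteq X$, and the extremality condition for $X$ evaluates identically in $G$ and $G'$. Hence $X$ is extreme in $G'$ iff it is extreme in $G$, which contradicts the assumption that $X$ is a \emph{new} extreme set introduced by the deletion of $e$.

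I expect no real obstacle here: the statement follows directly from the locality of the definition of extreme set (it only involves cuts of subsets of $X$), together with the observation that deleting an edge disjoint from $X$ affects none of those cuts. In the proof I will also briefly note (though it is not needed for the stated claim) that one can sharpen the conclusion to ``exactly one endpoint of $e$ lies in $X$'', because if both endpoints of $e$ lie in $X$ then $w_{G'}(\delta(X)) = w_G(\delta(X))$ while $w_{G'}(\delta(Y)) \le w_G(\delta(Y))$ for every $Y \subsetneq X$, so a violation of extremality in $G$ persists in $G'$; however, for the present claim it suffices to add just one endpoint of each removed cut edge to the candidate list $L$, which matches the update rule used in \kcp.
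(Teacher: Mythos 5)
Your proposal is correct and follows essentially the same route as the paper: both argue via the observation that extremality of $X$ depends only on cut values $w(\delta(Y))$ for $Y \subseteq X$, and that if $e$ has no endpoint in $X$ then none of those cuts change, so $X$'s extremality status is preserved. Your additional remark that the conclusion can be sharpened to ``exactly one endpoint'' corresponds to the paper's case analysis of an internal edge $e \in G[A]$; both arguments show that in that case extremality in $G-e$ already implies extremality in $G$, so the set is not new.
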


\begin{proof}
Let $G$ be the graph before removing the edge $e$ and $H$ be the graph after removing. Removal of an edge $e$ may lead to the formation of new extreme set $A$ only if it belongs to the cut edges of the new extreme set i.e., $e\in \delta_G(A) = \delta_{G/\overline{A}}(A)$. If it is an internal edge of the new extreme set, i.e., $e \in G[A]$, then even before the removal of the edge, it is an extreme set, and hence it is not formed by removing the edge $e$. If it belongs to $G[\overline{A}]$, it does not affect whether $A$ is an extreme set or not as we consider the graph $G/\overline{A}$ in deciding whether $A$ is an extreme set. Since it is the cut edge of the extreme set, then there exists an endpoint of the edge $e$ in the set, proving the claim.
\end{proof}

Once the candidate list is exhausted, we are sure (with high probability) that there are no extreme sets in the residual graph; thus, we apply the global minimum cut to get a $\nu$-balanced cut and recurse on the larger side, as shown in \Cref{alg:kcp:15,alg:kcp:17}. We now bound the recursion depth of \kcp{} in the below lemma.

\begin{claim}\label{alg:kcp:recdepth}
\Cref{alg:kcp} has a recursion depth of at most $O(m/\nu)$ for a $m$ edge graph.
\end{claim}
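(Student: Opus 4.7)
The plan is to show that each recursive call of \kcp strictly decreases the graph volume by at least $\nu$, so that starting from a graph of volume $2m$ the recursion cannot go deeper than $2m/\nu = O(m/\nu)$ levels.

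First I would fix notation. Let $G_0$ denote the graph passed to the top-level call of \kcp, and inductively let $G_{i+1}$ denote the graph on which \kcp is invoked in the (unique) recursive call spawned by the $i$-th level (by the structure of \Cref{alg:kcp:15,alg:kcp:17} at most one recursive call is issued per level). The goal is to prove
\[
\vol(G_{i+1}) \le \vol(G_i) - \nu.
\]
Given this, $\vol(G_i) \le 2m - i\nu$, and since volumes are non-negative we immediately get $i = O(m/\nu)$.

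Next I would establish the per-level volume drop. Fix level $i$ and let $\hat G$ be the value of the variable $\hat G$ in that call at the moment the \texttt{while} loop exits, i.e.\ after every $(\nu,\sigma,k)$-set detected by \blp has been removed. Two observations drive the argument:
\begin{enumerate}
\item Since $\hat G$ is obtained from $G_i$ by removing vertex sets along with incident edges, $\vol(\hat G) \le \vol(G_i)$.
\item When the loop exits, every vertex of the candidate list $L$ has been processed by \blp and \Cref{lem:blp:correctness} ensures (w.h.p.) that no $(\nu,k)$-extreme set remains in $\hat G$. Invoking \Cref{lem:nubalanced}, either $\hat G$ is $k$-edge-connected, in which case no recursive call is made, or the minimum cut $(A,B)$ computed on \Cref{mcp:10} is $\nu$-balanced, meaning $\vol_{\hat G}(A),\vol_{\hat G}(B) \ge \nu$.
\end{enumerate}
The recursive call in \Cref{alg:kcp:15} or \Cref{alg:kcp:17} then passes $\hat G[A]$ or $\hat G[B]$; wlog assume it is $\hat G[A]$. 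Because $\hat G[A]$ excludes $B$ entirely, $\vol(\hat G[A]) \le \vol(\hat G) - \vol_{\hat G}(B) \le \vol(\hat G) - \nu \le \vol(G_i) - \nu$, which is exactly the required inequality. (The extreme sets peeled off by \blp, each of volume $<\nu$, only strengthen the drop but are not needed for the bound.)

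The main obstacle I expect to argue carefully is the step establishing that $\hat G$ is genuinely free of $(\nu,k)$-extreme sets when the \texttt{while} loop exits: this requires verifying that the maintenance of the candidate list $L$ -- seeding $L$ with the endpoints of every deleted cut edge via \Cref{lem:newcandidates} -- captures every vertex that could lie in a newly created extreme set, so that by the time $L$ is exhausted no small extreme set can hide. Once that invariant is in place, \Cref{lem:nubalanced} supplies the $\nu$-balanced mincut and the volume-reduction inequality above closes the argument, giving a recursion depth of $O(m/\nu)$.
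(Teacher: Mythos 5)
Your proof is correct and follows essentially the same route as the paper's: once the while loop exhausts the candidate list $L$, no $(\nu,k)$-extreme sets remain (w.h.p.), so by \Cref{lem:nubalanced} the global mincut is $\nu$-balanced, the recursive call's volume drops by at least $\nu$ each level, and thus the depth is $O(m/\nu)$. Your write-up is somewhat more careful about notation and about why exactly one recursive call is spawned per level, but the key lemmas invoked and the volume-drop argument are identical to the paper's.
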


\begin{proof}
At the end of the while loop, we have the guarantee that there are no $(\nu,k)$-extreme sets, which means from \Cref{lem:nubalanced} we either have that the minimum cut is at least $k$ in which case we are done by adding the vertex set of the current graph to the $\mathcal{R}$. Otherwise, the minimum cut is $\nu$-balanced. So the largest component is decreased by at least $\nu$ volume when a global mincut is called. Since initially we have volume $m$, and at each recursion depth, we reduce the volume of the maximum component by at least $\nu$, so it only takes at most $O(m/\nu)$ recursion depth for the volume of the largest component to be less than $m/2$. We stop the recursion once all the components are of volume at most half the original volume by just adding both sides of the minimum cut to $\mathcal{R}$.
\end{proof}

We analyze the total running time of \kcp{} by separating the calls to \blp{} local subroutine from the remaining part of the \kcp{}.

\begin{claim}\label{alg:kcp:localcalls}
\Cref{alg:kcp} invokes \blp{} at most $O(m)$ times during the running of the whole algorithm i.e. over all $O(m/\nu)$ recursion levels.
\end{claim}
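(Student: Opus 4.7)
The plan is to amortize the number of while-loop iterations of \kcp against the number of edges that are permanently removed from the working graph $\hat{G}$ throughout the entire execution. Since each iteration of the while loop calls \blp exactly once, this will give the desired $O(m)$ bound.

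The starting observation is that every iteration of the while loop removes at least one vertex from the candidate list $L$: in a successful iteration the seed $x\in L$ lies in the returned set $S$ and is deleted by the update $L\gets (L\cup V(\delta_{\hat{G}}(S)))\setminus S$ (since $x\in S$, $x$ is not reinserted in the same step), while in a failing iteration the update $L\gets L\setminus\{x\}$ removes $x$ directly. Consequently, within a single invocation of \kcp the total number of iterations is at most $|L_{\text{init}}| + (\text{total vertex-additions to } L \text{ during that call})$, where $L_{\text{init}}$ denotes the candidate list passed into the call. It therefore suffices to bound both quantities summed over all recursive invocations of \kcp.

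For the additions, each successful iteration contributes at most $|V(\delta_{\hat{G}}(S))\setminus S|\le |\delta_{\hat{G}}(S)|$, and the $|\delta_{\hat{G}}(S)|$ edges of $\delta_{\hat{G}}(S)$ are permanently deleted from $\hat{G}$ once $S$ is removed. For the initial candidate lists: the root call issued by \mcp contributes $|L_{\text{init}}|=|V|\le O(m)$, while each recursive call (\Cref{alg:kcp:15} or \Cref{alg:kcp:17}) is supplied with $|L_{\text{init}}|=|V(\delta_{\hat{G}}(A))\cap A|\le |\delta_{\hat{G}}(A)|$, and the edges of $\delta_{\hat{G}}(A)$ are likewise absent from the recursive subgraph $\hat{G}[A]$. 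Since any edge of the original input graph can be removed from $\hat{G}$ at most once during the whole execution, both $\sum_{\text{successful calls}} |\delta_{\hat{G}}(S)|$ and $\sum_{\text{recursive calls}} |\delta_{\hat{G}}(A)|$ are at most $m$, yielding a total iteration count of $n + m + m = O(m)$.

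The only place the argument needs care is verifying that the edges charged to \blp successes and those charged to recursive-entry candidate lists are genuinely distinct removals, so that each original edge is counted at most once across the two sums. This follows directly from the flow of the algorithm: an edge is charged to a \blp success only in the iteration where it leaves $\hat{G}$, and it is charged to a recursive entry only if it belongs to the mincut $\delta_{\hat{G}}(A)$ of the currently-held $\hat{G}$, at which point it is also dropped from $\hat{G}[A]$; once gone, the edge cannot reappear at any deeper level and hence cannot be double-counted.
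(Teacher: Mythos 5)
Your proof is correct and follows essentially the same charging scheme as the paper: bound the number of while-loop iterations by the total number of vertices ever present in the candidate list, and bound vertex insertions into $L$ by permanently removed edges of $\hat{G}$, each of which is removed at most once across the whole recursion. You are in fact somewhat more careful than the paper's exposition in two places that are worth having explicit: (i) you state the precise potential argument, namely that each iteration removes the seed $x$ from $L$ (whether \blp succeeds or fails), so iterations $\le |L_{\text{init}}| + \text{additions}$; and (ii) you separately account for the $L_{\text{init}}$ passed to recursive \kcp calls via $V(\delta_{\hat{G}}(A))\cap A$, charging those vertices to the mincut edges removed at the recursion boundary. The paper glosses both of these into "every edge introduces a new vertex into $L$, and no edge becomes a cut edge again," which yields the slightly tighter count $n+m$ (each removed edge contributes only its one endpoint on the retained side) versus your $n+2m$, but both are $O(m)$, so the discrepancy is immaterial.
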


\begin{proof}
Local subroutine \blp{} is always invoked on vertices that belong to $L$. So it is convenient to bound the number of vertices added to the candidate list during the whole algorithm course.

We initially start with the $L=V$, the whole vertex set of size $n$. When each $(\nu, k)$-set is separated from $\hat{G}$, we add the endpoints of the cut edges on the larger side to the candidate list. From $\Cref{lem:newcandidates}$, we know that any new extreme sets introduced due to removing cut edges should contain endpoints of cut edges. Hence it is enough to check for $(\nu,k)$-extreme sets just around the cut edges endpoints. As per the guarantee of \blp{} we find the $(\nu,k)$-sets with high probability if any $(\nu,k)$-extreme sets exist. Since we are removing the $(\nu,k)$-set from the graph, the extreme set is not present in the residual graph.

Once a cut edge is removed from the graph, it cannot be part of any of the sub-graphs present as a part of the later recursion. So every edge introduces a new vertex into list $L$, and no edge becomes a cut edge again. Hence a trivial bound on the number of vertices added to the candidate list is, at most, the number of edges. So total number of invocations of \blp{} subroutine is at most $m + n$ which is $O(m)$.
\end{proof}

Since number of invocations to \blp{} is $O(m)$, we can union bound the failure probability over all the invocations of \blp{} which is $\frac{O(m)}{n^c}$ still leading to a total error probability of at most $\frac{1}{\poly(n)}$ for sufficiently large constant $c$. Similarly, we can bound the error probability for $O(m/\nu)$ invocations of the global mincut algorithm. Since the sizes of the graphs during $O(m/\nu)$ invocations of the algorithm are at least $\Omega(n^{O(1)})$. If the error probability of global mincut algorithm is at most $\frac{1}{n^c}$ for some sufficiently large constant $c$ then error probability over all the $O(m/\nu)$ invocations is at most $\frac{1}{\poly(n)}$ as required. This proves that the error probability of \Cref{alg:kcp} is at most $\frac{1}{\poly(n)}$.

\begin{lemma}\label{alg:kcp:runtime}
\Cref{alg:kcp} runs in time at most $O(m^{1.75}\log^{2.75} n)$ for an $m$ edge $n$ vertex weighted graph.
\end{lemma}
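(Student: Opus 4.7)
The plan is to split the running time of \Cref{alg:kcp} into two contributions: the total cost incurred by calls to the local subroutine \blp, and the total cost incurred by the global minimum cut computations together with the recursive overhead. The two preceding claims, \Cref{alg:kcp:recdepth} and \Cref{alg:kcp:localcalls}, already give the combinatorial bounds needed; the remaining work is just to plug in per-call running times and balance the parameter $\nu$.

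First I would account for the local cost. By \Cref{alg:kcp:localcalls}, the total number of invocations of \blp{} across the entire recursion tree of \kcp{} is $O(m)$. Each invocation runs in time $O(\nu\sigma^{2}\log^{2}n)$ by \Cref{thm:sec4}, and since any vertex set of volume less than $\nu$ is connected and hence contains fewer than $\nu$ vertices, we may substitute the trivial bound $\sigma \le \nu$ to obtain the per-call cost $O(\nu^{3}\log^{2}n)$ (as already recorded in \Cref{thm:blp}). Summing over all invocations, the total local cost is $O(m\nu^{3}\log^{2}n)$.

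Next I would account for the global cost. By \Cref{alg:kcp:recdepth}, the recursion depth of \kcp{} is $O(m/\nu)$, and each recursive call performs exactly one global minimum cut at line~\ref{mcp:10}. Using Karger's near-linear mincut algorithm \cite{karger2000minimum}, each such call takes $O(m\log^{3}n)$ time, so the total global cost across all recursion levels is $O((m/\nu)\cdot m\log^{3}n) = O(m^{2}\log^{3}n/\nu)$.

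Adding the two contributions gives an overall running time of $O(m\nu^{3}\log^{2}n + m^{2}\log^{3}n/\nu)$. Setting $\nu = (m\log n)^{1/4}$ equalizes the two terms and yields
\[
O\!\left(m\cdot(m\log n)^{3/4}\log^{2}n\right) = O(m^{7/4}\log^{11/4}n) = O(m^{1.75}\log^{2.75}n),
\]
as claimed. The only subtle point is confirming that the $O(m)$ bound on \blp{} invocations is genuinely global across the whole recursion tree of \kcp{} rather than per recursive call; this is exactly the content of \Cref{alg:kcp:localcalls}, which uses the fact that each cut edge contributes $O(1)$ new candidates to $L$ when separated and is never re-cut in deeper recursive calls. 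Everything else is a straightforward substitution and parameter balance.
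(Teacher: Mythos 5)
Your proposal is correct and follows essentially the same route as the paper's proof: bound the total local cost by multiplying the $O(m)$ invocation bound from \Cref{alg:kcp:localcalls} by the $O(\nu^3\log^2 n)$ per-call cost from \Cref{thm:blp}, bound the total global cost by multiplying the $O(m/\nu)$ recursion depth from \Cref{alg:kcp:recdepth} by the $O(m\log^3 n)$ cost of Karger's mincut, then balance with $\nu = (m\log n)^{1/4}$. Your write-up is if anything a bit cleaner in separating the two contributions, but the argument is identical.
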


\begin{proof}
From \Cref{alg:kcp:localcalls} we have bounded the number of calls to the \blp{} subroutine which is at most $O(m\nu^3 \log^2 n)$ from \Cref{thm:blp}.

At each recursion level of \kcp{}, we apply the global mincut algorithm on a graph with edges less than $m$ and vertices less than $n$. Hence we can trivially bound the run time due to the global mincut being at most $O(m/\nu)\cdot O(m\log^3 n)$.

So the total run time of the \kcp{} is at most $O(m\nu^3 \log^2 n + \frac{m^2 \log^3 n}{\nu})$. Optimizing with $\nu = O((m\log n)^{1/4})$ the run time of \kcp{} would be $O(m^{1.75}\log^{2.75} n)$.
\end{proof}

Now, we analyze the running time of \Cref{alg:mcp}.

\begin{lemma}\label{alg:mcp:runtime}
\Cref{alg:mcp} runs in time at most $O(m^{1.75} \log^{3.75} n)$ on a $m$ edge $n$ vertex weighted graph.
\end{lemma}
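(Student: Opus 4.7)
The plan is to bound the running time of \Cref{alg:mcp} by bookkeeping work across the levels of the natural recursion it induces, treating each call to \kcp{} as a black box whose cost is controlled by \Cref{alg:kcp:runtime}.

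First, I would view \mcp{} as a recursion tree: the root calls \kcp{} on the input $G$ to obtain a partition $\mathcal{R}$, and then recurses on $G[U]$ for each $U \in \mathcal{R}$. By \Cref{lem:kcp:correctness}, every $U \in \mathcal{R}$ satisfies $\vol(G[U]) \le \vol(G)/2$, so the depth of this recursion tree is $O(\log m)$. Moreover, since the sets in $\mathcal{R}$ are vertex-disjoint, the subgraphs processed at a single recursion level of \mcp{} are edge-disjoint, and the sum of their edge counts is at most $m$.

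Next, I would bound the total work done per level. If the subgraphs at a fixed level have edge counts $m_1, m_2, \ldots$ with $\sum_i m_i \le m$ and each $m_i \le m$, then by applying \Cref{alg:kcp:runtime} to each subgraph the total work at that level is
\[
\sum_i O(m_i^{1.75} \log^{2.75} n) \;\le\; O\!\left(\log^{2.75} n\right) \cdot \max_i m_i^{0.75} \cdot \sum_i m_i \;\le\; O(m^{1.75} \log^{2.75} n),
\]
using the elementary inequality $\sum_i m_i^{1.75} \le (\max_i m_i)^{0.75} \sum_i m_i$.

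Finally, I would multiply the per-level bound by the $O(\log m) = O(\log n)$ depth of the recursion to obtain the overall running time of $O(m^{1.75} \log^{3.75} n)$. Correctness is already handled by \Cref{lem:mcp:correctness}, and the high-probability guarantee follows by a union bound over the $O(n)$ invocations of \kcp{} inside the recursion tree, each of which fails with probability at most $1/\poly(n)$. The main (mild) obstacle is the convexity-type step bounding $\sum_i m_i^{1.75}$ cleanly; I expect no further difficulties since all the heavy lifting lies in \Cref{alg:kcp:runtime}.
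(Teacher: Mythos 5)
Your proof is correct and takes essentially the same route as the paper's: a recursion of depth $O(\log m)=O(\log n)$ driven by the volume-halving guarantee of \kcp{}, a per-level bound of $\sum_i m_i^{1.75}\log^{2.75} n \le m^{1.75}\log^{2.75} n$ using the disjointness of the subgraphs, and then multiplying by the depth. The paper states this as a recurrence $T(m)=\sum_i T(m_i)+O(m^{1.75}\log^{2.75}n)$ while you phrase it as a per-level sum, but the underlying decomposition and the convexity step $\sum_i m_i^{1.75}\le (\max_i m_i)^{0.75}\sum_i m_i\le m^{1.75}$ are the same.
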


\begin{proof}
From the guarantee of the \Cref{alg:kcp}, we have that the volume of the graph induced on each set that belongs to $\mathcal{R}$ is half the original volume. Hence the recursion depth of the \mcp{} subroutine is at most $\log m = O(\log n)$.

Let us then bound the run time taken by \mcp{} at every recursion level. The volume of all the graphs induced on the partitioned sets is bounded by $m$. Let $m_i(<m/2), n_i$ be the number of edges and vertices in the $i^{th}$ graph induced on the partition of vertices after applying \kcp{} subroutine. Because these form a partition of vertices of the original graph and we only consider induced graphs on all these partitions $\sum_{i}n_i = n$, total volume is bounded by $m$, i.e. $\sum_{i}m_i \leq m$.

Hence we have the following recurrence equation.
\[T(m) = \sum_{i}T(m_i) + O(m^{1.75}\log^{2.75}n)\]
\[m_i \leq m/2 \text{ for all i}\]

Since $\sum_{i}O(m_i^{1.75}\log^{2.75}n_i) \leq O(m^{1.75}\log^{2.75}n)$. Hence it takes $O(m^{1.75}\log^{2.75}n)$ at each recursion level and there are at most $O(\log n)$ recursion level. Therefore the run time of the \mcp{} subroutine is at most $O(m^{1.75}\log^{3.75}n)$.
\end{proof}

\begin{proof}[Proof of \Cref{mcp:mainthm}]
From \Cref{lem:mcp:correctness} we have \Cref{alg:mcp} that computes the maximal $k$-edge-connected partition of the graph $G=(V,E,w)$ with parameter $k$ as input, which takes running time at most $O(m^{1.75}\log^{3.75}n)$ from \Cref{alg:mcp:runtime} proving \Cref{mcp:mainthm}.
\end{proof}

\section{Local algorithm for small cardinality cuts}\label{sec:localalg2}

The main goal of this section is to prove \Cref{thm:local main} (\ref{enu:local size}). In this section, we only care about $(\sigma,k)$-sets, small cardinality cuts of small cut size. However, they can have an arbitrarily high volume when the degree of the vertices is large. Hence applying \Cref{alg:blp} would lead to a large running time. Thus we overcome this by doing additional book-keeping to find $(\sigma,k)$-sets in $\Otil(\sigma^4)$ time.

We describe the data structure called sorted adjacency list for book-keeping in \Cref{sec:6.1} and then use it to speed up the \blp{} in \Cref{sec:6.2}.

\subsection{Sorted adjacency list}\label{sec:6.1}
Given a graph $G = (V,E,w)$ with a rank function $r:E\rightarrow [0,1]$ undergoing edge deletions we maintain adjacency lists of each vertex sorted according to rank. Along with a sorted adjacency list, we also maintain an adjacency matrix to check whether an edge is present or not in the graph in $O(1)$ time. We initially start with empty adjacency lists $\mathcal{L}$ and empty adjacency matrix $\mathcal{M}$, and we can modify our data structure using the following operations.

\begin{tcolorbox}
\label{SAL}
\begin{itemize}
    \item \ins($e=(u,v)$):
    \begin{enumerate}
        \item Add edge to the adjacency matrix $\mathcal{M}$.
        \item Insert edge to adjacency lists of $\mathcal{L}[u],\mathcal{L}[v]$. As the list is sorted according to rank and of size at most $\Delta$, it takes at most $O(\log \Delta)$ time.
    \end{enumerate}
    \item \del($e=(u,v)$):
    \begin{enumerate}
        \item Mark the edge deleted in adjacency matrix $\mathcal{M}$.
        \item We can store the pointer to edge $e$ in the adjacency matrix thus we can find it in $O(1)$ time and can also delete in adjacency lists $\mathcal{L}[u],\mathcal{L}[v]$ in $O(1)$ time if we maintain $\mathcal{L}[u],\mathcal{L}[v]$ as doubly linked lists.
    \end{enumerate}
    \item \nextedge($X$):
    \begin{enumerate}
        \item Return $\min \{x\in X : \min(\mathcal{L}[x])\}$ in $O(|X|)$ time.
    \end{enumerate}
\end{itemize}
\end{tcolorbox}

We construct the sorted adjacency list of the graph $G$ with respect to a rank function $r$ by inserting (\ins) each edge into the data structure, which takes a total run time of at most $O(m\log \Delta) = O(m\log n)$.

As seen in \blp{} to boost the probability of finding the extreme sets, we repeat the \lp{} sub-routine $O(\sigma^2 \log n)$ times with independent rank functions. So we will need to individually maintain the above data structure for all the $O(\sigma^2 \log n)$ rank functions.

It takes a total time $O(m \log n)\cdot O(\sigma^2 \log n) = O(m\sigma^2 \log^2 n)$ to construct the data structure with respect to all rank functions. In the next subsection, we see how to use the data structure to speed up \lp{}.

\subsection{Speedup \lp{}}\label{sec:6.2}
We will formalize how to use the sorted adjacency list below to improve the run time of \lp{}. We omit the volume constraint $\nu$ while expanding the set $X$ in the algorithm, as we only care about small cardinality cuts. The rest of the algorithm remains the same.

\begin{algorithm}[H]
\KwData{$G = (V,E,w),x,\sigma,k,r$ where $r$ is a rank function $r:E\rightarrow [0,1]$}
\KwResult{$(x,\sigma,k)$-set respecting $r$ if exists else $\bot$}
$X = \{x\}$\;
\While{$w(\delta(X)) \geq k$ \sout{and $vol(X) < \nu$} and $|X| < \sigma$ \label{alg:lpimproved:3}}
{
    Find the edge with minimum rank, $e = (u,v) \in E(X,V\setminus X) = \delta(X)$\;\label{alg:lpimproved:4}
    Update $X = X \cup \{v\}$\;
}
\eIf{$w(\delta(X)) < k$}
{\Return{$X$}}
{\Return{$\bot$}}
\caption{\lp}
\label{alg:lpimproved}
\end{algorithm}

In \Cref{improved:localprimruntime}, we analyze the running time of \lp{} that uses the sorted adjacency list.

\begin{claim}\label{improved:localprimruntime}
\lp{} takes at most $O(\sigma^2)$ time using sorted adjacency list.
\end{claim}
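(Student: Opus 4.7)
The plan is to show that the outer while loop runs at most $\sigma$ times and that each iteration admits an amortized $O(\sigma)$ implementation on top of the sorted-adjacency-list data structure. The loop bound is immediate, since $|X|$ starts at one, grows by exactly one per iteration when a new vertex $v$ is added, and the guard $|X| < \sigma$ forces termination once $|X| = \sigma$. It therefore suffices to charge $O(\sigma)$ amortized work per iteration, contributing $O(\sigma^2)$ total.

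To implement \Cref{alg:lpimproved:4}, I would search for the minimum-rank edge in $\delta(X)$ by calling \nextedge$(X)$, which in $O(|X|)$ time peeks at the top of $\mathcal{L}[x]$ for each $x \in X$ and returns the overall minimum. The subtlety is that the top of some $\mathcal{L}[x]$ may be an internal edge $(x, y)$ with $y \in X$, which is not in $\delta(X)$ and must be skipped. I would handle this lazily: while scanning, check $y \in X$ in $O(1)$ using a boolean array maintained alongside $X$; if the top is internal, call \del to remove it from both $\mathcal{L}[x]$ and $\mathcal{L}[y]$ in $O(1)$, then retry. Each edge becomes internal at most once, and the total number of internal edges encountered across the run is bounded by $\binom{\sigma}{2}$, so these lazy deletions contribute $O(\sigma^2)$ in total.

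The main obstacle is evaluating the loop guard $w(\delta(X)) \ge k$ without scanning the full adjacency list of the newly added $v$, whose degree may be $\Omega(n)$. I plan to maintain $w(\delta(X))$ incrementally using the identity
\[
w(\delta(X \cup \{v\})) \;=\; w(\delta(X)) + w(\delta(\{v\})) - 2\, w(E(v, X)),
\]
where $E(v, X)$ denotes the edges from $v$ to $X$. The singleton cut values $w(\delta(\{v\}))$ can be precomputed for every $v$ during the preprocessing phase in $O(m)$ time and absorbed into the stated preprocessing budget. Upon adding $v$, I would then compute $w(E(v, X))$ by iterating $u \in X$ and querying the adjacency matrix $\mathcal{M}$ in $O(1)$ per pair, adding $w(v, u)$ whenever the edge exists, in total time $O(|X|) = O(\sigma)$.

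Summing the three sources of work, namely the $O(|X|)$ scan by \nextedge per iteration, the $O(|X|)$ cut-value update per iteration, and the amortized $O(\sigma^2)$ total for lazy deletion of internal edges, yields the claimed $O(\sigma^2)$ running time.
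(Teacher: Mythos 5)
Your proposal is correct and takes essentially the same route as the paper: bound the number of while-loop iterations by $\sigma$, charge $O(|X|)$ per iteration for \nextedge and for the incremental cut-value update via precomputed weighted degrees and $O(1)$ adjacency-matrix lookups, and charge $O(\sigma^2)$ total for removing internal edges. The only cosmetic difference is that the paper deletes the newly internal edges eagerly (all $\le |X|$ of them immediately after adding $v$, at $O(|X|)$ per iteration), whereas you defer these deletions lazily inside \nextedge and amortize them against $\binom{\sigma}{2}$; both give the same $O(\sigma^2)$ bound and both are valid implementations of the same accounting.
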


\begin{proof}
In each iteration of the while loop, we find the next edge to expand the set $X$ along using \nextedge{}, which takes $O(|X|)$ time. Let $v$ be the endpoint of the edge that is not in $X$. To maintain $w(\delta(X))$ we need to add the $\deg(v)$ (weighted degree) which takes at most $O(1)$ time and delete the weight of the edges that are incident on $v$ from vertices currently present in $X$. So it takes at most $O(|X|)$ to compute the cut size after contracting $v$.

After expanding the set by including the vertex $v$, we need to delete the edges whose both endpoints are in $X$ as the next edge to expand set $X$ should not be from the internal edges of set $X$. After expanding set $X$ with $v$, there are at most $|X|$ many edges that are becoming internal edges (edges whose both endpoints lie in $X$) and need to be deleted (\del) from the data structure. So it takes at most $O(|X|)$ time.

So each iteration of the while loop takes $O(|X|)$ time. As $|X|$ ranges from $1$ to at most $\sigma$ the total run time is at most $O(\sigma^2)$.
\end{proof}

Note that we are modifying the sorted adjacency list data structure for a particular rank function $r$. So we maintain a journal of the edits made to the data structure, which are at most $O(\sigma^2)$ as we only delete the internal edges. Thus we can undo the changes once we have completed the sub-routine \lp. So it still takes $O(\sigma^2)$ time to run the sub-routine and maintains the data structure intact.

Similar to \lp{}, we modify \blp{} to find small cardinality local cuts of small cut size. Note that the \lp we use here is the one for finding small cardinality cuts.

\begin{algorithm}[H]
\KwData{$G = (V,E,w),x,\sigma,k$}
\KwResult{$(x,\sigma,k)$-set or $\bot$ if no $(x,\sigma,k)$-extreme set exists}
\For{$ i \in [O(\sigma^2 \log n)]$}
{
    $X = \lp(G,x,\sigma,k,r^i)$\;
    \If{$X \neq \bot$}
    {
        \Return $X$\;
    }
}
\Return $\bot$\;
\caption{\blp}
\label{alg:blpimproved}
\end{algorithm}

From \Cref{lem:blp:runtime} we can conclude the following \Cref{thm:blpimproved}.

\begin{theorem}
\label{thm:blpimproved}
There exists a randomized algorithm given a graph $G = (V,E,w)$, a vertex $x$ and parameters $\sigma,k$ finds either a $(x,\sigma,k)$-set or guarantees that no $(x,\sigma,k)$-extreme set exists with high probability in time $O(\sigma^4\log n)$.
\end{theorem}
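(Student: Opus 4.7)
The plan is to mirror the proof of \Cref{thm:sec4} but substitute the $O(\sigma^2)$-time implementation of \lp from \Cref{improved:localprimruntime} in place of the $O(\nu \log \nu)$-time implementation used in the volume-bounded setting. Concretely, I would argue correctness of \Cref{alg:blpimproved} via a union bound, then multiply its per-iteration cost by the number of rank functions tried.

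First, I would verify the analog of \Cref{lem:lpreturnsS} for the modified \lp of \Cref{alg:lpimproved}: it is identical to the original except that the volume stopping condition is removed, and the remaining stopping conditions ($w(\delta(X)) < k$ and $|X| < \sigma$) are exactly what we need. Thus if a $(x,\sigma,k)$-extreme set $S$ respects the given rank function $r$, Prim-style growth from $x$ along minimum-rank cut edges stays inside $S$ until either it encounters a subset $X \subseteq S$ with $w(\delta(X)) < k$ (in which case $X$ is returned and, since $X \subseteq S$, we have $|X| \le |S| < \sigma$, so $X$ is a valid $(x,\sigma,k)$-set), or it reaches $X = S$ itself and returns $S$. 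In particular \lp never returns $\bot$ when a respecting extreme set exists.

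Second, for probability amplification I would reuse \Cref{lem:rrrprob}: any fixed $(x,\sigma,k)$-extreme set $S$ respects a single random rank function drawn from $R$ with probability at least $2/(\sigma(\sigma-1))$. Since the $\Theta(\sigma^2 \log n)$ rank functions $r^1, r^2, \dots$ used in \Cref{alg:blpimproved} are independent, the probability that none of them is respected by $S$ is at most
\[
\left(1 - \frac{2}{\sigma(\sigma-1)}\right)^{\Theta(\sigma^2 \log n)} \le \frac{1}{n^{\Theta(1)}}.
\]
Hence, conditional on the existence of a $(x,\sigma,k)$-extreme set, \Cref{alg:blpimproved} returns a $(x,\sigma,k)$-set with high probability; and if no $(x,\sigma,k)$-extreme set exists, any set returned by \lp is still a valid $(x,\sigma,k)$-set by the stopping conditions, so the only remaining possibility is $\bot$, which is correct.

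Third, for the running time, each call to \lp runs in $O(\sigma^2)$ time by \Cref{improved:localprimruntime}, and \Cref{alg:blpimproved} makes $O(\sigma^2 \log n)$ such calls, giving $O(\sigma^4 \log n)$ total. The one subtlety I expect to be the main (minor) obstacle is keeping the sorted adjacency list data structures consistent across iterations: each invocation of \lp on rank function $r^i$ performs at most $O(\sigma^2)$ \del operations on the data structure associated with $r^i$ to remove edges that have become internal to the growing set $X$. To ensure that a later invocation starting from a different seed still sees the correct data structure, I would maintain a journal of these edits within each call and, upon returning, undo them in reverse order using the corresponding \ins operations. Each undo is $O(\log n)$ (or $O(1)$ if we cached positions), and there are $O(\sigma^2)$ of them per call, so this rollback is absorbed in the $O(\sigma^2)$ per-call bound and the overall $O(\sigma^4 \log n)$ time bound claimed in \Cref{thm:blpimproved} follows.
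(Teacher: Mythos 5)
Your proposal is correct and takes essentially the same route as the paper: it combines \Cref{lem:rrrprob} (a random rank function is respected with probability $\Omega(1/\sigma^2)$) with \Cref{improved:localprimruntime} (each call to the modified \lp costs $O(\sigma^2)$ using the sorted adjacency lists), multiplying by the $O(\sigma^2\log n)$ independent rank functions. You also explicitly verify the analog of \Cref{lem:lpreturnsS} after dropping the volume stopping condition and spell out the journal/rollback subtlety for restoring the data structure between calls; the paper raises both of these points in the surrounding text rather than in the one-paragraph proof of \Cref{thm:blpimproved}, so your write-up is just a more detailed version of the same argument. One nit: to keep the per-call cost at $O(\sigma^2)$, you must take the "$O(1)$ per undo via cached positions" option rather than the $O(\log n)$ option you also mention, since otherwise the rollback would push the bound to $O(\sigma^4\log^2 n)$.
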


\begin{proof}
Since the probability of finding the $(\sigma,k)$-set using \lp{} is only dependent on number of vertices in the set and it is at least $\frac{2}{\sigma(\sigma-1)}$ as seen in \Cref{lem:rrrprob}. Hence we can repeat the \lp{} subroutine $O(\sigma^2 \log n)$ times to improve the probability thus taking a total run time of $O(\sigma^4 \log n)$.
\end{proof}

This proves the \Cref{thm:local main}(\ref{enu:local size}).

\section{\texorpdfstring{$\Tilde{O}(mn^{0.8})$}{mn0.8} algorithm for maximal \texorpdfstring{$k$}{k}-connected partitions}\label{sec:paramchange}

We have seen in the \Cref{sec:recdepth} how a local subroutine \blp{} is used to improve the recursion depth and thus improve the run time, at least when the graph is sparse. However, $\Tilde{O}(m^{1.75})$ is still high when $m$ is in dense regime. Let us see an improved version of the above algorithm below by using the fast \blp{} from \Cref{sec:localalg2} and changing the parameter to $\sigma$ instead of $\nu$, with the general framework of the algorithm remaining intact. This results in the second running time claimed in \Cref{thm:main}. The main result that we prove in this section is the following theorem.

\begin{theorem}\label{thm:mcpimproved}
There exist a randomized algorithm that takes a graph $G=(V, E,w)$ with $m$ edges and $n$ vertices and finds the maximal $k$-edge-connected partition with high probability in at most $O(mn^{0.8}\log^{3.6} n)$ time.
\end{theorem}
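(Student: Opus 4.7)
The plan is to mirror the framework of \Cref{sec:recdepth}, making two substitutions: replace the volume parameter $\nu$ by the cardinality parameter $\sigma$ throughout \kcp and \mcp, and use the faster local subroutine \blp from \Cref{sec:localalg2} (\Cref{thm:blpimproved}) that runs in $O(\sigma^4\log n)$ time after the sorted adjacency lists have been built in $O(m\sigma^2\log^2 n)$ preprocessing. The parameter $\sigma$ will be chosen at the end to balance the dominant terms.

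First I would establish the cardinality analogue of \Cref{lem:nubalanced}: if a graph contains no $(\sigma,k)$-extreme set, then either it is $k$-edge-connected, or every minimum cut has at least $\sigma$ vertices on its smaller side. The proof is identical since any inclusion-wise minimal cut of value less than $k$ whose smaller side has fewer than $\sigma$ vertices is itself a $(\sigma,k)$-extreme set. Consequently, the modified \kcp has recursion depth $O(n/\sigma)$ on an $n$-vertex graph, because each global-mincut invocation shrinks the larger side by at least $\sigma$ vertices and so $O(n/\sigma)$ levels suffice to halve the largest component's vertex count. The correctness arguments of \Cref{lem:kcp:correctness,lem:mcp:correctness} transfer verbatim since they rely only on the fact that every partitioning cut has value below $k$. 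The bound of $O(m)$ on the total number of \blp calls per outer \kcp invocation is preserved by the argument of \Cref{alg:kcp:localcalls} together with \Cref{lem:newcandidates}, which is a charging argument against permanently deleted cut edges and is independent of whether the size parameter is $\nu$ or $\sigma$.

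For the running time, per outer \kcp call on a graph with $m$ edges and $n$ vertices, the preprocessing of sorted adjacency lists across all $O(\sigma^2\log n)$ rank functions costs $O(m\sigma^2\log^2 n)$, the $O(m)$ invocations of \blp contribute $O(m\sigma^4\log n)$, and the $O(n/\sigma)$ global minimum-cut calls on subgraphs with at most $m$ edges cost a total of $O(mn\log^3 n/\sigma)$. The cost of incrementally maintaining the sorted adjacency lists as edges are permanently removed is $O(m\sigma^2\log n)$, which is subsumed by preprocessing. Summing over the $O(\log n)$ levels of \mcp (at each level the total edges across sub-instances is at most $m$) and choosing $\sigma=\Theta((n\log^2 n)^{1/5})$ to balance the two dominant terms $m\sigma^4\log n$ and $mn\log^3 n/\sigma$, the overall running time becomes $O(mn^{4/5}\log^{18/5} n)=O(mn^{0.8}\log^{3.6} n)$ as claimed.

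The main obstacle is bookkeeping: ensuring that the sorted adjacency lists for all $O(\sigma^2\log n)$ rank functions remain consistent across (i) the permanent deletions of edges inside $(\sigma,k)$-sets returned by \blp, (ii) the temporary deletions inside each call to \lp, which are rolled back using the journal described in \Cref{sec:6.2}, and (iii) the recursive splits produced by the global minimum cut, which can be handled by deleting the discarded side's edges from the existing lists rather than rebuilding from scratch. High-probability correctness then follows by a union bound over all $O(m)$ \blp invocations and $O(n/\sigma)$ global minimum-cut invocations inside each outer \kcp call, exactly as in \Cref{sec:recdepth}.
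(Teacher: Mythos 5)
Your proposal is correct and follows essentially the same route as the paper: swap the volume parameter $\nu$ for the cardinality parameter $\sigma$, plug in the $O(\sigma^4 \log n)$-time \blp from \Cref{thm:blpimproved}, observe that the recursion depth of \kcp becomes $O(n/\sigma)$ via the cardinality analogue of \Cref{lem:nubalanced}, reuse the $O(m)$ bound on \blp invocations from \Cref{alg:kcp:localcalls}, and balance $O(m\sigma^4\log n)$ against $O(mn\log^3 n/\sigma)$ with $\sigma=\Theta((n\log^2 n)^{1/5})$ to get $O(mn^{0.8}\log^{2.6}n)$ per \kcp call and $O(mn^{0.8}\log^{3.6}n)$ after the $O(\log n)$ \mcp levels. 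Your added remarks on maintaining the sorted adjacency lists incrementally across permanent deletions and recursive splits fill in bookkeeping details the paper leaves implicit, but do not change the argument.
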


\subsection{Parametrize over \texorpdfstring{$n$}{n}}
In the previous local subroutine, we have parameterized over the volume of the local cut $\nu$, thus resulting in at most $O(m/\nu)$ recursion depth. When we parameterize over $\sigma$, we can change the recursion depth, which would be at most $O(n/\sigma)$. We formally prove it in the following lemma.

\begin{lemma}
Recursion depth of the \kcp{} using updated \blp, \lp{} subroutine is at most $n/\sigma$.
\end{lemma}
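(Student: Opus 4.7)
The plan is to mimic the analysis of Section~\ref{sec:recdepth} verbatim, replacing the role of volume $\nu$ by cardinality $\sigma$. The skeleton of \kcp{} is unchanged, so the only thing we need is the cardinality analogue of \Cref{lem:nubalanced}, which guarantees that after the while loop the global mincut cuts the graph in a way that is $\sigma$-balanced in terms of vertex count.

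First I would prove the following variant of \Cref{lem:nubalanced}: if $\hat{G}$ contains no $(\sigma,k)$-extreme set, then either the minimum cut of $\hat{G}$ is at least $k$ (so $\hat{G}$ is $k$-edge-connected), or every inclusion-wise minimal cut of value less than $k$ has both sides of cardinality at least $\sigma$. The argument is identical to the volume version: if $(A,\overline{A})$ were an inclusion-wise minimal cut of value less than $k$ with $|A|<\sigma$, then $A$ would be the unique minimum cut of $G/\overline{A}$, hence $A$ would be a $(\sigma,k)$-extreme set, contradicting the hypothesis.

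Second, I would argue that when the while loop of \kcp{} exits, no $(\sigma,k)$-extreme set remains in $\hat{G}$ (w.h.p.). This is exactly the argument used in Section~\ref{sec:recdepth}: every vertex starts in $L$; a vertex $y$ is removed from $L$ only when the updated \blp{} certifies that no $(y,\sigma,k)$-extreme set exists in $\hat{G}$ at that moment; and whenever we later carve off a $(\sigma,k)$-set $S$, \Cref{lem:newcandidates} tells us that every newly created extreme set must contain an endpoint of some removed cut edge, and all such endpoints on the residual side are added back to $L$. Hence when $L$ becomes empty, no $(\sigma,k)$-extreme set survives in $\hat{G}$.

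Combining these two facts with the cardinality analogue above, either $\hat{G}$ is $k$-edge-connected (and the recursion stops), or the global mincut call on \Cref{mcp:10} returns a cut $(A,B)$ with $\min(|A|,|B|)\ge \sigma$. Since the algorithm recurses on the larger side, at least $\sigma$ vertices are removed from the graph per recursion level. Starting with $n$ vertices, the recursion depth is therefore bounded by $n/\sigma$, as claimed. The main potential obstacle is handling the bookkeeping of the candidate list correctly, but this is already packaged by \Cref{lem:newcandidates} and reused verbatim from Section~\ref{sec:recdepth}, so no new ideas are needed.
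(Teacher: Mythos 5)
Your proof is correct and follows essentially the same approach as the paper. The paper's argument (in the proof of the lemma in Section~\ref{sec:paramchange}) is stated more tersely — it simply asserts that the mincut found by Karger's algorithm must have at least $\sigma$ vertices on each side, else a $(\sigma,k)$-extreme set would exist — but this is exactly the cardinality analogue of \Cref{lem:nubalanced} together with the candidate-list argument, both of which you spell out explicitly.
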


\begin{proof}
At each level, when all the $(\sigma,k)$-extreme sets are removed and if the residual graph is still not $k$-edge-connected, then the minimum cut found using Karger's \mcut{} has at least $\sigma$ vertices on both sides. If not, we would find a $(\sigma,k)$-extreme set contradicting (with high probability) that no $(\sigma,k)$-extreme set exists. Since the minimum cut is a $\sigma$-balanced cut, the number of vertices in the large component reduces by at least $\sigma$. So, it takes at most $O(n/\sigma)$ recursion depth to reduce the number of vertices in the largest component to half its original size, i.e., $n/2$. Hence the recursion depth of \kcp{} is at most $O(n/\sigma)$.
\end{proof}

\subsection{Run time analysis}
The analysis is very similar to that of the previous algorithm.

\begin{lemma}
The run time of \kcp{} is at most $O(mn^{0.8}\log^{2.6}n)$.
\end{lemma}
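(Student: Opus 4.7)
The plan is to mirror the runtime analysis of \Cref{alg:kcp:runtime} from \Cref{sec:recdepth}, but substitute the cardinality parameter $\sigma$ in place of the volume parameter $\nu$, plug in the faster local subroutine from \Cref{thm:blpimproved}, and account for the one-time preprocessing cost of the sorted adjacency lists.

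First, I would show that the total number of \blp invocations across the entire execution of \kcp is $O(m)$, by the same argument as \Cref{alg:kcp:localcalls}: the candidate list $L$ starts at size $n$, each failed call removes an element of $L$ without adding any, and each successful call produces a $(\sigma,k)$-set whose cut edges contribute at most one new vertex to $L$ per edge, while each edge of $G$ can become a cut edge at most once before being permanently removed from $\hat G$. Combined with \Cref{thm:blpimproved}, this accounts for $O(m\sigma^{4}\log n)$ total time spent inside local subroutines, provided the sorted adjacency lists of \Cref{sec:6.1} are available throughout.

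Next, the preprocessing cost must be absorbed. The $O(\sigma^{2}\log n)$ sorted adjacency lists (one per rank function) are built once at the top of \kcp in total time $O(m\sigma^{2}\log^{2} n)$, and thereafter maintained solely through \del operations as \kcp removes edges from $\hat G$. Since the recursive calls of \kcp operate on disjoint edge sets, each edge of $G$ is deleted at most once globally, so the cumulative maintenance cost is $O(m\sigma^{2}\log n)$ and is dominated by the preprocessing itself. For the global mincut calls, the recursion depth of \kcp is $O(n/\sigma)$ by the preceding lemma and Karger's algorithm runs in $O(m\log^{3} n)$ per subgraph, so summed across one mincut per level the total is $O(mn\log^{3} n/\sigma)$. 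Adding the three contributions gives
\[
O\!\left(m\sigma^{2}\log^{2} n + m\sigma^{4}\log n + \frac{mn\log^{3} n}{\sigma}\right),
\]
and balancing the last two terms with $\sigma = \Theta\!\bigl((n\log^{2} n)^{1/5}\bigr)$ yields $O(mn^{4/5}\log^{13/5} n) = O(mn^{0.8}\log^{2.6} n)$; the preprocessing term evaluates to $O(mn^{2/5}\log^{14/5} n)$ and is absorbed.

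The main obstacle is handling preprocessing correctly: naively rebuilding the sorted adjacency lists for each subgraph spawned by the recursion would multiply the $O(m\sigma^{2}\log^{2} n)$ cost by the recursion depth $O(n/\sigma)$, giving $O(mn\sigma\log^{2} n) = \Omega(mn^{6/5})$ and destroying the target bound. The entire improvement hinges on initializing the data structure a single time at the top of \kcp and thereafter paying only for \del operations as edges leave $\hat G$, which is exactly what the edge-deletion interface in \Cref{sec:6.1} is designed to support.
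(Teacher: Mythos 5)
Your proposal is correct and follows essentially the same route as the paper: bound the number of \blp calls by $O(m)$ via \Cref{alg:kcp:localcalls}, multiply by the $O(\sigma^4\log n)$ per-call cost from \Cref{thm:blpimproved}, add the $O(m\sigma^2\log^2 n)$ data-structure construction cost and the $O(mn\log^3 n/\sigma)$ mincut cost over the $O(n/\sigma)$-depth recursion, and balance with $\sigma = \Theta((n\log^2 n)^{1/5})$. The one place you go beyond the paper's exposition is in explicitly justifying why the sorted-adjacency-list construction is a one-time cost rather than being repeated at each level of the \kcp recursion — the paper simply states the $O(m\sigma^2\log^2 n)$ term without addressing this, so your observation (that the data structure persists across the chain of recursive calls and is only maintained by deletions, which each edge undergoes at most once) fills a small gap in the written argument.
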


\begin{proof}
As proven in \Cref{alg:kcp:localcalls} we invoke \blp{} at most $O(m)$ times, each of which takes $O(\sigma^4 \log n)$ time. So it takes $O(m\sigma^4 \log n)$ time for the local subroutine calls.

The recursion depth of \kcp{} is at most $O(n/\sigma)$ and the run time required for applying Karger's \mcut{} at each recursive level is at most $O(m\log^3 n)$.

We also need to account for the time taken to construct the data structure for all the $O(\sigma^2\log n)$ rank functions which amount to at most $O(m\sigma^2 \log^2 n)$.

Hence the total runtime is $O(m\sigma^2\log^2 n) + O(m\sigma^4\log n) + O(m\log^3 n)\cdot O(\frac{n}{\sigma})$. Optimizing with $\sigma = O((n\log^2n)^{1/5})$ the total run time is at most $O(mn^{0.8}\log^{2.6}n)$.
\end{proof}

Since \mcp{} calls \kcp{} which breaks the graph into partition in which each set is of size at most half of the original size. Hence recursion depth of \mcp{} is at most $\log n$. With similar argument stated in \Cref{alg:mcp:runtime} we conclude that the run time taken to find maximal $k$-edge-connected partition is at most $O(mn^{0.8}\log^{3.6}n)$.

This proves the \Cref{thm:main}. Note that $\Tilde{O}(mn^{4/5})$ is an improvement over $\Tilde{O}(m^{1.75})$ in the dense regime where $m = \Omega(n^{16/15})$.

\section{Conditional lower bounds}\label{sec:lb}

As we have seen from the above sections, the main bottleneck in our framework to find maximal $k$-edge-connected partitions is the sub-routine to find a set of cut size less than $k$ having small cardinality or volume. A natural question is ``Can we improve the running time dependence on parameters $\nu,\sigma$ for these problems?". In this section, we give lower bounds for problems \localmst and \minextset. We use the conjectured hardness of the OMv problem to give running time lower bounds for these problems.

We state the OMv conjecture below. Let $M$ be a boolean matrix of dimension $n\times n$. We have to compute the matrix-vector product of $v_1,\cdots v_n$ one after another in an online fashion. The naive running time to compute the vector product of all the vectors would be $O(n^3)$. It is conjectured that even after allowing a polynomial amount of pre-processing time, one cannot polynomially improve over the naive running time.%

\begin{conjecture}[OMv Conjecture]\label{omv}
 There is no algorithm that computes matrix-vector products of $n$ vectors coming in an online fashion in a total time of $n^{3-\Omega(1)}$ with an error probability of at most $1/3$.%
\end{conjecture}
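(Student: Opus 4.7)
The statement labeled as the final item in the excerpt is \Cref{omv}, the OMv Conjecture itself, rather than a theorem or lemma of the paper. As such, there is no proof to propose: this is a hypothesis imported from the work of Henzinger~et~al., and the paper's role is to \emph{use} it as a black-box assumption to derive the conditional lower bound stated in the second half of \Cref{thm:lb main}, not to establish it. Any ``proof plan'' I could write would be dishonest, because a polynomial improvement lower bound of this flavor for online matrix-vector multiplication is a major open problem in fine-grained complexity, and the conjecture is listed here precisely because we do not know how to prove it unconditionally.

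If the task were reinterpreted as \emph{justifying} why the conjecture is plausible enough to be used as a hardness hypothesis, the plan would be to survey the evidence rather than to derive the statement. First, I would point out the equivalence (up to polylogarithmic factors) established by Henzinger~et~al.~between OMv and several closely related problems (OuMv, OMv$^{+}$), which shows that refuting the conjecture would simultaneously refute a cluster of independently plausible hardness assumptions. Second, I would note that a truly subcubic online algorithm would contradict the combinatorial BMM hypothesis in the streaming regime, since OMv is at least as hard as boolean matrix multiplication performed one column at a time without batching. Third, I would cite the fact that no algorithm beating $n^{3}/2^{\Omega(\sqrt{\log n})}$ is known even in the offline batched setting, let alone the online one where each $v_i$ must be answered before $v_{i+1}$ is revealed.

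The actual obstacle, and the reason the conjecture has stood since 2015, is that an unconditional lower bound against arbitrary polynomial preprocessing would require circuit lower bounds far beyond what current techniques can provide; one would essentially have to rule out all data-structural encodings of $M$ that admit sublinear matrix-vector queries. This is why the paper, like essentially every fine-grained hardness paper using OMv, simply states the conjecture verbatim from~\cite{henzinger2015unifying} and proceeds to reductions. Consequently, my ``proposal'' is to leave \Cref{omv} as an axiomatic statement, cite the original source, and invest the actual proof effort in the reduction sketched earlier in the overview (the $G_S$ construction with the infinite-weight cycle on $S$) that converts a hypothetical fast algorithm for \Cref{prob:minextset} into a violation of this conjecture.
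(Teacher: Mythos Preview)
Your assessment is correct and matches the paper exactly: \Cref{omv} is stated as a conjecture imported verbatim from \cite{henzinger2015unifying}, with no proof attempted, and the paper immediately proceeds to use it (via the independent-set-query hardness of \Cref{lb:indset}) as the assumption underlying the reductions in \Cref{sec:8.1} and \Cref{sec:8.2}. There is nothing to add.
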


In particular, we use the hardness of the independent set query problem proved in \cite{henzinger2015unifying} to give lower bounds to \Cref{prob:minextset} and \Cref{prob:localmst}. 
More precisely, By combining Theorems 2.7 and 2.12 from \cite{henzinger2015unifying}, we obtain the following hardness result:
\begin{theorem}[\cite{henzinger2015unifying}]\label{lb:indset}
Assuming \Cref{omv}, there is no algorithm that computes whether any subset, $S \subset V$ is independent or not in the graph $G=(V, E, w)$ with error probability at most $1/3$ in $O(|S|^{2-\Omega(1)})$ time even after allowing a polynomial pre-processing on the graph $G$.
\end{theorem}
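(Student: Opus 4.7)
The plan is to invoke two standard reductions from \cite{henzinger2015unifying} and chain them. Specifically, Theorem 2.7 there strengthens the OMv conjecture to hardness of the OuMv problem: given a boolean matrix $M \in \{0,1\}^{n \times n}$ in a preprocessing phase and then a sequence of $n$ online query pairs $(u_t, v_t) \in \{0,1\}^n \times \{0,1\}^n$, one cannot output all $u_t^T M v_t$ in total time $n^{3-\Omega(1)}$ (with error $\le 1/3$) under \Cref{omv}. Theorem 2.12 there then reduces OuMv to independent set queries in a bipartite graph, and my proof plan is essentially to make this chain explicit.

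First I would describe the construction. In the preprocessing phase, given $M$, build a bipartite graph $G = (L \cup R, E)$ with $|L| = |R| = n$, where $l_i \in L$ corresponds to row $i$, $r_j \in R$ to column $j$, and $\{l_i, r_j\} \in E$ iff $M_{ij} = 1$. This takes $O(n^2)$ time, which is polynomial in the graph size. For each OuMv query $(u, v)$, form
\[
S_{u,v} \;=\; \{\,l_i : u_i = 1\,\} \cup \{\,r_j : v_j = 1\,\},
\]
so that $|S_{u,v}| \le 2n$. Since $G$ is bipartite with edges only between $L$ and $R$, the set $S_{u,v}$ fails to be independent iff some $(l_i, r_j) \in E$ has $u_i = v_j = 1$, i.e.\ iff $u^T M v = 1$. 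Thus a single independent set query on $S_{u,v}$ returns the correct OuMv answer.

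Second, I would carry out the running time accounting. Suppose, for contradiction, that there is a randomized algorithm $\mathcal{A}$ with polynomial preprocessing on $G$ answering independent set queries in $O(|S|^{2-\varepsilon})$ time with error at most $1/3$ per query, for some $\varepsilon > 0$. Running $\mathcal{A}$ on each $S_{u_t, v_t}$ answers the $t$-th OuMv query in $O(n^{2-\varepsilon})$ time, so all $n$ queries are answered in $O(n^{3-\varepsilon})$ time, contradicting the OuMv lower bound obtained from \Cref{omv} via Theorem 2.7 of \cite{henzinger2015unifying}. Amplifying error by constant-factor independent repetitions keeps the overall error below $1/3$ across the $n$ queries, at only logarithmic overhead that is absorbed into the $n^{\Omega(1)}$ gap.

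The only real subtlety — and the step I expect to be the main place to be careful — is matching the per-query time bound to the aggregate OuMv bound: the reduction is clean because each OuMv query maps to exactly one independent set query of size $\Theta(n)$, so any $O(|S|^{2-\Omega(1)})$ per-query bound on $G$ (of size $O(n^2)$) translates directly into an $O(n^{3-\Omega(1)})$ OuMv algorithm. Once this correspondence is verified, the theorem follows immediately from the two cited reductions.
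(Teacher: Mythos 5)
Your proposal is correct and matches exactly what the paper intends: the paper does not prove this theorem itself but simply cites it as "by combining Theorems 2.7 and 2.12 from \cite{henzinger2015unifying}," and your proof makes that combination explicit via the standard OMv $\to$ OuMv $\to$ independent-set-query chain with the bipartite graph built from $M$. One small phrasing nit: you write ``constant-factor independent repetitions'' while also saying ``logarithmic overhead'' --- what is actually needed is $O(\log n)$ repetitions per query to drive per-query error below $1/(3n)$ so the union bound over $n$ queries gives total error $\le 1/3$; the logarithmic factor is absorbed into $n^{\Omega(1)}$ as you say.
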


The rest of the section is organized as follows. In \Cref{sec:8.1} we prove the lower bound for the cardinality-constrained small cut containing $x$, $(x,\sigma,k)$-\minextset problem. In \Cref{sec:8.2} we show that even the simple problem of cardinality-constrained small cut containing $x$ and respecting a given rank function $r$, $(x,\sigma,k)$-\localmst cannot have a faster algorithm when only adjacency list and matrix of graph $G$ are given. Finally, in \Cref{sec:8.3} we extend the NP-hardness reduction given in \cite{fomin2013parameterized} to give $W[1]$-hardness albeit in the weighted case.

\subsection{Lower bound for \minextset}\label{sec:8.1}

Recall the $(x,\nu,\sigma,k)$-\minextset{} problem.

\localkcutprob*

We have given two different local algorithms in \Cref{sec:extsets} and \Cref{sec:localalg2} respectively for different parameter variations of \minextset that run in time $\Tilde{O}(\nu\sigma^2)$ and $\Tilde{O}(\sigma^4)$. We would prove a conditional lower bound for $(x,\sigma,k)$-\minextset as stated below.

\begin{theorem}\label{lb:minextset}
Assuming \Cref{omv}, there is no algorithm that, given access to the adjacency lists and matrix of $G$ (with no further preprocessing) solves $(x,\sigma,k)$-\minextset{} in $O(\sigma^{2-\Omega(1)})$ time.
\end{theorem}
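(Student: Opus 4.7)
The plan is to reduce the independent set query problem, which is hard under the OMv conjecture by \Cref{lb:indset}, to the $(x,\sigma,k)$-\minextset{} problem. I will follow the sketch given in \Cref{sec:overview} and fill in the details carefully. Fix a graph $G=(V,E,w)$ on which polynomial preprocessing is allowed, and suppose that a query set $S\subset V$ arrives online and we need to decide whether $S$ is independent. I construct an augmented graph $G_S$ by adding to $G$ an arbitrary Hamiltonian cycle $C_S$ spanning $S$ in which every edge carries a ``large'' weight $W$, where $W$ is chosen to exceed any cut value that could legitimately appear, e.g.\ $W=\vol_G(V)+1$ (so $G_S$ remains integer-weighted and no honest cut can match it). I then call the hypothetical \minextset{} subroutine on $G_S$ with seed $x\in S$ arbitrary, $\sigma=|S|+1$, and $k=\vol_G(S)$.

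For correctness I verify both directions. If $S$ is not independent, then $S$ is an $(x,\sigma,k)$-extreme set in $G_S$: it has $|S|<\sigma$; its cut value equals $\delta_G(S)\le \vol_G(S)-2w(G[S])<\vol_G(S)=k$ since $G[S]$ contains at least one edge; and every strict subset $S'\subsetneq S$ has cut value at least $W$ because $C_S$ must be crossed. So the subroutine cannot return $\bot$. Conversely, if $S$ is independent, suppose toward contradiction that some $(x,\sigma,k)$-set $T$ is returned. $T$ cannot cross $S$, for then $W\le w(\delta_{G_S}(T))<k$, contradicting the choice of $W$; hence $S\subseteq T$ (since $x\in T\subseteq S\cup(V\setminus S)$ and $T$ is connected to $S$ side only). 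Combined with $|T|<\sigma=|S|+1$, this forces $T=S$. But then $w(\delta_{G_S}(T))=w(\delta_G(S))=\vol_G(S)=k$, contradicting $w(\delta(T))<k$. Hence the subroutine must return $\bot$ exactly when $S$ is independent.

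For efficiency, the reduction must respect the theorem's strict interface: only adjacency-list and adjacency-matrix access, with \emph{no} preprocessing of $G_S$ beyond what is already available on $G$. Because $G_S$ differs from $G$ in only the $|S|=O(\sigma)$ edges of $C_S$ (plus possibly the single seed-related modifications), I can simulate any adjacency-list or adjacency-matrix query on $G_S$ by first consulting the precomputed $G$-data structures and then patching with an explicit $O(\sigma)$-sized table storing $C_S$. This patch can be built per query in $O(\sigma)$ total time, which is additive and completely absorbed into the claimed $O(\sigma^{2-\Omega(1)})$ budget. Therefore a $O(\sigma^{2-\Omega(1)})$ algorithm for \minextset{} would answer the independent set query in $O(|S|^{2-\Omega(1)})$ time, contradicting \Cref{lb:indset} and hence the OMv conjecture.

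The main obstacle I anticipate is not conceptual but bookkeeping: I must make sure the ``infinite weight'' trick can be simulated with finite integer weights and that the simulation of $G_S$'s adjacency access from $G$'s data structures genuinely uses no hidden preprocessing of $S$-specific information; the $O(\sigma)$-overhead argument above is the clean way to handle this. Once these details are in place, the reduction is immediate.
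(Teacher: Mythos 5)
Your proof is correct and follows essentially the same reduction as the paper: augment $G$ with a heavy cycle $C_S$ on $S$, set $\sigma=|S|+1$ and $k=\vol_G(S)$, show $S$ is the unique candidate $(x,\sigma,k)$-extreme set iff $S$ is not independent, and simulate $G_S$'s adjacency access with an $O(\sigma)$ patch table. The only minor detail you omit (which the paper handles explicitly) is that if a cycle edge of $C_S$ already exists in $E$, you should add the large weight to that existing edge rather than introduce a parallel edge, so that $G_S$ stays a simple weighted graph and the adjacency-matrix simulation remains well-defined.
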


Before proving the above theorem, we will first show the reduction that connects the problem of finding the $(\sigma,k)$-extreme set to the problem of deducing whether a set is independent.

Given a graph $G=(V, E,w)$ and a set $S$ of size $\sigma$ we can find the volume of $S$ in $O(\sigma)$ time. Let $k=\vol_G(S),\sigma = |S|$. Define a new graph $G_S=(V, E',w')$, which is an exact copy of $G$ but also has an additional weighted cycle consisting of all the vertices in $S$ with each edge of weight $k+1$. If an edge of the newly added cycle already exists in $E$, then just increment the weight of that edge by $k+1$ in $w'$ to keep $G_S$ simple graph. We replace the infinite value in the introduction with $k+1$ as it is enough to get the desired result. We prove the following equivalence:

\begin{figure}[htbp!]
    \begin{center}
        \includegraphics[scale=0.3]{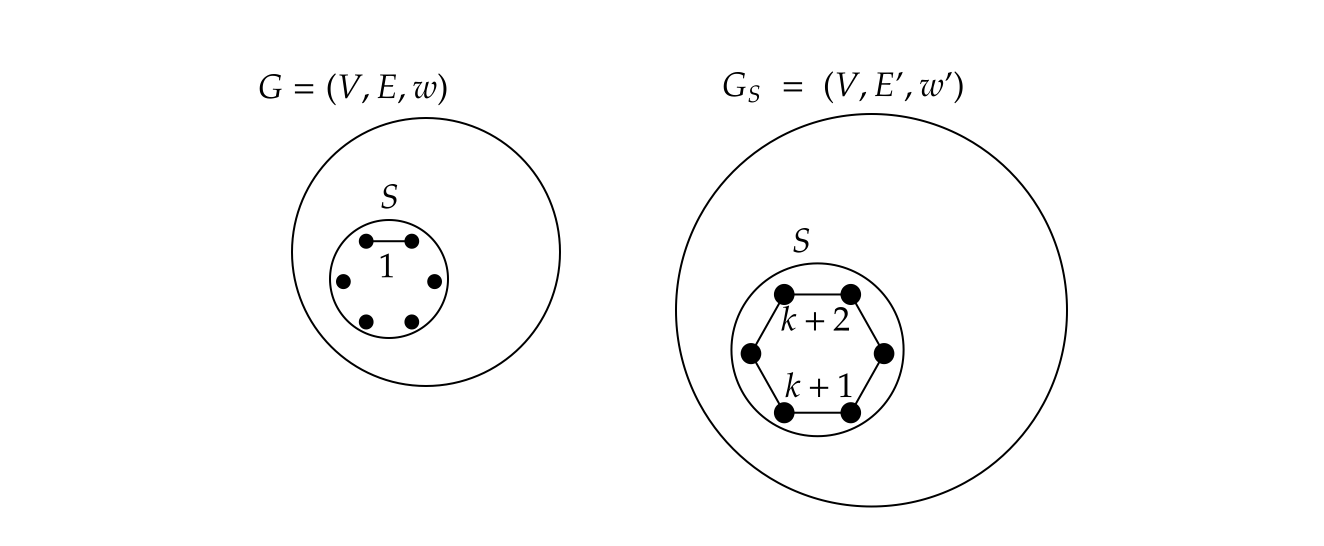}
    \end{center}
    \caption{Lowerbound for $(x,\sigma,k)$-\minextset}
\end{figure}

\begin{lemma}\label{redn:minextset}
$E(S,S)\neq \phi$ in $G$ iff $S$ is $(\sigma+1,k)$-extreme set in $G_S$.
\end{lemma}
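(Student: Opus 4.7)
The plan is to establish the biconditional by unpacking the definitions and showing separately that the presence of an internal edge in $S$ makes $S$ a $(\sigma+1,k)$-extreme set in $G_S$, and that the absence of an internal edge prevents $S$ from being even a $(\sigma+1,k)$-set. The cycle of heavy edges is the key gadget: it costs essentially nothing when measured from $S$ as a whole (since the cycle lies entirely inside $S$), but penalizes any strict proper subset of $S$ by at least $2(k+1)$.

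For the forward direction, I would first observe that $w'(\delta_{G_S}(S))=w(\delta_G(S))$ because every cycle edge added to build $G_S$ has both endpoints in $S$, hence contributes $0$ to $\delta_{G_S}(S)$. Combined with the identity $\vol_G(S)=2\,w(E(S,S))+w(\delta_G(S))=k$, the hypothesis $E(S,S)\neq\emptyset$ gives $w(\delta_G(S))\le k-2<k$, so $S$ is a $(\sigma+1,k)$-set in $G_S$ (using $|S|=\sigma<\sigma+1$). For the extreme-set condition I would take any nonempty proper subset $T\subsetneq S$ and argue that the cycle on $S$, restricted to the partition $(T,S\setminus T)$, contributes at least two crossing edges (any cycle crossing a bipartition does so an even, nonzero number of times), each of weight $k+1$. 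Hence $w'(\delta_{G_S}(T))\ge 2(k+1)>k> w'(\delta_{G_S}(S))$, so $T$ has strictly larger cut value and $S$ qualifies as an extreme set.

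For the backward direction I would argue the contrapositive: if $E(S,S)=\emptyset$, then every edge incident to $S$ leaves $S$, so $\vol_G(S)=w(\delta_G(S))=k$, and once more $w'(\delta_{G_S}(S))=w(\delta_G(S))=k$ since the added cycle edges lie inside $S$. Because the $(\sigma+1,k)$-set condition requires strict inequality $w'(\delta_{G_S}(S))<k$, $S$ fails to be a $(\sigma+1,k)$-set in $G_S$ and a fortiori fails to be an extreme one.

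The main subtlety I expect is just pinning down the cycle-crossing count for arbitrary proper subsets $T$ — specifically noting that the assumption $T\neq\emptyset$ and $T\neq S$ forces the cyclic sequence of vertices in $S$ to alternate between $T$ and $S\setminus T$ at least twice, giving the two-edge lower bound we need. Everything else is a direct computation with the definitions.
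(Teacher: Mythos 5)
Your proposal is correct and follows essentially the same route as the paper: both directions reduce to the observation that the cycle edges lie entirely inside $S$ (so $w'(\delta_{G_S}(S))=w(\delta_G(S))$ and the choice $k=\vol_G(S)$ makes $w(\delta_G(S))<k$ equivalent to $E(S,S)\neq\emptyset$), and extremality follows because any nonempty proper subset of $S$ must be crossed by heavy cycle edges. The only cosmetic differences are that you argue the backward direction by contrapositive and verify extremality directly from the strict-subset definition with a two-crossing count, whereas the paper phrases extremality via the equivalent ``unique minimum cut of $G_S/\overline{S}$'' characterization and only observes that at least one cycle edge of weight $k+1$ crosses; both bounds exceed $w'(\delta_{G_S}(S))<k$, so the arguments are interchangeable.
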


\begin{proof}
($\impliedby$) If $S$ is $(\sigma+1,k)$-extreme set in $G_S$ then we have $\delta_{G_S}(S)<k$ from the definition of $(\sigma+1,k)$-extreme set. According to our construction of $G_S$ we have $\delta_G(S) = \delta_{G_S}(S)$. Hence, $\delta_G(S) = \delta_{G_S}(S) < k = \vol_G(S)$. Since, $\delta_G(S) < \vol_G(S)$ we have $E(S,S)\neq \phi$.

($\implies$) If $E(S,S)\neq \phi$ then $\delta_G(S)<k$ and $|S|<\sigma+1$, hence $S$ is a $(\sigma+1,k)$-set. $S$ is also an extreme set in $G_S$ as it is unique minimum cut of $G_S/\overline{S}$ as any other cut of $G_S/\overline{S}$ would consist an edge from the cycle of weight $k+1$ and hence the weight of the cut is more than $k$.
\end{proof}

\begin{obs}\label{onlyS}
For any $x\in S$, $S$ is the only $(x,\sigma+1,k)$-set in $G_S$ which is also extreme.
\end{obs}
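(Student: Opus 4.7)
The plan is to prove the stronger statement that $S$ is the unique $(x,\sigma+1,k)$-set in $G_S$; the ``extreme'' qualifier in the observation is then automatic. Throughout, recall that $G_S$ was obtained from $G$ by superimposing a cycle $C_S$ spanning $S$ whose every edge has weight $k+1$, and that $k = \vol_G(S)$.

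Let $T$ be any $(x,\sigma+1,k)$-set in $G_S$, so $x \in T$, $|T| < \sigma+1$, and $w(\delta_{G_S}(T)) < k$. The first step is to argue that $T$ cannot split $S$, i.e., we cannot simultaneously have $T \cap S \neq \emptyset$ and $S \setminus T \neq \emptyset$. The reason is purely the cycle $C_S$: a nonempty proper cut of any cycle must contain at least two cycle edges (removing a single edge of a cycle leaves it connected). If $T$ split $S$, then $\delta_{G_S}(T) \cap C_S$ would be such a cut of $C_S$, contributing at least $2(k+1) > k$ to $w(\delta_{G_S}(T))$. This contradicts $w(\delta_{G_S}(T)) < k$.

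Hence either $S \subseteq T$ or $S \cap T = \emptyset$. Because $x \in S \cap T$, the second alternative is impossible, so $S \subseteq T$. But $|T| < \sigma+1 = |S|+1$ gives $|T| \leq |S|$, forcing $T = S$. Thus $S$ is the only $(x,\sigma+1,k)$-set in $G_S$, and in particular the only one that is also extreme; extremality of $S$ itself is already supplied by \Cref{redn:minextset} under the hypothesis that $E(S,S) \neq \emptyset$ in $G$.

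There is really no hard step here: the only ingredient beyond definitions is the elementary fact that any cut separating a cycle uses at least two of its edges, combined with the choice of weight $k+1$ on the cycle. The purpose of the observation is to certify that the reduction used in \Cref{redn:minextset} remains valid when we insist on seed vertex $x \in S$, so that a subroutine solving $(x,\sigma+1,k)$-\minextset{} on $G_S$ unambiguously detects whether $E(S,S) \neq \emptyset$ in $G$; this pinning of the unique possible answer is what allows the OMv-based lower bound of \Cref{lb:minextset} to go through.
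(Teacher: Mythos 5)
Your proof is correct and matches the reasoning the paper itself uses (the paper gives only a one-line justification in Section~8.1, but the detailed argument appears in the technical overview: any $T$ cannot cross $S$ because of the heavy cycle, hence $S \subseteq T$, and the cardinality bound forces $T = S$). One small clarification worth noting: what you actually establish is that $S$ is the unique \emph{candidate} for a $(x,\sigma+1,k)$-set, i.e., any such set must equal $S$; whether $S$ itself qualifies (has cut value $< k$) depends on whether $E(S,S)\neq\emptyset$, and you correctly defer that to \Cref{redn:minextset}. This is exactly the dichotomy the lower-bound reduction relies on, so your presentation is sound.
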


\Cref{onlyS} follows from the above construction of $G_S$. Now, we prove the lower bound for the problem of $(x,\sigma,k)$-\minextset.

\begin{proof}[Proof of \Cref{lb:minextset}]
Assume for contradiction that there is an algorithm $\mathcal{A}$ that solves $(x,\sigma,k)$-\minextset and runs in $O(\sigma^{2-\eps})$ time for some constant $\eps>0$. We can use this algorithm to find whether a given set $S$ is independent or not in running time $O(\sigma^{2-\eps})$ as follows.

Given a set $S$ to check whether it is independent or not in graph $G$, construct the graph $G_S$ as described above in $O(\sigma)$ time. Choose an arbitrary vertex $x \in S$ and use $\mathcal{A}$ to find $(\sigma+1,k)$-set containing $x$ in $G_S$. According to the runtime guarantee of the algorithm, this takes at most $O((\sigma+1)^{2-\eps}) = O(\sigma^{2-\eps})$ and finds whether there is a $(x,\sigma+1,k)$-set or return $\bot$ if no $(x,\sigma+1,k)$-extreme set exists.

If $\mathcal{A}$ returns a $(x,\sigma+1,k)$-set with any $x\in S$ as input, then it has to be $S$ from \Cref{onlyS}. If $S$ is the $(\sigma+1,k)$-set then $S$ is not independent.

If $\mathcal{A}$ returns $\bot$, then it implies that there is no $(x,\sigma+1,k)$-extreme set in $G_S$. From \Cref{onlyS}, the only $(x,\sigma+1,k)$-set is $S$ which is also extreme. Since, $\mathcal{A}$ implies no $(x,\sigma+1,k)$-extreme set, there is no $(x,\sigma+1,k)$-set which implies $S$ is not a $(x,\sigma+1,k)$-set. Since $S$ contains $x$ and has size less than $\sigma+1$, the only condition that violates is the cut size constraint, so $\delta_G(S) = \delta_{G_S}(S)\geq k = \vol_{G}(S)$. Hence $S$ is independent.

So, we can compute whether $S$ is independent or not in $O(\sigma^{2-\eps})$ time using $\mathcal{A}$, which contradicts \Cref{omv}. Hence, our assumption that there exists a $O(\sigma^{2-\eps})$ algorithm for \Cref{prob:minextset} is false for any constant $\eps>0$.
\end{proof}

Although \Cref{lb:minextset} rules out any polynomially faster algorithm than $O(\sigma^2)$ for any general $\sigma$. It still does not rule out an algorithm that only works when the size of the set $S$ is small compared to the vertex set $V$. This is important because according to our parameters setting $\sigma=\Otil(n^{1/5})$. So we have to rule out a better algorithm for \Cref{prob:minextset} even in this smaller parameter regime. Below we give a strong lower bound, which rules out an algorithm that performs polynomially better than $O(\sigma^2)$, even when we are guaranteed that the query set's size, $|S| = \sigma \ll n$.

\begin{theorem}\label{lb:restrictparam}
Assuming \Cref{omv}, there is no algorithm that, given access to adjacency list and matrix of $G$ (with no further preprocessing) solves \Cref{prob:minextset} in $O(\sigma^{2-\Omega(1)})$ even when $\sigma \leq n^{\gamma}$ for every constant $\gamma\in[0,1)$.
\end{theorem}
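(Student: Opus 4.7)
The plan is to strengthen the reduction in the proof of \Cref{lb:minextset} by padding the input graph with isolated vertices so that the restriction $\sigma \leq n^{\gamma}$ is satisfied automatically. Suppose for contradiction that for some constant $\eps > 0$ there is a randomized algorithm $\mathcal{A}$ solving \Cref{prob:minextset} in time $O(\sigma^{2-\eps})$ on every input whose host graph has $n$ vertices with $\sigma \leq n^{\gamma}$, given only oracle access to its adjacency list and matrix. I will show this contradicts \Cref{lb:indset} by using $\mathcal{A}$ to answer an arbitrary independent-set query in time $O(|S|^{2-\eps})$.

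Given an independent-set query instance on a graph $G$ with $n_0$ vertices and query set $S$ of size $s$, set $k := \vol_G(S)$ and build $G_S$ exactly as in the proof of \Cref{lb:minextset} by attaching a cycle $C_S$ through $S$ with each cycle edge of weight $k+1$. Now choose $N := \max\{n_0,\,\lceil (s+1)^{1/\gamma}\rceil\}$ and form $G^{pad}_S$ by adding $N - n_0$ fresh isolated vertices to $G_S$. By construction $s+1 \leq N^{\gamma}$, so the restricted guarantee of $\mathcal{A}$ applies when it is run on $G^{pad}_S$ with seed $x \in S$ and parameters $(\sigma, k) = (s+1, \vol_G(S))$. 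The oracle access to the adjacency list and matrix of $G^{pad}_S$ demanded by $\mathcal{A}$ is simulated from oracle access to those of $G$ with $O(1)$ overhead per query: isolated vertices are recognized from their index, and a small $O(s)$-size auxiliary table, precomputed once in $O(s)$ time, stores the cycle edges of $C_S$.

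The analog of \Cref{onlyS} continues to hold for $G^{pad}_S$: the only $(x, s+1, k)$-set containing a fixed $x \in S$ is $S$ itself. Indeed, any $T$ with $T \supsetneq S$ (possibly including isolated vertices) has $|T| \geq s+1$; any $T$ whose intersection with $S$ is a proper nonempty subset of $S$ has at least one cycle edge of weight $k+1 > k$ crossing $\delta(T)$; and isolated vertices contribute nothing to $\delta(T)$ but still count toward $|T|$, so they cannot be added to $S$ without breaking the $|T| < s+1$ constraint. Combining with \Cref{redn:minextset}, which shows $E_G(S,S) \neq \emptyset \Leftrightarrow \delta_{G^{pad}_S}(S) < k$, the run of $\mathcal{A}$ returns a set (necessarily $S$) iff $S$ is not independent in $G$, and returns $\bot$ iff $S$ is independent. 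The total cost is $O(s) + O((s+1)^{2-\eps}) = O(s^{2-\eps})$, contradicting \Cref{lb:indset}.

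The main subtlety I anticipate is verifying that padding by isolated vertices does not introduce spurious $(x, s+1, k)$-sets containing the seed $x$ but differing from $S$; this is precisely what the case analysis above handles, relying crucially on the cycle weight $k+1 > k$ to rule out any proper-intersection configurations, and on the strict inequality $|T| < \sigma$ to rule out configurations that enlarge $S$ by any vertex (isolated or not).
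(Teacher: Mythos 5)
Your proof is correct and takes essentially the same approach as the paper: both handle the $\sigma \leq n^{\gamma}$ restriction by padding the host graph to blow up $n$ before invoking the reduction of \Cref{lb:minextset}. Your choice to pad with explicitly isolated vertices (rather than the paper's vaguer ``arbitrary graph $H$'') is a slightly cleaner instantiation, since it avoids any possibility of spurious edges incident to the image of $G$ altering $\vol(S)$ or $\delta(S)$, but the underlying idea and the case analysis (cycle edges of weight $k+1$ rule out crossing sets; the strict cardinality bound rules out enlargement) are the same.
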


\begin{proof}
Let $\mathcal{A}$ be an algorithm that solves \Cref{prob:minextset} in $O(\sigma^{2-\eps})$ for some constant $\eps>0$ when $\sigma \leq n^{\gamma}$. We can create an algorithm that solves independent set query problem polynomially faster than $O(|S|^{2})$ for set $S$ of vertices in any graph of size $n$.

Given a graph $G=(V,E)$, construct an arbitrary graph $H = (V_H,E_H)$ of vertex set size $|V_H|=n^{1/\gamma}$ and embed the graph $G$ on some arbitrarily chosen subset of $n$ vertices in $H$ which is of size at most $|V_H|^\gamma$.

Given any set $S\subseteq V$ in graph $G$ construct the corresponding subset $S_H$ of vertices in $H$ on which graph $G$ is embedded and add the cycle along the vertices $S_H$ with weight $\vol_G(S)+1$ as constructed in \Cref{lb:minextset}. We can now use $\mathcal{A}$ for finding $(\sigma+1,k)$-extreme set in $H$ because the queried set $S$ has size $|S| \leq n = |V_H|^{\gamma}$.
So as proved in \Cref{lb:minextset} we can find whether the set $S_H$ is an independent set or not in $H$ which is the same as finding whether $S$ is independent in $G$ in run time $O(|S|^{2-\eps})$ contradicting \Cref{omv}.
\end{proof}

\subsection{Lower bound for \localmst}\label{sec:8.2}
Recall the problem of \localmst{}. In this subsection, we give a lower bound for this problem.

\localmstprob*

In \Cref{sec:extsets} we have seen an algorithm for solving \Cref{prob:localmst} which finds a $(\nu,\sigma,k)$-set that respects the rank function in $O(\nu\log \nu)$ time. In \Cref{sec:localalg2} we have seen another algorithm that finds a $(\sigma,k)$-set that respects a given rank function in $O(\sigma^2)$ time. Below we will show that even this simple \Cref{prob:localmst} of finding a $(x,\sigma,k)$-set that respects the given rank function $r$ cannot be solved in run time polynomially better than $O(\sigma^2)$.

Again as in the previous lower bound, we will show that no algorithm solves \Cref{prob:localmst} polynomially faster than $O(\sigma^2)$ for any parameter range $\sigma$. We can use the same argument as in \Cref{lb:restrictparam} for proving the lower bound for any algorithm that only works in the smaller range of parameter $\sigma$. We state the theorem below.

\begin{theorem}\label{lb:localmst}
Assuming \Cref{omv}, given a graph $G= (V,E,w)$ with adjacency lists and matrix, parameters $\sigma,k$ a vertex $x\in V$ and a rank function $r$, there is no algorithm that solves \Cref{prob:localmst} in time $O(\sigma^{2-\Omega(1)})$.
\end{theorem}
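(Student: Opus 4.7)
The plan is to reduce the independent set query problem (\Cref{lb:indset}) to \Cref{prob:localmst}, mirroring the reduction used for \Cref{lb:minextset}. Given a query set $S \subseteq V$ with $|S| = \sigma$, I will produce an instance of \Cref{prob:localmst} consisting of a graph $G'_S$, a seed vertex $x \in S$, size and cut bounds $\sigma + 1$ and $k'$, and a rank function $r$, designed so that an algorithm $\mathcal{A}$ for \Cref{prob:localmst} returns a set when $S$ is not independent in $G$ and returns $\bot$ otherwise. The cycle gadget from \Cref{redn:minextset} will again handle the cut-size bookkeeping; the new element that \localmst forces us to supply is the rank function $r$, chosen so that, whenever $S$ is a valid $(x,\sigma+1,k')$-set, it also respects $r$.

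The main obstacle is designing $r$ so that $S$ respects it. By definition, this requires $\MST_r(G'_S) \cap \delta(S)$ to be a single edge, which in turn forces $G'_S[\bar S]$ to be connected, something not guaranteed by the naive construction $G'_S = G \cup C_S$. To resolve this, I plan to spend the polynomial preprocessing allowed by OMv-hardness to augment $G$ into a graph $G'$ by adding a single hub vertex $h$ connected to every $v \in V$ by a unit-weight edge. Since $h \notin S$ for any query $S \subseteq V$, this augmentation leaves the independence status of $S$ unchanged, and yet it forces $G'_S[\bar S]$ to be connected via $h$ for every $S$. Given a query $S$, I set $G'_S = G' \cup C_S$, where $C_S$ is a cycle on $S$ with each edge of weight $W = k'+1$; pick any $x \in S$; and set $k' = \vol_{G'}(S) = \vol_G(S) + |S|$. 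The rank function $r$ is layered so that edges in $G'_S[S]$ (the cycle plus the original $G[S]$-edges) receive the smallest ranks, edges in $G'_S[\bar S]$ (which include all hub edges incident to $\bar S$) receive the middle ranks, and edges in $\delta_{G'_S}(S)$ receive the largest ranks.

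A Kruskal-style analysis then shows that $\MST_r(G'_S)$ spans $S$ via low-rank cycle edges, spans $\bar S$ through $h$ via middle-rank edges, and finally uses exactly one $\delta(S)$-edge of high rank to join the two sides; since this crossing edge outranks every MST edge inside $G'_S[S]$, $S$ respects $r$. The cycle-cutting argument from \Cref{redn:minextset} carries over: because $W > k'$ and $|T| \le \sigma = |S|$ rules out $T \supsetneq S$, any $T \ne S$ with $x \in T$ and $|T| \le \sigma$ is either a proper subset of $S$ or strictly crosses $S$, and in either case cuts $C_S$ at least twice, giving $w(\delta(T)) \ge 2W > k'$. So $S$ is the only possible $(x,\sigma+1,k')$-set in $G'_S$, and $S$ is valid iff $w(\delta_{G'_S}(S)) < k'$ iff $E_G(S,S) \ne \emptyset$, i.e., iff $S$ is non-independent. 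Hence $\mathcal{A}$'s output pins down independence. For the overhead, building a hash set for $S$, a cyclic ordering on $S$, and computing $k'$ from $\vol_G(S)$ all take $O(\sigma)$; after this setup every adjacency-list, adjacency-matrix, weight, and rank query on $G'_S$ is answered in $O(1)$ using $G$'s given oracles together with the $S$-dependent structures. Therefore any $O(\sigma^{2-\Omega(1)})$-time algorithm for \Cref{prob:localmst} would yield an $O(\sigma^{2-\Omega(1)})$-time independence test, contradicting \Cref{lb:indset} and hence the OMv conjecture.
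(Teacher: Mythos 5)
Your proof is correct and follows the same high-level reduction as the paper: reduce the OMv-hard independent-set-query problem to \Cref{prob:localmst} by augmenting $G$ with a heavy cycle $C_S$ on the query set $S$ and supplying a rank function that $S$ is forced to respect. The difference is one of care rather than route. The paper takes $G_S = G \cup C_S$ directly, assigns rank $0$ to cycle edges and rank $1$ to all other edges, and asserts that $S$ respects $r_S$ because $G_S[S]$ has a rank-$0$ spanning tree while $\delta(S)$-edges have rank $1$. This establishes the ``higher rank than internal MST edges'' half of the respect condition, but it does not address the requirement that $\MST_r(G_S)$ cross $\delta(S)$ exactly once, which fails whenever $G_S[\overline{S}]$ is disconnected (and, with a two-valued rank function, even whether the MST is unique and how ties are broken becomes relevant). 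You explicitly identified this gap and repaired it: a single hub vertex $h$ added to $G$ during the polynomial preprocessing phase guarantees $G'_S[\overline{S}]$ is connected for every query $S \subseteq V$ without affecting the independence status of $S$, and your three-tier rank function (lowest on $G'_S[S]$, middle on $G'_S[\overline{S}]$, highest on $\delta(S)$) makes the Kruskal run unambiguously span $S$, then span $\overline{S}$ through $h$, then cross once. Your accounting of $k' = \vol_{G'}(S)$, the cycle-cutting argument ruling out any $T \neq S$, and the $O(1)$-per-query simulation of $G'_S$'s adjacency list/matrix, weight, and rank oracles are all sound. In short, same reduction and same gadget, but your version closes a genuine hole in the paper's ``it is clear that $S$ respects $r_S$'' step.
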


Given a graph $G = (V, E,w)$ and a query set $S$, we create $G_S$ by adding a cycle of weight $k+1$, where $k=\vol_G(S)$ as in the lower bound for \minextset. We define a rank function $r_S$ with rank value $0$ on the edges added as part of the cycle, and all other edges have rank $1$. From the construction, it is clear that $S$ respects the rank function $r_S$ in $G_S$ as it has a spanning tree in $G_S[S]$ where each edge has rank value $0$ and any edge that belongs to cut $E(S, V\setminus S)$ has rank value $1$ which is more than $0$.

\begin{proof}[Proof of \Cref{lb:localmst}]
Assume we have an algorithm $\mathcal{A}$ that solves \Cref{prob:localmst} in time $O(\sigma^{2-\eps})$ for some $\eps > 0$. Given any graph $G$ we do no pre-processing and whenever we get a query set $S$ we construct $G_S$ and the corresponding rank function $r_S$ (implicitly).

Hence we can use the algorithm $\mathcal{A}$ on graph $G_S$ with any arbitrary $x\in S$ and rank function $r_S$ to find a $(x,\sigma+1,k)$-set in $O(\sigma^{2-\eps})$ time, where $\sigma=|S|$ and $k=\vol_G(S)$.

If $\mathcal{A}$ finds a $(x,\sigma+1,k)$-set respecting $r_S$ in $G_S$ then it has to be $S$, because from \Cref{onlyS} we know that $S$ is the only $(x,\sigma+1,k)$-set. Hence $S$ is not an independent set.

Else $\mathcal{A}$ finds no set and confirms that $S$ is independent. It only takes $O(|S|^{2-\eps})$ time thus contradicting \Cref{omv}. So no such algorithm $\mathcal{A}$ exists.
\end{proof}

\subsection{Lower bound for SmallLocalSeparation}\label{sec:8.3}

\cite{fomin2013parameterized} proves \Cref{prob:smalllocsep} is NP-complete when the graph is unweighted by giving a reduction from \textsc{Clique} problem in regular graphs. We extend the theorem and prove that \Cref{prob:smalllocsep} is $W[1]$-hard for weighted graphs when parameterized by $\sigma$.
It is important to note that \Cref{prob:smalllocsep} has a stronger guarantee than the \Cref{prob:minextset}. We call it \textsc{SmallLocalSeparation}.

\begin{problem}[\textsc{SmallLocalSeparation}]\label{prob:smalllocsep}
Given a graph $G=(V, E,w)$, parameters $\sigma,k$ and a vertex $x \in V$, either find a $(x,\sigma,k)$-set or guarantee that no such set exists.
\end{problem}

It is worthwhile to note that the \minextset problem is a relaxation of the \textsc{SmallLocalSeparation}. Although in both problems we find a $(x,\sigma,k)$-set, the guarantee that we give when we do not find such a set is weaker in \minextset as we are only ruling out $(x,\sigma,k)$-extreme sets, while stronger in \textsc{SmallLocalSeparation} as we have to rule out all $(x,\sigma,k)$-sets. This relaxation is crucial to the tractability of \minextset as we exploit the structure involved in extreme sets.

\begin{lemma}\label{lb:smalllocsep}
\Cref{prob:smalllocsep} is $W[1]$-hard in parameter $\sigma$.
\end{lemma}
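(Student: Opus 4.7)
My plan is to give an FPT reduction from $k$-\textsc{Clique} on $d$-regular graphs (a standard $W[1]$-hard problem parameterized by $k$) to \textsc{SmallLocalSeparation}, producing an instance whose cardinality parameter $\sigma$ depends only on $k$. To avoid clashing with the problem's cut-threshold parameter $k$, I rename the latter to $K$ in the reduction.

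Given a $d$-regular graph $H=(V_H, E_H)$ on $n$ vertices with $d \geq k-1$ (otherwise no $k$-clique can exist and we output a trivial NO instance), I construct $G=(V,E,w)$ as follows: let $V = V_H \cup \{x\}$ where $x$ is a fresh vertex serving as the seed; for every $(u,v) \in E_H$, add edge $(u,v)$ of weight $1$; for every $v \in V_H$, add edge $(x,v)$ of weight $\alpha := d - k + 2$. Set $\sigma := k+2$ and $K := n\alpha - k + 1$. The construction is polynomial-time and $\sigma$ depends only on $k$, making this an FPT reduction with respect to $\sigma$.

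For correctness, any candidate set $S \ni x$ with $|S| < \sigma$ decomposes as $S = \{x\} \cup S'$ with $j := |S'| \leq k$, and a direct computation gives
\[
w(\delta(S)) \;=\; (n-j)\alpha + jd - 2|E_H(S')| \;=\; n\alpha + j(k-2) - 2|E_H(S')|.
\]
Three cases arise. If $j=k$ and $S'$ is a clique, then $|E_H(S')|=\binom{k}{2}$ yields $w(\delta(S)) = n\alpha - k < K$, so $S$ is a valid $(x,\sigma,K)$-set. If $j=k$ but $S'$ is not a clique, then $|E_H(S')| \leq \binom{k}{2}-1$, so $w(\delta(S)) \geq n\alpha - k + 2 > K$. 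If $j < k$, bounding $|E_H(S')| \leq \binom{j}{2}$ gives $w(\delta(S)) \geq n\alpha + j(k{-}1{-}j) \geq n\alpha \geq K$ (the minimum over $j \in [0,k-1]$ being attained at $j=0$ and $j=k-1$). Thus $G$ admits an $(x,\sigma,K)$-set iff $H$ contains a $k$-clique, and any $f(\sigma)\cdot |V|^{O(1)}$-time algorithm for \textsc{SmallLocalSeparation} would solve $k$-\textsc{Clique} in $f(k+2)\cdot n^{O(1)}$ time.

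The main subtlety lies in tuning $\alpha$ so that the ``clique'' configuration simultaneously beats (i) the singleton cut $w(\delta(\{x\})) = n\alpha$, (ii) every under-sized choice $j<k$, and (iii) every $k$-set missing an edge. Increasing $\alpha$ makes the singleton and small sets expensive, but also inflates the clique cut by $(n-k)\alpha$; the choice $\alpha=d-k+2$ is exactly the smallest integer weight making the clique cheaper than $\{x\}$ by margin $k$, leaving slack $\geq 1$ to reject all other configurations. The regularity assumption on $H$ is standard and can be removed by padding $H$ into a regular graph preserving the clique number.
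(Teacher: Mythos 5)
Your reduction is correct and verifiably self-contained; I checked the arithmetic for all three cases ($j=k$ clique, $j=k$ non-clique, $j<k$) and the calibration $\alpha = d-k+2$, $K = n\alpha - k + 1$ does exactly what you claim, with the $d \geq k-1$ preprocessing step correctly handling the degenerate case. However, your route is genuinely different from the paper's. The paper adapts the construction of Fomin et al.\ essentially verbatim: it keeps the ``edge node'' gadget (one auxiliary vertex per edge of $H$, joined by two unit-weight edges to the endpoints) and replaces only the base clique by a single weighted terminal $x$ with edges of weight $d$ to every vertex node, yielding a host graph on $1 + n + m$ vertices with $\sigma = 2 + t + \binom{t}{2}$ and $k = dn - 2\binom{t}{2} + 1$; the ``reward'' for picking a clique there comes from absorbing the edge nodes of the clique's $\binom{t}{2}$ internal edges. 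Your construction dispenses with edge nodes entirely: the host has only $n+1$ vertices, $\sigma = k+2$, and the reward is encoded directly in the degree count $jd - 2|E_H(S')|$, with $\alpha$ tuned so that only the full clique drops the cut value below threshold. Both proofs rely on weighted edges and on $d$-regularity; yours buys a smaller, arguably more transparent instance and a smaller parameter ($k+2$ versus $\Theta(t^2)$), at the cost of a slightly more delicate threshold calibration (the parabola argument for $j < k$). Either is a valid FPT reduction establishing $W[1]$-hardness in $\sigma$.
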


At a high level, the proof directly extends from the NP-complete proof of \Cref{prob:smalllocsep} in the unweighted case in \cite{fomin2013parameterized}, by replacing the base clique of $dn$ nodes with a single vertex and joining it to all the vertices of the original graph directly with edges of weight $d$. The base clique is needed in their construction to keep the graph unweighted, however, we chose to prove for the weighted case at the cost of getting a better hardness i.e., proving $W[1]$-hard.

We know that $t$-\textsc{Clique} problem is $W[1]$-hard when parameterized by $t$, where $t$ is the number of vertices in the clique, even on $d$-regular graphs.

We define the construction of graph $G'$ from $G$ that reduces $t$-\textsc{Clique} problem to \Cref{prob:smalllocsep}, that is used in proving \Cref{lb:smalllocsep}.

Let a $d$-regular unweighted graph $G = (V, E)$ be the input of the $k$-Clique problem. The input for \textsc{SmallLocalSeparation} $G' = (V', E', w')$ is constructed as follows. Vertex set $V'$ consists of the vertex $x$ to be used as a terminal in \textsc{SmallLocalSeparation}, so we call it ``terminal node" and set of nodes corresponding to the vertices and edges of graph $G$, call them ``vertex nodes" and ``edge nodes" respectively. Hence $|V'| = 1+n+m$. Add an edge of weight $d$ from $x$ to all vertex nodes. Hence the degree of $x$ would be $dn$. For every edge $e=(u,v)\in E$, add two edges, each of weight one, from the edge node to the corresponding vertex nodes in $V'$. Below we prove that the graph $G'$ consists of a small local cut of size $\sigma$ if and only if $G$ has a $t$-clique and $\sigma = \poly(t)$ as required for proving fixed-parameter intractability.

\begin{figure}[htbp!]
    \begin{center}
        \includegraphics[scale=0.3]{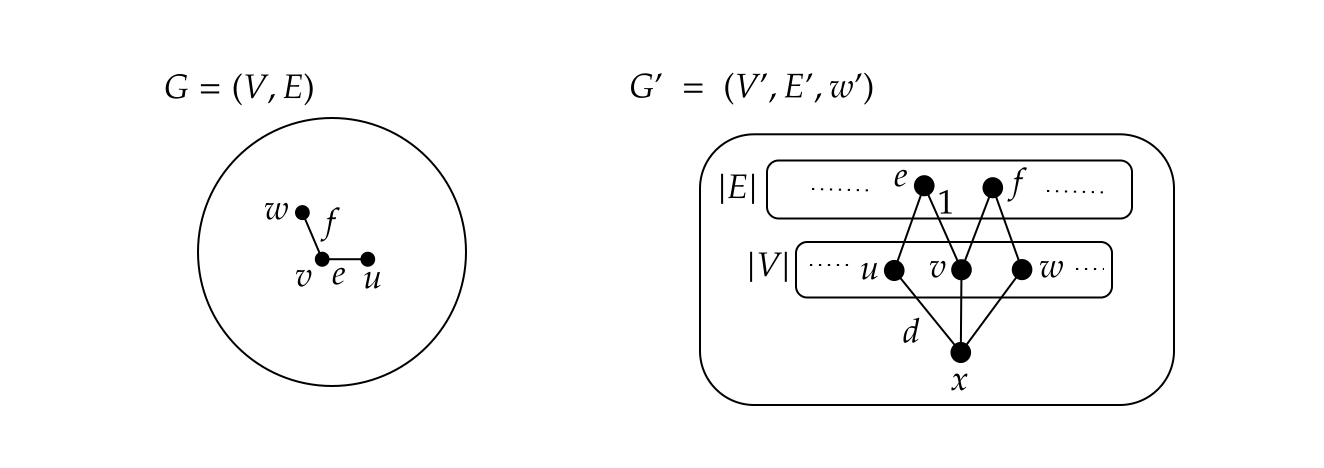}
    \end{center}
    \caption{$W[1]$-hardness of \textsc{SmallLocalSeparation}}
\end{figure}

\begin{lemma}\label{redn:smalllocsep}
Let $\sigma = 2+t+\binom{t}{2} = \poly(t)$ and $k=dn-2\binom{t}{2}+1$. $(x,\sigma,k)$-set exists in $G'$ iff a clique of size $t$ exists in $G$.
\end{lemma}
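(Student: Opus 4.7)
The plan is to handle the two directions separately, and in both cases to use the structure of $G'$ (a terminal $x$ joined to all vertex nodes with edges of weight $d$, and each edge node joined to its two vertex nodes by unit-weight edges) to rewrite the cut value $w(\delta(S))$ in a form that depends only on combinatorial quantities of the induced subgraph of $G$ on the chosen vertex nodes. The $d$-regularity of $G$ is what makes this simplification clean.

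For the ``if'' direction, assume $G$ has a $t$-clique $C$. I would take
\[
S = \{x\} \cup \{v' : v \in C\} \cup \{e' : e \in E(G[C])\},
\]
so $|S| = 1 + t + \binom{t}{2} < \sigma$. The cut $\delta(S)$ consists of two parts: the $x$-edges to the $n-t$ vertex nodes outside $C$, contributing $d(n-t)$; and the unit edges from vertex nodes in $C$ to edge nodes outside $S$, i.e., edges of $G$ incident to $C$ but not inside $C$, of which there are $td - 2\binom{t}{2}$ by $d$-regularity. Summing gives $w(\delta(S)) = dn - 2\binom{t}{2} = k-1 < k$, so $S$ is an $(x,\sigma,k)$-set.

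For the ``only if'' direction, let $S$ be any $(x,\sigma,k)$-set and write $V_S \subseteq V(G)$ for its vertex-node part and $E_S \subseteq E(G)$ for its edge-node part, with $a = |V_S|$. Classify the edges of $G$ as $E_{VV}$ (both endpoints in $V_S$), $E_{VO}$ (one endpoint in $V_S$), $E_{OO}$ (neither), and write $e_{vv} = |E_{VV}|$, etc.; further let $e'_{vv}, e'_{vo}, e'_{oo}$ be the sizes of the intersections of $E_S$ with these three classes. A straightforward case analysis per edge of $G$ gives
\[
w(\delta(S)) = d(n-a) + 2(e_{vv} - e'_{vv}) + e_{vo} + 2 e'_{oo}.
\]
Substituting $e_{vo} = da - 2 e_{vv}$, which is the $d$-regularity identity applied to $V_S$, collapses this to
\[
w(\delta(S)) = dn - 2 e'_{vv} + 2 e'_{oo}.
\]

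The remaining step is to extract a clique. The cut bound $w(\delta(S)) \le dn - 2\binom{t}{2}$ forces $e'_{vv} \ge \binom{t}{2} + e'_{oo} \ge \binom{t}{2}$, while the cardinality bound $1 + a + e'_{vv} + e'_{vo} + e'_{oo} \le 1 + t + \binom{t}{2}$ gives $a + e'_{vv} \le t + \binom{t}{2}$. Combining these yields $a \le t$, and then $\binom{a}{2} \ge e_{vv} \ge e'_{vv} \ge \binom{t}{2}$ forces $a \ge t$. Hence $a = t$ and $e'_{vv} = e_{vv} = \binom{t}{2}$, i.e., $V_S$ spans a $t$-clique in $G$. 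The main obstacle, and really the only nontrivial step, is the careful bookkeeping that leads to the identity $w(\delta(S)) = dn - 2e'_{vv} + 2e'_{oo}$; once this is established the two inequalities pinch $a$ to exactly $t$ and force the induced subgraph to be complete.
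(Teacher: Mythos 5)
Your proof is correct and follows essentially the same approach as the paper: the same construction of $S$ in the forward direction, the same classification of edges of $G$ relative to the chosen vertex nodes, and the same final pinch $a \le t$ (from cardinality) against $a \ge t$ (from $\binom{a}{2}\ge e'_{vv}\ge\binom{t}{2}$). Your explicit identity $w(\delta(S)) = dn - 2e'_{vv} + 2e'_{oo}$ is a cleaner, self-contained way to express the paper's informal "adding an internal/cut/external edge node changes the cut by $-2/0/+2$" bookkeeping, and it also renders the paper's separate preliminary checks that $V_o\neq\emptyset$ and $E_o\neq\emptyset$ unnecessary.
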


\begin{proof}

($\impliedby$) Let $S$ be a clique of size $t$ in $G$. Construct the set $S'$ as follows. It consists of the terminal node $x$, vertex nodes corresponding to the set $S$ in $V'$ and all the edge nodes corresponding to the edges $E(S,S)$. Since $S$ is a clique in $G$, the size of $S'$ is $1+|S|+|E(S,S)| = 1+t+\binom{t}{2} < \sigma$.

The size of the cut $\delta_{G'}(S')$ is the total number of edges going out of $S'$. The weight of the cut edges incident on $x$ is $d(n-|S|) = d(n-t)$. The degree of vertex nodes excluding the edges to the terminal node is $d$ according to our construction. Hence, the total degree of vertex nodes is $d|S|$. However, since all the edge nodes $E(S, S)$ are included in the set $S'$, the degree to them is subtracted. Since each such edge is counted twice because of both endpoints. The weight of the cut edges due to vertex nodes is $d|S|-2|E(S, S)| = dt-2\binom{t}{2}$. Since all the edges from edge nodes are inside the set $S'$ they do not contribute to cut edges. Hence, the cut size of $S'$ is $d(n-t)+dt-2\binom{t}{2} = dn-2\binom{t}{2} < k$. This proves that $S'$ is a $(x,\sigma,k)$-set.

($\implies$) If there exists a $(x,\sigma,k)$-set $S'$ in $G'$, then it contains $x$ by definition. Let $V_o, E_o$ be the vertex nodes and edge nodes present in the set $S'$. $V_o$ cannot be empty as every edge node adds a degree $2$ and the cut size would be $dn+2|E_o|$, which violates the cut size constraint. Hence, $V_o\neq \emptyset$. Assuming $E_o=\emptyset$. The cut size would be $|\delta_{G'}(V_o \cup \{x\})| = d(n-|V_o|)+d|V_o| = dn$ which violates the cut size constraint. Hence $E_o\neq \emptyset$.

There are three types of edges in $G$ with respect to the vertex set $V_o$ chosen, internal edges $E_G(V_o,V_o)$, cut edges $E_G(V_o,V\setminus V_o)$ and external edges $E_G(V\setminus V_o,V\setminus V_o)$. (Note that we misuse the notation $V_o$, what we meant is the corresponding vertex set in $G$.)

Adding an internal edge node to $S'$ would reduce the cut size by two, a cut edge node to $S'$ would not change the cut size, and an external edge node to $S'$ would increase the cut size by two. Hence to reduce the cut size from $dn$ to below $k = dn-2\binom{t}{2}+1$ we should add at least $\binom{t}{2}$ internal edges. To have $|E_G(V_o,V_o)|\geq \binom{t}{2}$, the size of $V_o$ must be at least $t$. Since the number of nodes in the set $S'$ is bounded by $1+t+\binom{t}{2}$, the number of vertices in $V_o$ cannot exceed $t$ making it equal to $t$. It implies $|E(V_o,V_o)| = \binom{t}{2}$. Thus the set of vertices corresponding to $V_o$ in $G$ have to induce a clique of size $t$.
\end{proof}

Given the above reduction, we can now prove the $W[1]$-hardness of the \textsc{SmallLocalSeparation}. Observe that the set $S'$ is not extreme in $G'$ as the size of degree cut of any edge node is $2$ which is less than the size of the cut $\delta_{G'}(S')$. Hence, this hardness does not work for \Cref{prob:minextset}.

\begin{proof}[Proof of \Cref{lb:smalllocsep}]
Given an instance of $t$-\textsc{Clique}, $d$-regular unweighted graph $G = (V,E)$ and a parameter $t$, we construct an instance of weighted graph $G' = (V',E',w')$ as shown above and set parameters $x,\sigma=\poly(t),k$.

From \Cref{redn:smalllocsep} any FPT algorithm in parameter $\sigma$ for \Cref{prob:smalllocsep} can be turned into an FPT algorithm for $t$-\textsc{Clique} as $\sigma = \poly(t)$. Since $t$-\textsc{Clique} is $W[1]$-hard, \Cref{prob:smalllocsep} is $W[1]$-hard with parameter as $\sigma$.
\end{proof}
\section{Approximating edge strength}\label{sec:apprxes}
In this section we prove the \Cref{cor:apprxedgstn}.

\begin{definition}[Edge Strength]
The strength of an edge $e=(u,v)$ in a graph $G$ is the largest $k$ for which there exists a set $S\subseteq V$ containing $u,v$ such that $G[S]$ is $k$-edge-connected.
\end{definition}

Hence, if two vertices of an edge belong to the same set in the maximal $k$-edge-connected partition then the edge has strength at least $k$. As we know, the maximum possible strength of any edge is $\Delta \leq (n-1)W = \poly(n)$ where $W$ is the maximum weight of an edge in the graph. As any graph can be disconnected by removing $n-1$ edges.
Assuming that $W$ is bounded by $\poly(n)$, edge strength is also bounded by $\poly(n)$.

\begin{lemma}\label{edgestrength}
Given a maximal $k$-edge-connected partition of vertices:
\begin{enumerate}
    \item Strength of edge contained in a subgraph is at least $k$.
    \item Strength of edge that is across two subgraphs is less than $k$.
\end{enumerate}
\end{lemma}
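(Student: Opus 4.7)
The plan is to prove the two parts separately, both directly from the definitions and properties of maximal $k$-edge-connected subgraphs established earlier.

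For part (1), I would argue directly from the definition of edge strength. If edge $e=(u,v)$ lies inside a subgraph $G[V_i]$ of the maximal $k$-edge-connected partition, then $V_i$ itself is a set containing both $u$ and $v$ such that $G[V_i]$ is $k$-edge-connected. By definition, the strength of $e$ is the largest such $k'$, so the strength of $e$ is at least $k$. This part is essentially a one-line unfolding of the definitions.

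For part (2), I would proceed by contradiction. Suppose $e = (u,v)$ crosses two distinct parts $V_i \ni u$ and $V_j \ni v$, and suppose for contradiction that the strength of $e$ is at least $k$, so there exists $S \ni u,v$ with $G[S]$ being $k$-edge-connected. The hard part (the only nontrivial step) is to show that $G[V_i \cup S]$ is then also $k$-edge-connected, which would contradict the maximality of $V_i$, since $v \in S \setminus V_i$ makes $V_i \cup S$ a strict superset of $V_i$. I would establish the union closure via a standard cut-splitting argument: any nontrivial partition $(A,B)$ of $V_i \cup S$ induces partitions of $V_i$ and of $S$, and since $u \in V_i \cap S$ lies on one side (say $A$), at least one of $V_i$ or $S$ must be nontrivially split by $(A,B)$ (otherwise $B$ would be empty). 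That split yields a cut in either $G[V_i]$ or $G[S]$ of value $\geq k$, whose edges are a subset of the edges crossing $(A,B)$ in $G[V_i \cup S]$, so $w(\delta_{G[V_i \cup S]}(A)) \geq k$.

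The main obstacle is formalizing this union-closure lemma cleanly. It is a classical observation, but one has to be careful that the extra edges of $G[V_i \cup S]$ going between $V_i \setminus S$ and $S \setminus V_i$ only add to cut values and never reduce them, and that the choice of the ``split side'' (whether $V_i$ or $S$ gets cut) is handled by the fact that $u$ lies in both. Once the union-closure is in hand, the rest is immediate from \Cref{prop:mcpunique} and the definition of maximality.
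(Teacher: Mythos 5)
Your proposal is correct and takes essentially the same approach as the paper: part (1) by unfolding the definition, and part (2) by contradiction via the union-closure fact that two $k$-edge-connected induced subgraphs sharing a vertex have a $k$-edge-connected union, which contradicts maximality of $V_i$. The only difference is that the paper simply asserts this union-closure fact (here and in the proof of Proposition~\ref{prop:mcpunique}), whereas you spell out the cut-splitting argument behind it.
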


\begin{proof}
Let $\mathcal{R} = \{V_1,V_2,\cdots\}$ be the maximal $k$-edge-connected partition of vertices. First part trivially follows from the definition.

Any edge $e=(u,v)$ across two partitions cannot have a $k$-edge-connected component surrounding it. Assume for contradiction that a subgraph $H$ exists which is $k$-edge-connected and containing $u,v$ both. 

Since the edge is across two components in the maximal $k$-edge-connected partition, let $V_i$ be the component that contains the vertex $u$. Since $H$ is $k$-edge-connected and $G[V_i]$ is $k$-edge-connected, hence $H \cup G[V_i]$ is also $k$-edge-connected which contradicts the maximality of the $k$-edge-connected subgraph induced on $V_i$. Thus such a subgraph $H$ does not exist. Hence the strength of any inter-partition edge is less than $k$, proving the second part of the lemma.
\end{proof}

Hence we approximate the strength of every edge followed by upper and lower bounding it in an interval of $(1+\eps)$.

\begin{proof}[Proof of \Cref{cor:apprxedgstn}:]
The maximum value of $k$ for which a graph can be $k$-edge-connected is bounded by $\poly(n)$. Hence divide the space in powers of $(1+\eps)$ leading to $O(\frac{1}{\eps}\log n)$ many values.

We invoke the subroutine given in $\Cref{alg:mcp}$ with $k$ taking all powers of $(1+\eps)$. Using \Cref{edgestrength}, we can find upper and lower bounds to the strength of each edge up to a factor of $(1+\eps)$. 

So the total runtime is at most $\Otil(m\cdot\min(m^{3/4},n^{4/5})/\eps)$.
\end{proof}

\section*{Acknowledgements}
We thank Bundit Laekhanukit for insightful discussion at the early stage of this project.

\appendix

\section{Rank function and its guarantees}\label{sec:invprobdist}

\begin{proof}[Proof of \Cref{lem:rankfn}]
Let us start with a warm up first. Let $e_1,e_2,e_3$ be three edges that sample values $x,y,z$ in $[0,1]$. Then the probability of $x$ being minimum among the three values $x,y,z$ is
\[\Pr(x=\min(x,y,z)) = \int_{0}^{1} \int_{x}^{1} \int_{x}^{1} \,dz \,dy \,dx = \frac{1}{3}\]

Hence choosing the edge with a minimum rank among $x,y,z$ is the same as choosing a uniform edge. Now let us see how to extend this to edges with weights.

Since weights are integral, each edge now chooses as many samples as its weight, and we choose the edge with the minimum random value as in warm-up. So similar to the warm-up, the probability of a particular sample being minimum is

\[\text{Probability of any sample value is minimum} = \frac{1}{\sum_{f}w(f)}\]

However, since each edge has multiple samples, we choose an edge $e$ if any of its $w(e)$ samples of the edge are minimum. Hence the probability of choosing an edge $e$ is

\[\text{Probability of any $w(e)$ samples being minimum} = \frac{w(e)}{\sum_{f}w(f)}\].

It is the same as choosing an edge with a weighted probability. Note that the number of random samples needed to sample is $\sum_{f}w(f)$, which can be very large when the weights are large.

We can overcome this problem by simulating the random distribution of the minimum of multiple random samples by sampling a single random variable. Let $X = \min(S_1,S_2,\cdots,S_w)$ be the random variable that is needed to be computed for an edge with weight $w$ where $S_i$ are uniform random samples in $[0,1]$. The cumulative density function of $X$ is as follows.
\begin{align*}
    \Pr(X>t) &= \Pr(S_1>t \text{ and } S_2>t\text{ and }\cdots S_w>t)\\
    &= (1-t)^w\\
    \Pr(X\leq t) &= 1-\Pr(X>t)\\
    &= 1 - (1-t)^w\\
\end{align*}

We can simulate the distribution of $X$ for an edge with weight $w$ as follows. Let $R$ be a uniform random variable in $[0,1]$. $\Pr(R\leq p) = p$ for any $p\in [0,1]$. Defining $Y= f(R) = 1-(1-R)^{1/w}$, we have
\begin{align*}
    \Pr(R\leq p) &= p = \Pr(Y \leq 1-(1-p)^{1/w}) \tag{as $Y= f(R)$}\\
    \Pr(Y \leq t) &= 1-(1-t)^w \tag{taking $1-(1-p)^{1/w} = t$}\\
\end{align*}

Hence we have a new random variable $Y$ whose cumulative density function is the same as that of $X$ and uses a single random sample by just plugging the single random sample $R$ in the function given, which is nothing but the inverse probability distribution of the random variable $X$.
\end{proof}

We have shown that a random weighted edge can be sampled using the rank function. However, we have considered a random sampling of a real value in the interval $[0,1]$ which is not possible using a computer, so we need to discretize the space and sample a discrete value in $[0,1]$ which results in a sampling error.

We show that dividing the space into very small discrete parts $\frac{1}{\poly(n)}$ would bound the error in the rank with high probability. Since the error in the rank of the edges is small, the probability with which order of ranks of any two edges would change due to this error is small too. Bounding the probability of error over all the pairs of edges we prove that the probability of order of rank of edges changing due to discretizing the space $[0,1]$ is very small.

Let us first discretize the space $[0,1]$ into $n^{10}$ parts and sample a number in $t\in [0,n^{10}]$ and return $\frac{t}{n^{10}}$. Recall that we sample a uniform random value $x \in [0,1]$ and use the function $r(e) = 1-(1-x)^{1/w(e)}$. The error in the sample $x$ due to discretizing the space is at most $\frac{1}{n^{10}}$. Hence, $dx = \frac{1}{n^{10}}$.

\begin{align*}
    \frac{d r(e)}{dx} &= \frac{1}{w(e)}\cdot(1-x)^{\frac{1}{w(e)}-1}\\
    &= \frac{1-r(e)}{w(e)(1-x)}\\
    &\leq \frac{1}{1-x} \tag{$w(e)\geq 1$ and $r(e)\in [0,1]$}\\
    &\leq n^3 \tag{when $x\leq 1-\frac{1}{n^3}$}\\
    d r(e) &\leq \frac{1}{n^7}\\
\end{align*}

To bound the error in $r(e)$ we ignore some range of the $x$, the probability of we coming up with such an $x$ is at most $\frac{1}{n^3}$. This has to be true for the uniform samples chosen for all the edges which can be at most $O(n^2)$. Hence union bound over all those edges the probability of error in $r(e)$ exceeding over $\frac{1}{n^7}$ is at most $\frac{m}{n^3}\leq \frac{1}{n}$. Hence with high probability we can assume that error in $r(e)$ is very small due to discretizing the space. 

We are not yet done, although the error in rank is small we have to bound the probability of the event in which the order of the edges according to rank is changed due to the error in the rank of the edges. We show that it occurs with very low probability. The high level idea is, because the error in rank is small for another edge to change the order it has to fall in the range of the small error and it occurs with very low probability.

Let $e,f$ are two edges and we have proved above that error in $r(e)$ is at most $\frac{1}{n^7}$. For $f$ to change order with respect to $e$, $r(f)$ has to lie in a range of $\frac{1}{n^7}$ of $r(e)$. Hence we have to bound the size of the domain for which the range is small. We have $d r(f)\leq \frac{c}{n^7}$ for some constant $c$.

\begin{align*}
    \frac{dr(f)}{dx} &= \frac{1}{w(f)}\cdot (1-x)^{\frac{1}{w(f)}-1}\\
    dx &= w(f)\cdot(1-x)^{1-\frac{1}{w(f)}}\cdot dr(f)\\
    &\leq \frac{c}{n^5} \tag{$1\leq w(f)\leq n^2$}\\
\end{align*}

Applying a union bound over all possible pairs of edges we have at most $O(n^4)$ pairs and thus error probability is at most $O(n^4)\cdot\frac{c}{n^5} \leq \frac{c}{n}$. Hence with high probability no pair of edges change their order of rank due to discretizing the space.

So the total error probability of the event in which the error in rank is high or any pair of edges swap their order due to the error in rank is at most $O(\frac{1}{n})$. Thus we say that order of edges with respect to rank does not change due to discretizing space with high probability. Since all we care about is the order of the edges with respect to rank while choosing an edge to contract, the process outputs a random edge with high probability.

\section{Local algorithm for minimal extreme set}\label{sec:minextset}

From \cite{cen2022edge} we know that all extreme sets in the graph form a laminar family. So all extreme sets containing a particular vertex say $x$ are contained in one another. So we define a minimal $k$-extreme set containing a particular vertex $x$ as follows.

\begin{definition}[Minimal $(x,k)$-extreme set]
A $k$-extreme set $S$ containing the vertex $x$ is called \emph{minimal $(x,k)$-extreme set} if no strict subset of $S$ containing $x$ is a $k$-extreme set.
\end{definition}

Note that this is guaranteed to exist if the graph $G$ is not $k$-connected i.e., the minimum cut of $G$ is $<k$.

We can define \emph{minimal $(x,\nu,\sigma,k)$-extreme set} analogously. However, this may not be guaranteed to exist as the volume of the $k$-extreme set need not be $<\nu$ or the cardinality need not be $< \sigma$.

Before we present an algorithm for finding the minimal $(x,\nu,k)$-extreme set in case if it exists, we need to observe some properties of the extreme sets that help us to prove the correctness of the algorithm. From \cite{cen2022edge} we know that the extreme sets in a graph form a laminar family. Hence all extreme sets containing $x$ are contained in one another. Out of all possible extreme sets containing $x$ let us consider those extreme sets that are $(\nu,\sigma,k)$-extreme sets i.e. they satisfy the volume, cardinality, and cut size constraints. If there exists at least one such extreme set containing $x$, then there exists a rank function respected by that $(x,\nu,\sigma,k)$-extreme set among the $O(\sigma^2\log n)$ random rank functions. Hence we can find the set using an algorithm similar to \Cref{alg:blp}.

Cactus graph is a graph in which every edge of the graph belongs to at most one simple cycle. From \cite{dinic1976system} we know that all minimum cuts of a graph $G$ can be represented in the form of a cactus graph $H$ where each vertex of the cactus maps to a disjoint set of vertices in the original graph $G$ and can also map to an empty set. Every cut induced in the cactus graph by the removal of a single non-cycle edge or two cycle edges that are part of same cycle will induce a corresponding cut in the original graph. Each such cut is minimum cut in the original graph. Cactus graph is succinct representation of all minimum cuts of the original graph. When we have a unique minimum cut in the graph, the cactus representation of minimum cuts of such a graph would be a single edge. From \cite{panigrahi2009near} we know that cactus representation of the graph can be found in near linear time. Hence we can check if a graph is having a unique minimum cut or not in near linear time. Let $\textsc{isExtreme}$ be the sub-routine to check if a graph is extreme set or not.

We formally define the problem as follows.

\begin{problem}\label{prob:exactminimal}
Given a graph $G = (V,E,w)$, parameters $\nu,\sigma,k$ and a vertex $x\in V$, find the $(x,\nu,\sigma,k)$-minimal extreme set or return $\bot$ if no such set exists.
\end{problem}

We will now present the local algorithm that solves \Cref{prob:exactminimal}. The algorithm is similar to \Cref{alg:blp} except that we also check if the returned set is extreme or not.

\begin{algorithm}[H]
\KwData{$G = (V,E,w),x,\nu,\sigma,k$} 
\KwResult{$(x,\nu,\sigma,k)$-minimal extreme set or $\bot$ if no $(x,\nu,\sigma,k)$-extreme set exists}
$S = V$\;
\For{$ i \in [O(\sigma^2 \log n)]$}
{
    $X = \{x\}$\;
    \While{$vol(X) < \nu$ and $|X| < \sigma$}
    {
        \If{$w(\delta(X)) < k$ and $\textsc{isExtreme}(G/\overline{X})$ and $X \subset S$}
        {
            $S = X$\;
            Break the while loop\;
        }
        Find the edge with minimum rank $r^i$, $e = (u,v) \in E(X,V\setminus X) = \delta(X)$\;
        Update $X = X \cup \{v\}$\;
    }
}
\eIf{$S \neq V$}
{
    \Return $S$\;
}
{
    \Return $\bot$\;
}
\caption{\textsc{MinimalExtremeSet}}
\label{alg:exactminimal}
\end{algorithm}

\begin{lemma}
\Cref{alg:exactminimal} solves the \Cref{prob:exactminimal} with high probability.
\end{lemma}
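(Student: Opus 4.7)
The plan is to combine the random-rank-function framework of \Cref{sec:extsets} with the laminarity of extreme sets: the former gives us a constant-advantage procedure that can hit any particular target extreme set, while the latter lets us identify that target uniquely. Concretely, by the cited result of \cite{cen2022edge}, all extreme sets of $G$ form a laminar family, so the $(x,\nu,\sigma,k)$-extreme sets containing $x$ are totally ordered by inclusion, and if any such set exists there is a unique minimal one, which I call $S^*$.

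For soundness, I would observe that every update $S \gets X$ inside the algorithm is guarded by exactly the conditions defining a $(x,\nu,\sigma,k)$-extreme set: the while loop enforces $\vol(X) < \nu$ and $|X| < \sigma$; the invariant $x \in X$ holds because $X$ is initialized to $\{x\}$ and only grows; the test $w(\delta(X)) < k$ is explicit; and $\textsc{isExtreme}(G/\overline{X})$ certifies that $X$ is extreme in $G$ via the cactus-based unique-mincut check (near-linear by \cite{panigrahi2009near}). Consequently, any output $S \neq V$ is a valid $(x,\nu,\sigma,k)$-extreme set, and if no such set exists then $S$ is never updated, so $\bot$ is correctly returned.

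For completeness, suppose $S^*$ exists. I would reuse \Cref{lem:rrrprob} verbatim applied to $S^*$: a single random rank function is respected by $S^*$ with probability at least $2/(\sigma(\sigma-1))$, so over $O(\sigma^2 \log n)$ independent rank functions $r^i$, with probability $1 - n^{-\Omega(1)}$ at least one of them is respected by $S^*$. In that ``good'' iteration, the edge-by-edge expansion from $x$ along the minimum-rank cut edge is exactly Prim's algorithm with weights $r^i$; by the respecting property, Prim must first span all of $S^*$ before using any edge of $\delta(S^*)$. Therefore, at the moment $X$ first equals $S^*$, the four numeric guards all pass and $\textsc{isExtreme}(G/\overline{S^*})$ returns true.

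The main obstacle, and the final step, is to argue that the stored $S$ actually ends up equal to $S^*$ rather than some strict superset or an unrelated set. Here laminarity together with the minimality of $S^*$ does the work: by soundness, every value ever assigned to $S$ is a $(x,\nu,\sigma,k)$-extreme set containing $x$, hence comparable to $S^*$, and by minimality of $S^*$ must be a superset of $S^*$. Therefore, just before the good iteration visits $X = S^*$, either $S = V$ or $S \supsetneq S^*$, so the strict-containment guard $X \subsetneq S$ is satisfied, the update $S \gets S^*$ fires, and no later iteration can overwrite it because no strictly smaller $(x,\nu,\sigma,k)$-extreme set exists. Taking a union bound over the $\Otil(\sigma^2)$ iterations absorbs the high-probability overhead from both \Cref{lem:rrrprob} and the randomized cactus computation inside \textsc{isExtreme}, yielding the claimed guarantee.
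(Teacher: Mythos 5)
Your proof is correct and follows essentially the same approach as the paper's own: pick the minimal $(x,\nu,\sigma,k)$-extreme set $S^*$ (which is well-defined by laminarity), invoke \Cref{lem:rrrprob} to argue that some rank function $r^i$ is respected by $S^*$, and conclude that in that iteration the Prim expansion reaches exactly $S^*$ and stores it, with no other iteration able to overwrite it since no strict subset of $S^*$ is a qualifying extreme set. Your writeup is more explicit than the paper's (cleanly separating soundness from completeness, checking that the $X\subset S$ guard passes, and noting the need to union-bound over the randomized \textsc{isExtreme} calls), but the underlying argument is the same.
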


\begin{proof}
If there is no $(x,\nu,\sigma,k)$-extreme set then the algorithm would not update $S$ from $V$, we return $\bot$ at the end.

If there exists an $(x,\nu,\sigma,k)$-extreme set $S$ then as proved in \Cref{lem:rrrprob}, $S$ would respect at least one of the $O(\sigma^2\log n)$ random rank functions, say $r^*$. Thus set $S$ would be one among the $\sigma-1$ sets obtained using random contractions with respect to $r^*$. Since any other $(x,\nu,\sigma,k)$-extreme set strictly contains $S$ as extreme sets form a laminar family. We would find $S$ and return it without being replaced by any other extreme set.
\end{proof}

\begin{lemma}
\Cref{alg:exactminimal} runs in at most $\Tilde{O}(\nu\sigma^3)$ time.
\end{lemma}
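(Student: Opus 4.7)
The plan is to bound the three levels of work separately: the outer rank-function loop, the inner Prim-style expansion, and the per-iteration extremity check. The outer \textbf{for} loop runs exactly $O(\sigma^2 \log n)$ times, so it suffices to show that the work done per rank function is $\tilde{O}(\nu\sigma)$, which multiplies to the claimed $\tilde{O}(\nu\sigma^3)$ bound.

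Fix a single rank function $r^i$. The while loop terminates as soon as $\vol(X)\ge \nu$ or $|X|\ge\sigma$, so the number of while-loop iterations per rank function is at most $\sigma$. Within the loop I would maintain the cut set $\delta(X)$ in a rank-sorted dictionary exactly as in the proof of \Cref{lem:lpruntime}: each time a new vertex $v$ is absorbed, scan its $\deg(v)$ incident edges to insert edges to $V\setminus X$, delete edges now internal to $X$, and update $w(\delta(X))$, $\vol(X)$, $|X|$. Since the expansion stops once $\vol(X)<\nu$ is violated, the total update work across one rank function is $O((\nu+\sigma)\log \nu)=\tilde O(\nu)$, and extracting the minimum-rank cut edge is charged to the same dictionary. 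So the Prim-style bookkeeping contributes at most $\tilde O(\nu)$ per rank function, i.e.~$\tilde O(\nu\sigma^2)$ overall, which is already within budget.

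The dominant cost is the $\textsc{isExtreme}$ call performed in each iteration on the contracted graph $G/\overline{X}$. The key observation I would prove is that $G/\overline{X}$ has at most $\vol(X)<\nu$ edges and at most $\sigma$ vertices, because every edge of $G/\overline{X}$ is either internal to $X$ or incident to $X$ and the contracted super-node, and all such edges are counted in $\vol(X)$. Using the near-linear cactus representation algorithm of \cite{panigrahi2009near}, one can check in $\tilde{O}(\nu)$ time whether $X$ is the unique minimum cut of $G/\overline{X}$, i.e.~whether $X$ is extreme in $G$. To call $\textsc{isExtreme}$ I would simply rebuild $G/\overline{X}$ from scratch at each iteration in $O(\nu)$ time by scanning the already-maintained information about $X$ and $\delta(X)$; this avoids any nontrivial dynamic maintenance of the contracted graph. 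Thus each iteration costs $\tilde O(\nu)$, and the $\le \sigma$ iterations give $\tilde O(\nu\sigma)$ per rank function.

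Summing over the $O(\sigma^2\log n)$ rank functions yields the total bound $\tilde O(\nu\sigma)\cdot O(\sigma^2\log n)=\tilde O(\nu\sigma^3)$. The main obstacle I anticipate is justifying that $\textsc{isExtreme}$ really runs in $\tilde O(\nu)$ time on each contracted instance: one has to be careful that the cactus construction of \cite{panigrahi2009near} is applied to a multigraph of size $O(\nu)$ and that the ``$X$ is the unique mincut'' test reduces cleanly to inspecting the cactus (essentially, verifying that $X$ corresponds to a degree-$1$ cactus edge of weight equal to $w(\delta(X))$ and no other cactus cut realises this value). Everything else is straightforward bookkeeping analogous to \Cref{lem:lpruntime,lem:blp:runtime}.
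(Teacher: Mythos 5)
Your proof is correct and follows essentially the same decomposition as the paper's: charge $O(\sigma^2\log n)$ for the outer loop, $O(\sigma)$ iterations of the Prim-style while loop per rank function, and $\tilde O(\nu)$ per iteration dominated by the near-linear cactus-based extremity check on $G/\overline{X}$ (which has at most $\vol(X)<\nu$ edges), giving $\tilde O(\nu\sigma)$ per rank function and $\tilde O(\nu\sigma^3)$ total. The added detail about rebuilding $G/\overline{X}$ explicitly and bounding its size is a sensible elaboration of what the paper states more tersely.
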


\begin{proof}
For each of the $O(\sigma^2 \log n)$ rank functions, we perform $\sigma$ many edge contractions and each time we check whether the set is extreme or not, which takes near linear in the number of edges present in $G/\overline{X}$ which is at most $\Otil(\nu)$. It also takes $O(\nu\log \nu)$ for finding the minimum rank edges. So in total it takes at most $\Otil(\nu\sigma) + O(\nu\log \nu) = \Otil(\nu\sigma)$ per rank function. Since we do not know which rank function would be respected by the minimal extreme set, we have to run for every rank function making the running time at most $\Otil(\nu\sigma^3)$.
\end{proof}
\section{Uniqueness of maximal \texorpdfstring{$k$}{k}-edge-connected partition}\label{sec:mcpunique}

The goal of this section is to mainly prove the \Cref{prop:mcpunique}.

\begin{claim}\label{lem:mcpunique}
A maximal $k$-edge-connected subgraphs $\{V_{1},\dots,V_{z}\}$ of any graph $G$ is unique and form a partition of $V$. 
\end{claim}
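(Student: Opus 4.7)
The plan is to establish a ``union lemma'': if $U_1, U_2 \subseteq V$ both induce $k$-edge-connected subgraphs and $U_1 \cap U_2 \neq \emptyset$, then $G[U_1 \cup U_2]$ is also $k$-edge-connected. Once this is in hand, uniqueness and the partition property both follow by standard ``maximal element'' arguments, so the bulk of the work is proving this union lemma.

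To prove the union lemma, I would fix an arbitrary cut $(A, B)$ of $U_1 \cup U_2$ with both $A, B$ nonempty, and show $w(E_G(A,B)) \ge k$. Split into two cases. In the first case, $U_i \cap A$ and $U_i \cap B$ are both nonempty for some $i \in \{1,2\}$; then $(U_i \cap A, U_i \cap B)$ is a cut of $G[U_i]$, so its weight is at least $k$ by $k$-edge-connectivity of $G[U_i]$, and every edge of this cut also crosses $(A,B)$ in $G[U_1 \cup U_2]$, giving $w(E_G(A,B)) \ge k$. In the second case, each $U_i$ lies entirely in $A$ or entirely in $B$; but since $U_1 \cap U_2 \neq \emptyset$, the $U_i$'s cannot be separated (they must land on the same side), which forces all of $U_1 \cup U_2$ to lie in $A$ or in $B$, contradicting that both sides of the cut are nonempty. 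So only the first case occurs, and the union lemma follows.

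Next, I would derive the two consequences. For uniqueness, suppose $\{V_1, \dots, V_z\}$ and $\{V'_1, \dots, V'_{z'}\}$ are two families of maximal $k$-edge-connected subgraphs. If some $V_i$ and $V'_j$ intersect but differ, the union lemma gives that $G[V_i \cup V'_j]$ is $k$-edge-connected, contradicting the maximality of whichever of $V_i, V'_j$ is not the union. Hence the two families coincide. For the partition property, note that every singleton $\{v\}$ is vacuously $k$-edge-connected (it has no nontrivial cut), so each vertex lies in some $k$-edge-connected induced subgraph; taking $V(v)$ to be the union of all vertex subsets $U \ni v$ with $G[U]$ $k$-edge-connected, repeated application of the union lemma shows $G[V(v)]$ is itself $k$-edge-connected, hence $V(v)$ is the unique maximal $k$-edge-connected subgraph containing $v$. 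Thus the $V(v)$'s partition $V$.

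The only real obstacle is the union lemma, and specifically the second case above, where one has to rule out the possibility that the cut $(A,B)$ separates $U_1$ from $U_2$; the disjoint-sides assumption together with $U_1 \cap U_2 \neq \emptyset$ handles this cleanly. The remaining arguments are routine bookkeeping about maximal elements in a family closed under union of intersecting members.
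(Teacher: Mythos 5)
Your proposal is correct and follows essentially the same route as the paper: both pivot on the observation that the union of two intersecting $k$-edge-connected induced subgraphs is $k$-edge-connected, and then derive disjointness and uniqueness by maximality. The only difference is that you spell out the cut-based proof of this union lemma, which the paper asserts without justification.
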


\begin{proof}
First we prove that the maximal $k$-edge-connected subgraphs form a partition. Any maximal $k$-edge-connected subgraph $H$ is always an \emph{induced subgraph} of $G$. That is, $H = G[S]$ for some $S \subseteq V$. Otherwise, there is an edge $e \in G[S] \setminus E(H)$ and $H \cup {e}$ is a strict supergraph of $H$ that is $k$-edge-connected, contradicting the maximality of $H$. 

So if two maximal $k$-edge-connected graphs $G[A],G[B]$ have intersection i.e., $A\cap B \neq \emptyset$. Since both $G[A]$ and $G[B]$ are $k$-edge-connected hence $G[A \cup B]$ is also $k$-edge-connected as $A\cap B \neq \emptyset$. This contradicts that $G[A]$ is maximal $k$-edge-connected. Hence the sets are disjoint. Each vertex by itself is $k$-edge-connected hence, should belong to a maximal $k$-edge-connected subgraph. Thus they form a partition.

Now we prove that it is unique. Assume for contradiction that there exists at least two such partitions. Then there exists at least one vertex $v$ such that it belongs to different sets in the two maximal connected partitions. Again let $A,B$ be the two sets from two different partitions containing $v$ and both $G[A],G[B]$ are $k$-edge-connected. Using same argument as above we prove that the partition is unique.
\end{proof}

\section{Counter example for edge strength estimation} \label{counterexample}
Benczur and Karger
\cite{benczur2002randomized} gave an algorithm that provides estimates $\Tilde{k}_e$ for the strength of edges $k_e$ such that $\Tilde{k}_e \leq k_e$ for all $e \in E$ and $\sum_{e\in E}\frac{1}{\Tilde{k}_e} = O(n)$. These estimates are obtained by running $\textsc{Estimation}(G,1)$ sub-routine described in Lemma 4.9 of \cite{benczur2002randomized}. To get a $c$-approximation for edge strengths, we need that $\Tilde{k}_e \geq \frac{k_e}{c}$. However, the second condition given above is different and does not necessarily provide a good approximation guarantee. Below we give a counter example where these estimates are off from the actual values by a large factor.

Consider the following counter-example: a lollipop graph $G$ that has a path of length $n-n^{1/3}$ connected to a clique of size $n^{1/3}$. Note that $G$ contains less than $2(n-1)$ edges. Thus, the procedure $\textsc{Estimation}(G,1)$ will assign the edge-strength estimate of all edges to $1$. However, all edges from the clique have strength at least $n^{1/3}$. So the estimate of edge strength for the edges inside the clique is off by a $n^{1/3}$ factor.

\bibliographystyle{alpha}
\bibliography{ref}
\end{document}